\newcommand{\NoteProof}[1]{
	\techrep{\ifthenelse{\boolean{appendix}}{
	\marginnote{Originally at p. \pageref{#1}}
	}{
	\marginnote{{Proof at p.\,{\pageref{app:#1}}}}
	}}
}
\newcommand{\applabel}[1]{$\phantomsection\label{app:#1}$}
\newcommand{\grameq}{::=}
\newcommand{\defeq}{:=}
\newcommand{\la}[1]{\lambda #1.}
\newcommand{\pair}[2]{(#1,#2)}
\newcommand{\red}[1]{{\color{red} {#1}}}
\newcommand{\cyan}[1]{{\color{cyan} {#1}}}
\newcommand{\symfont}[1]{\mathtt{#1}}
\newcommand{\msym}{\symfont{m}}
\newcommand{\esym}{\symfont{e}}
\newcommand{\wsym}{\symfont{w}}
\newcommand{\tm}{t}
\newcommand{\tmtwo}{s}
\newcommand{\tmthree}{u}
\newcommand{\var}{x}
\newcommand{\vartwo}{y}
\newcommand{\varthree}{z}
\newcommand{\val}{v}
\newcommand{\valtwo}{\val'}
\newcommand{\valthree}{\val''}
\newcommand{\bang}{\mathsf{!}}
\newcommand{\rootRew}[1]{\mapsto_{#1}}
\newcommand{\Rew}[1]{\rightarrow_{#1}}
\newcommand{\tens}{\otimes}
\newcommand{\lolli}{\multimap}
\newcommand{\substr}{\varogreaterthan}
\newcommand{\whyn}{{\mathsf{?}}}
\newcommand{\cuta}[2]{\red[#1{\shortrightarrow}#2\red]}
\newcommand{\cutsub}[2]{\{#1{\shortrightarrow}#2\}}
\newcommand{\para}[3]{\cyan[#1{\parr}#2,#3\cyan]}
\newcommand{\suba}[3]{\cyan[#1{\substr}#2,#3\cyan]}
\newcommand{\dera}[2]{\cyan[#1{\whyn}#2\cyan]}
\newcommand{\ax}{\symfont{ax}}
\newcommand{\axmsym}{\ax\msym}
\newcommand{\axesym}{\ax\esym}
\newcommand{\axe}{\axesym}
\newcommand{\axmone}{\axmsym_{1}}
\newcommand{\axmtwo}{\axmsym_{2}}
\newcommand{\rtoaxmone}{\rootRew{\axmone}}
\newcommand{\rtoaxmtwo}{\rootRew{\axmtwo}}
\newcommand{\rtotens}{\rootRew{\tens}}
\newcommand{\rtololli}{\rootRew{\lolli}}
\newcommand{\toaxmone}{\Rew{\axmone}}
\newcommand{\toaxmtwo}{\Rew{\axmtwo}}
\newcommand{\tololli}{\Rew{\lolli}}
\newcommand{\axeone}{\axe_{1}}
\newcommand{\axetwo}{\axe_{2}}
\newcommand{\rtoaxeone}{\rootRew\axeone}
\newcommand{\toaxeone}{\Rew\axeone}
\newcommand{\rtoaxetwo}{\rootRew\axetwo}
\newcommand{\toaxetwo}{\Rew\axetwo}
\newcommand{\rtobang}{\rootRew{\bang}}
\newcommand{\tobang}{\Rew{\bang}}
\newcommand{\rtow}{\rootRew{\wsym}}
\newcommand{\tow}{\Rew{\wsym}}
\newcommand{\toms}{\Rew{\mssym}}
\newcommand{\escsym}{\symfont{ESC}}
\newcommand{\tom}{\Rew{\msym}}
\newcommand{\toe}{\Rew{\esym}}
\newcommand{\toesc}{\Rew{\escsym}}
\renewcommand{\toms}{\toesc}
\newcommand{\toscnw}{\Rew{\escsym_{\neg\wsym}}}
\newcommand{\tomsnw}{\toscnw}
\newsavebox{\@brx}
\newcommand{\llangle}[1][]{\savebox{\@brx}{\(\m@th{#1\langle}\)}%
  \mathopen{\copy\@brx\kern-0.7\wd\@brx\usebox{\@brx}}}
\newcommand{\rrangle}[1][]{\savebox{\@brx}{\(\m@th{#1\rangle}\)}%
  \mathclose{\copy\@brx\kern-0.7\wd\@brx\usebox{\@brx}}}
\newcommand{\ctxholep}[1]{\langle #1\rangle}
\newcommand{\ctxhole}{\ctxholep{\cdot}}
\newcommand{\ctxholefp}[1]{\llangle #1\rrangle}
\newcommand{\ctx}{C}
\newcommand{\ctxtwo}{D}
\newcommand{\ctxthree}{E}
\newcommand{\ctxp}[1]{\ctx\ctxholep{#1}}
\newcommand{\ctxtwop}[1]{\ctxtwo\ctxholep{#1}}
\newcommand{\ctxfp}[1]{\ctx\ctxholefp{#1}}
\newcommand{\lctx}{L}
\newcommand{\lctxtwo}{\lctx'}
\newcommand{\lctxp}[1]{\lctx\ctxholep{#1}}
\newcommand{\lctxtwop}[1]{\lctxtwo\ctxholep{#1}}
\newcommand{\vctx}{V}
\newcommand{\evar}{e}
\newcommand{\evartwo}{f}
\newcommand{\evarthree}{g}
\newcommand{\mvar}{m}
\newcommand{\mvartwo}{n}
\newcommand{\mvarthree}{o}
\newcommand{\exval}{\val_{\esym}}
\newcommand{\exvaltwo}{\exval'}
\newcommand{\mval}{\val_{\msym}}
\newcommand{\fv}[1]{\symfont{fv}(#1)}
\newcommand{\mfv}[1]{\symfont{mfv}(#1)}
\newcommand{\efv}[1]{\symfont{efv}(#1)}
\newcommand{\fn}[1]{\symfont{names}(#1)}
\newcommand{\ov}[1]{\symfont{ov}(#1)}
\newcommand{\emptylist}{\epsilon}
\newcommand{\seasym}{sea}
\newcommand{\cons}{{:}}
\newcommand{\tomachhole}[1]{\leadsto_{#1}}
\newcommand{\tomach}{\tomachhole{}}
\newcommand{\tomachine}{\tomach}
\newcommand{\tomachsea}{\tomachhole{\seasym}}
\newcommand{\tomachseaone}{\tomachhole{\seasym_1}}
\newcommand{\tomachseatwo}{\tomachhole{\seasym_2}}
\newcommand{\tomachseathree}{\tomachhole{\seasym_3}}
\newcommand{\tomachseafour}{\tomachhole{\seasym_4}}
\newcommand{\tomachseafive}{\tomachhole{\seasym_5}}
\newcommand{\tomachseasix}{\tomachhole{\seasym_6}}
\newcommand{\tomachseaseven}{\tomachhole{\seasym_7}}
\newcommand{\tomachseaeight}{\tomachhole{\seasym_8}}
\renewcommand{\l}{\lambda}
\newcommand{\dom}{\symfont{dom}}
\newcommand{\domp}[1]{\symfont{dom}(#1)}
\newcommand{\names}[1]{\symfont{names}(#1)}
\newcommand{\nctxholep}[2]{\langle#1\rangle_{#2}}
\newcommand{\nctxhole}[1]{\nctxholep{\cdot}{#1}}
\newcommand{\name}{a}
\newcommand{\nametwo}{b}
\newcommand{\namethree}{c}
\newcommand{\pool}{P}
\newcommand{\pooltwo}{\pool'}
\newcommand{\rename}[1]{#1^\alpha}
\newcommand{\appr}{\mathbb{A}}
\newcommand{\mach}{\symfont{M}}
\newcommand{\state}{Q}
\newcommand{\statetwo}{\state'}
\newcommand{\statethree}{\state''}
\newcommand{\States}{\symfont{States}}
\newcommand{\compilrelsym}{\triangleleft}
\newcommand{\compilrel}[2]{#1\compilrelsym#2}
\newcommand{\decode}[1]{\underline{#1}}
\newcommand{\prsym}{\symfont{pr}}
\newcommand{\tomachpr}{\tomachhole{\prsym}}
\newcommand{\run}{r}
\newcommand{\runtwo}{\run'}
\newcommand{\size}[1]{|#1|}
\newcommand{\sizep}[2]{|#1|_{#2}}
\newcommand{\sizepr}[1]{\sizep{#1}{\symfont{pr}}}
\newcommand{\sizesea}[1]{\sizep{#1}{\symfont{sea}}}
\newcommand{\ie}{{\em i.e.}\xspace}
\newcommand{\ih}{{\emph{i.h.}}\xspace}
\newcommand{\tostrat}{\Rew{\symfont{str}}}
\newcommand{\deriv}{\ensuremath{e}}
\newcommand{\derivtwo}{\ensuremath{e'}}
\newcommand{\derivthree}{\ensuremath{e''}}
\newcommand{\derivfour}{\ensuremath{e'''}}
\newcommand{\env}{E}
\newcommand{\twostate}[2]{(#1 \mid #2)}
\newcommand{\tomachaxmone}{\tomachhole{\axmone}}
\newcommand{\tomachaxmtwo}{\tomachhole{\axmtwo}}
\newcommand{\tomachaxmtwob}{\tomachhole{\axmtwo'}}
\newcommand{\tomachaxeone}{\tomachhole{\axeone}}
\newcommand{\tomachaxetwo}{\tomachhole{\axetwo}}
\newcommand{\tomachlolli}{\tomachhole{\lolli}}
\newcommand{\tomachbang}{\tomachhole{\bang}}
\newcommand{\set}[1]{\{#1\}}
\newcommand{\ectx}{E}
\newcommand{\ectxp}[1]{\ectx\ctxholep{#1}}
\newcommand{\ectxtwo}{\ectx'}
\newcommand{\ectxtwop}[1]{\ectxtwo\ctxholep{#1}}
\newcommand{\ectxthree}{\ectx''}
\newcommand{\reflemma}[1]{Lemma~\ref{l:#1}}
\newcommand{\reflemmap}[2]{Lemma~\ref{l:#1}.\ref{p:#1-#2}}
\newcommand{\refthm}[1]{Theorem~\ref{thm:#1}}
\newcommand{\refprop}[1]{Proposition~\ref{prop:#1}}
\newcommand{\refsect}[1]{Sect.~\ref{sect:#1}}
\newcommand{\refapp}[1]{Appendix~\ref{app:#1}}
\newcommand{\reffig}[1]{Fig.~\ref{fig:#1}}
\newcommand{\refdef}[1]{Def.~\ref{def:#1}}
\newcommand{\namctxholea}{\langle \cdot\rangle_{\name}}
\newcommand{\namctxholep}[2]{\langle #1\rangle_{#2}}
\newcommand{\namctx}{\mathbbm{C}}
\newcommand{\namctxtwo}{\namctx'}
\newcommand{\namctxthree}{\namctx''}
\newcommand{\namctxp}[2]{\namctx\namctxholep{#1}{#2}}
\newcommand{\namctxtwop}[2]{\namctxtwo\namctxholep{#1}{#2}}
\newcommand{\namvctx}{\mathbbm{V}}
\newcommand{\mute}{\circ}
\newcommand{\Gsym}{\symfont{G}}
\newcommand{\togp}[1]{\Rew{\Gsym #1}}
\newcommand{\tog}{\togp{}}
\newcommand{\dfv}[1]{\symfont{dv}(#1)}
\newcommand{\vgctx}{V_{G}}
\newcommand{\gctx}{G}
\newcommand{\gctxp}[1]{\gctx\ctxholep{#1}}
\newcommand{\lRew}[1]{\; \mbox{}_{#1}{\leftarrow}}
\newcommand{\lRewn}[1]{\; \mbox{}^{*}_{#1}{\leftarrow}\ }
\newcommand{\multiForm}{\Gamma}
\newcommand{\multiFormtwo}{\Delta}
\newcommand{\multiform}{\multiForm}
\newcommand{\hastype}{\,{:}\,}
\newcommand{\form}{A}
\newcommand{\formtwo}{B}
\newcommand{\formthree}{C}
\newcommand{\twostatetrans}[5]{\twostate{#1}{#2}  #3 \twostate{#4}{#5}}
\newcommand{\tosesame}{\tomachhole{\symfont{SESAME}}}
\newcommand{\nfsea}[1]{\symfont{nf_{sea}}(#1)}
\newcommand{\bigo}{{\mathcal{O}}}
\newcommand{\aform}{X_{\msym}}
\newcommand{\rightRuleSym}{r}
\newcommand{\leftRuleSym}{l}
\newcommand{\tensRightRule}{\tens_\rightRuleSym}
\newcommand{\tensLeftRule}{\tens_\leftRuleSym}
\newcommand{\lolliRightRule}{\lolli_\rightRuleSym}
\newcommand{\lolliLeftRule}{\lolli_\leftRuleSym}
\newcommand{\bangRightRule}{\bang_\rightRuleSym}
\newcommand{\bangLeftRule}{\bang_\leftRuleSym}
\newcommand{\contr}{{\mathsf{c}}}
\newcommand{\contrRule}{\contr}
\newcommand{\weak}{{\mathsf{w}}}
\newcommand{\weakRule}{\weak}
\newcommand{\cut}{\symfont{cut}}
\newcommand{\gcdec}[1]{\symfont{GC}(#1)}
\newcommand{\tognotw}{\Rew{\Gsym_{\neg\wsym}}}
\newcommand{\Rule}{\symfont{r}}
\newcommand{\techrep}[1]{\ifthenelse{\boolean{withproofs}}{#1}{}}
\newcommand{\camerar}[1]{\ifthenelse{\boolean{withproofs}}{}{#1}}
\tikzset{
node distance=1.3cm, auto,
every node/.style={font=\scriptsize },
ocenter/.style={baseline={([yshift=-.5ex, xshift=-.5ex]current bounding box)}},  
labelBeginAbove/.style={postaction={decorate,decoration={markings,mark=at position 0 with {\node[inner sep= 0.6pt, above=1pt]{\tiny #1};}} } },
labelBeginBelow/.style={postaction={decorate,decoration={markings,mark=at position 0 with {\node[inner sep= 0.6pt, below=1pt]{\tiny #1};}}}},
labelEndAbove/.style={postaction={decorate,decoration={markings,mark=at position 1 with {\node[inner sep= 0.6pt, above=2pt]{\tiny #1};}}}},
labelEndBelow/.style={postaction={decorate,decoration={markings,mark=at position 1 with {\node[inner sep= 0.6pt, below=2pt]{\tiny #1};}}}},
labelEndRight/.style={postaction={decorate,decoration={markings,mark=at position 1 with {\node[inner sep= 0.6pt, right=2pt]{\tiny #1};}}}},
labelEndLeft/.style={postaction={decorate,decoration={markings,mark=at position 1 with {\node[inner sep= 0.6pt, left=2pt]{\tiny #1};}}}}
}
\newcommand{\med}[2]{
($(#1)!.5!(#2)$)
}
\title{IMELL Cut Elimination with Linear Overhead}
\author{Beniamino Accattoli}{Inria \& LIX, \'Ecole Poytechnique, UMR 7161, France}{beniamino.accattoli@inria.fr}{https://orcid.org/0000-0003-4944-9944}{}
\author{Claudio {Sacerdoti Coen}}{Alma Mater Studiorum - Università di Bologna, Italy}{}{}{}
\authorrunning{B. Accattoli and C. Sacerdoti Coen} 
\begin{document}
%
\maketitle

\begin{abstract}
Recently, Accattoli introduced the Exponential Substitution Calculus (ESC) given by untyped proof terms for Intuitionistic Multiplicative Exponential Linear Logic (IMELL), endowed with rewriting rules at-a-distance for cut elimination. He also introduced a new cut elimination strategy, dubbed \emph{the good strategy}, and showed that its number of steps is a time cost model with polynomial overhead for the ESC/IMELL, and the first such one.

Here, we refine Accattoli's result by introducing an abstract machine for ESC and proving that it implements the good strategy and computes cut-free terms/proofs within a \emph{linear} overhead.
\keywords{Lambda calculus, linear logic, abstract machines.}
\end{abstract}

\section{Introduction}
\label{sect:intro}
One of the advances of the last decade in the theory of $\l$-calculus is the study of reasonable cost models carried out by Accattoli, Dal Lago, and co-authors \cite{DBLP:conf/rta/AccattoliL12,DBLP:journals/corr/AccattoliL16,DBLP:conf/lics/AccattoliC15,DBLP:conf/lics/AccattoliLV21,DBLP:conf/lics/AccattoliLV22}, completing the research program started by Dal Lago and Martini almost 20 years ago \cite{DBLP:conf/cie/LagoM06,DBLP:journals/tcs/LagoM08,DBLP:conf/icalp/LagoM09,DBLP:conf/fopara/LagoM09}.

\subparagraph{The General Problem.} Focusing on time, the underlying problem is whether the number of $\beta$-steps (of a fixed evaluation strategy) can be taken as a measure of time that is polynomially equivalent to that of random access machines (RAMs); or, equivalently, to that of Turing machines. This is a priori unclear because there are families of terms suffering of \emph{size explosion}: the size of terms grows exponentially with the number of $\beta$-steps, independently of the adopted evaluation strategy. The difficulty is finding a strategy, together with a simulation on RAMs of the strategy, working within a overhead polynomial in:
\begin{itemize}
\item \emph{Length}: the number of steps of the strategy, and 
\item \emph{Input}: the size of the initial term.
\end{itemize}
When a strategy admits such a simulation we say that it is a \emph{a polynomial cost model}. Note that the number of steps of the strategy can be whatever (it need not be polynomial), it is the \emph{overhead} of the simulation that has to be polynomial in the above parameters.

The overhead is required to be polynomial because reasonable frameworks are not necessarily linearly related; this is the reason why the complexity class $\P$ is so important. For instance, Turing machines simulate RAMs only within a quadratic overhead. When possible, however, it is interesting to know what is the exact degree of the overhead with respect to RAMs, which are the commonly accepted abstraction of modern computers. When the overhead is linear---which is the optimal situation---then the polynomial cost model is trustable in practice: it does not hide an acceptable but possibly costly polynomial overhead.

Today, it is well-known that strategies such as weak head, head, and leftmost $\beta$-reduction provide polynomial time cost models, and that the same is true for their call-by-value and call-by-need variants \cite{DBLP:conf/fpca/BlellochG95,DBLP:conf/birthday/SandsGM02,DBLP:journals/tcs/LagoM08,DBLP:conf/rta/AccattoliL12,DBLP:journals/corr/AccattoliL16,DBLP:conf/lics/AccattoliC15,DBLP:conf/lics/AccattoliLV21,DBLP:conf/ppdp/BiernackaCD21,DBLP:conf/csl/AccattoliL22,DBLP:journals/pacmpl/BiernackaCD22}. 

\subparagraph{LSC and the Sub-Term Property.} The key tool for the polynomially bounded simulations underlying the recent advances is Accattoli and Kesner's \emph{linear substitution calculus} (shortened to LSC) \cite{DBLP:conf/rta/Accattoli12,DBLP:conf/popl/AccattoliBKL14}, which is an intermediary setting between the $\l$-calculus and RAMs. 

The LSC is a simple $\l$-calculus with explicit substitutions refining a previous calculus by Milner \cite{DBLP:journals/entcs/Milner07,DBLP:journals/corr/abs-2312-13270} and where evaluation is \emph{micro-step}, that is, substitution acts on a variable occurrence at a time, rather than on all at once (which is \emph{small-step}). A distinguished feature of the LSC is its rewriting rules \emph{at a distance}, that is, rules where explicit substitutions do not percolate through the term structure but rather act through some contexts. The relevance of the LSC for reasonable time is linked to the \emph{sub-term property} of its standard strategies \cite{DBLP:journals/corr/AccattoliL16}: at any point, only sub-terms of the initial term are duplicated, thus allowing one to bound the cost of each duplication with the size of the input. 

No strategy of the $\l$-calculus has the sub-term property, which, as discussed by Accattoli \cite{DBLP:journals/lmcs/Accattoli23}, is the essence of the size explosion problem. The LSC crystallizes exactly what is needed for refining the $\l$-calculus to retrieve the sub-term property and circumvent size explosion.

\subparagraph{LSC and Abstract Machines.} It is also well-known that strategies of the LSC can usually be implemented with environment-based abstract machines whose overhead is \emph{bi-linear}, that is, linear in both the input and length parameters, as shown by Accattoli et al. \cite{DBLP:conf/icfp/AccattoliBM14}. Abstract machines handle three tasks, namely the \emph{decomposition of the substitution process}, the \emph{search for redexes}, and \emph{$\alpha$-renaming}. The difference between the LSC and abstract machines is that the LSC only handles the substitution process, the critical one for avoiding size explosion. Search and $\alpha$-renaming, indeed, tend to take only a linear overhead. That said, handling search and $\alpha$-renaming provides an in-depth understanding of evaluation and it is mandatory for obtaining precise bounds from the cost model.

\subparagraph{Lifting to Linear Logic.} In 2022, Accattoli started a generalization of the mentioned recent results for the $\l$-calculus to the wider framework of linear logic \cite{DBLP:journals/lmcs/Accattoli23}. Linear logic can be seen as a micro-step system with a tight control over duplications, somewhat similar to the LSC. Accattoli's starting point is the observation that, despite such similarity, no reasonable cut elimination strategy for linear logic was available. In particular, no cut elimination strategy with the sub-term property was known. 

In \cite{DBLP:journals/lmcs/Accattoli23}, he introduces a generalization of the LSC to intuitionistic multiplicative exponential linear logic (IMELL). His \emph{exponential substitution calculus} (ESC) is an untyped calculus of proof terms for the sequent calculus proofs of IMELL, endowed with cut elimination at a distance and having IMELL as typing system. He then designs a new cut elimination strategy, dubbed \emph{good strategy}, which has the sub-term property and which is a polynomial cost model for ESC, the first such cost model for an expressive fragment of linear logic.

\subparagraph{This Paper.} Here, we design an environment-based abstract machine implementing the good strategy of ESC, show that it has bi-linear overhead, and provide an OCaml implementation. The result is not surprising, and yet we believe that it is interesting, for various reasons. Firstly, it is the first linear overhead result for an expressive fragment of linear logic. It allows one to extract precise time bounds from the length of cut elimination, and can thus be useful, for instance, in linear-logic-based implicit computational complexity.

Secondly, our study differs in many details from other environment-based machines. There are in fact at least two literatures related to our work (surveyed in the next paragraph). A recent one about cost analyses of machines for $\l$-calculi and an older one about machines for linear logic, here referred to as \emph{linear machines}. The main differences are the following:
\begin{itemize}
\item \emph{Natural deduction $vs$ sequent calculus}: machines usually implement calculi based on natural deduction, while ESC is based on sequent calculus, and some aspects are different. For instance, our machine has no argument stack (as it is expected, after Herbelin \cite{DBLP:conf/csl/Herbelin94}).

\item \emph{Micro-step source and no structural equivalence}: the implemented calculus is usually small-step, while here it is micro-step. In the literature, when the calculus is micro-step it is usually implemented by the machine up to a notion of structural equivalence. Here, there is no need of structural equivalence.

\item \emph{Strong $vs$ weak}: our machine computes cut-free proofs, while other linear machines in the literature do not, as they usually only perform weak (or surface) evaluation (that is, they do not evaluate under abstraction and inside promotions). To our knowledge, ours is the first strong linear machine.

\item \emph{Strong with no backtracking}: machines for strong evaluation usually have transitions for sequentially backtracking once the evaluation of a sub-term is over. Here, we adopt a recent new technique for strong machines by Accattoli and Barenbaum \cite{DBLP:conf/aplas/AccattoliB23}, that avoids sequential backtracking. The idea is to assign distinct jobs to each sub-term and simply jumping to the next job once the current one is over. This approach structures the machine in a very different way and provides drastically simpler strong machines.

\item \emph{Complexity analysis}: the complexity of linear machines is never studied in the literature.

\item \emph{OCaml implementation}: linear machines are usually studied theoretically and never implemented. We provide an OCaml implementation verifying our complexity analysis. In particular, it is the first implementation of the technique for strong machines of \cite{DBLP:conf/aplas/AccattoliB23}.
\end{itemize}
Summing up, our machine is simple, given the complex setting that it implements, thanks to the absence of stacks and backtracking. In fact, it uses just one (non-trivial) data structure. 

Methodologically, the main difference between our work and the literature on linear machines is that those machines were developed to provide insights about $\l$-calculi and functional programming (such as no garbage collection, in-place updates, and single pointer property) while here we proceed the other way around, using the recent theory of the $\l$-calculus to provide insights about linear logic.

\subparagraph{Architecture of the Result.} The good strategy that we implement is a non-deterministic but \emph{diamond} strategy, where diamond means that the choices do not affect termination nor evaluation lengths (otherwise the number of steps would not be a well-defined cost model). Our implementation is in two phases. The first and main one, performed by a deterministic machine dubbed \emph{SESAME} (Strong Exponential Substitution Abstract Machine without Erasure), never erases. The second one is a simple garbage collection pass over the output of SESAME, that produces a cut-free term. For the correctness of SESAME, we thus need to relate a non-deterministic strategy and a deterministic machine, which is slightly unusual. For that, we follow the abstract recipe of Accattoli et al. \cite{DBLP:conf/lics/AccattoliCC21} which is here simplified because our case is not up to structural equivalence and there is no implosiveness.

\subparagraph{Related Work.} Accattoli and co-authors studied at length the overhead of machines for $\l$-calculi both micro-step \cite{DBLP:conf/icfp/AccattoliBM14,DBLP:journals/corr/AccattoliBM15,DBLP:conf/lics/AccattoliC15} and small-step \cite{DBLP:conf/wollic/Accattoli16,DBLP:conf/ppdp/AccattoliB17,DBLP:journals/scp/AccattoliG19,DBLP:conf/ppdp/AccattoliCGC19,DBLP:conf/lics/AccattoliCC21,DBLP:conf/lics/AccattoliLV22,DBLP:conf/aplas/AccattoliB23}. Biernacka et al. similarly studied machines for strong call-by-value \cite{DBLP:conf/ppdp/BiernackaCD21} and strong call-by-need \cite{DBLP:journals/pacmpl/BiernackaCD22}.

Lafont considered the first linear logic (LL) machine, based on categorical combinators for intuitionistic LL (ILL) \cite{DBLP:journals/tcs/Lafont88}. Abramsky considered an environment-based machine for ILL and a chemical-style machine for LL, both doing only weak evaluation \cite{DBLP:journals/tcs/Abramsky93}. The latter machine was then studied by Mikami and Akama \cite{DBLP:conf/tlca/MikamiA99} and Sato and co-authors \cite{DBLP:journals/entcs/SatoSY02,DBLP:journals/entcs/MackieS08}. Turner and Wadler study the use of IMELL for memory management and give two machines, one with the single pointer property and one without it but with a memoization mechanism \cite{DBLP:journals/tcs/TurnerW99}. Alberti and Ritter also deal with the single pointer property \cite{alberti1998efficient}. Bonelli gives a machine based on an unusual sequent calculus for ILL \cite{DBLP:journals/entcs/Bonelli06}. 

The interaction abstract machine (IAM) is an unusual token-based machine for LL proofs first studied by Danos and Regnier \cite{DBLP:journals/entcs/DanosR96} and Mackie \cite{DBLP:conf/popl/Mackie95}. Recent work by Accattoli et al. showed that the IAM overhead is unreasonable for both time and space \cite{DBLP:journals/pacmpl/AccattoliLV21,DBLP:conf/lics/AccattoliLV21,DBLP:phd/hal/Vanoni22}. Mackie also studied interaction nets-based implementations of LL \cite{DBLP:journals/tcs/Mackie00,DBLP:journals/iandc/MackieP02}.

\section{The Exponential Substitution Calculus}
\label{sect:esc}
In this section, we briefly recall Accattoli's exponential substitution calculus (ESC) \cite{DBLP:journals/lmcs/Accattoli23} which is an untyped calculus having IMELL as system of simple types. 

For more explanations, we refer the reader to \cite{DBLP:journals/lmcs/Accattoli23}. The main differences with respect to \cite{DBLP:journals/lmcs/Accattoli23} is that, for lack of space, we omit tensor and we only deal with micro-step rewriting rules (omitting the small-step exponential rule, which is derivable). In IMELL, linear implication is 'more important' than tensor, as tensors are not needed to simulate the $\l$-calculus. \techrep{In \refapp{tensors}}\camerar{In Appendix B of the technical report \cite{accattoli2024imell}}, we explain how to straightforwardly extend our study to tensor.

\begin{figure}[t!]
\centering
\begin{tabular}{c}
$\begin{array}{r@{\hspace{.4cm}} rll@{\hspace{.4cm}}|@{\hspace{.4cm}} r@{\hspace{.4cm}} rlllll@{\hspace{.4cm}} r@{\hspace{.4cm}} rlllll}
\textsc{Mult. values} &\mval & \grameq & \mvar \mid \la\var\tm
&
\textsc{Values} &\val,\valtwo & \grameq & \mval \mid \exval\\
\textsc{Exp. values} &\exval & \grameq & \evar \mid  \bang\tm
&
\textsc{Terms} &\tm&\grameq& \val  \mid \cuta\val\var\tm \mid  \suba\mvar\val\var\tm \mid \dera\evar\var\tm
\end{array}$

\\[5pt]
\hline
\\[-10pt]

$     \arraycolsep=2pt\begin{array}{r@{\hspace{.3cm}} rll@{\hspace{.3cm}}r@{\hspace{.3cm}} rllll}
\textsc{Cut ctxs} &\env & \grameq & \ctxhole 
\mid \cuta\val\var\env 
&
\textsc{Left ctxs} &\lctx & \grameq & \ctxhole 
\mid \cuta\val\var\lctx \mid \suba\mvar\val\var\lctx  \mid \dera\evar\var\lctx
\\[4pt]
\textsc{Value ctxs} & \vctx & \grameq & \ctxhole \mid \la\var\ctx \mid \bang\ctx 
&
\textsc{Ctxs} & \ctx & \grameq & \vctx \mid \cuta\vctx\var \tm  \mid \suba\mvar\vctx\var\tm \mid \lctxp\ctx
\end{array}$
\\[8pt]
\hline
\\[-8pt]
$\begin{array}{rlll}
\multicolumn{3}{c}{\textsc{Micro-step root multiplicative rules}}

\\
\cuta{\mval}\mvar\ctxfp\mvar & \rtoaxmone & \ctxfp\mval
\\
\cuta{\mvartwo}\mvar\ctxp{\suba\mvar\val\var \tm} & \rtoaxmtwo & \ctxp{\suba\mvartwo\val\var \tm}
\\
\cuta{\la\vartwo\tmtwo}\mvar \ctxp{\suba\mvar\val\var \tm}
& \rtololli & 
\ctxp{ \cuta\val\vartwo \lctxp{\cuta\valtwo\var \tm }} 

& \mbox{with }\tmtwo = \lctxp{\valtwo}
\\[4pt]
\multicolumn{3}{c}{\textsc{Micro-step root exponential rules}}
\\
\cuta{\exval} \evar \ctxfp{\evar}
& \rtoaxeone & 
\cuta{\exval}\evar\ctxfp{\exval}
\\
\cuta{\evartwo}\evar\ctxp{\dera\evar\var \tm}
& \rtoaxetwo &
\cuta{\evartwo}\evar\ctxp{\dera\evartwo\var \tm}
\\
\cuta{\bang\tmtwo}\evar\ctxp{\dera\evar\var \tm}
& \rtobang &
\cuta{\bang\tmtwo}\evar\ctxp{\lctxp{\cuta\val\var \tm}} 
& \mbox{with }\tmtwo = \lctxp{\val}
\\
\cuta{\exval}\evar \tm
& \rtow & 
\tm \ \ \ \  \mbox{if }\evar\notin\fv\tm
\end{array}$
\\[38pt]
\hline
\\[-10pt]
	\begin{tabular}{c@{\hspace{.4cm}} cc}
\textsc{Contextual closure}
&\AxiomC{$\tm \rootRew{a} \tmtwo$}
\UnaryInfC{$\ctxp{\tm} \Rew{a} \ctxp{\tmtwo}$}
\DisplayProof
		& 
		for $a \in \set{
\axmone,\axmtwo,\lolli, \axeone,\axetwo,\bang,\wsym}$
	\end{tabular}
\\[8pt]
\hline
\\[-8pt]
$\begin{array}{rlll}
\multicolumn{4}{c}{\textsc{Notations}}
\\
\textsc{Multiplicative} & \tom  &\defeq &  \toaxmone\cup \toaxmtwo \cup \tololli
\\
\textsc{Exponential} & \toe  &\defeq & \toaxeone \cup \toaxetwo \cup \tobang \cup \tow
\\
\textsc{ESC} &\toesc  &\defeq & \tom \cup \toe
\\
\textsc{Non-erasing ESC} &\tomsnw  &\defeq & \tom \cup \toaxeone \cup \toaxetwo \cup \tobang
	\end{array}$
\end{tabular}
\caption{The Exponential Substitution Calculus (ESC).}
\label{fig:esc}
\end{figure} 
\subparagraph{Values and Terms.} The grammars of ESC are in the upper part of \reffig{esc}. Variables are of two disjoint kinds, multiplicative and exponential, and we refer to variables of unspecified kind using $\var,\vartwo,\varthree$. Values are the proof terms associated to axioms or to the right rules and, beyond variables, are \emph{abstractions} $\la\var\tm$ and \emph{promotions} $\bang\tm$. The proof terms decorating left rules are \emph{subtractions} $\suba\mvar\val\var\tm$, \emph{derelictions} $\dera\evar\var\tm$,  and \emph{cuts} $\cuta\val\var\tm$, which is in red because of its special role.  Note that cuts and subtractions are \emph{split}, that is, have values (rather than terms) as left sub-terms. 
The constructors $\la\var\tm$, $\suba\mvar\val\var\tm$, $\dera\evar\var\tm$, and $\cuta\val\var\tm$ bind $\var$ in $\tm$. We identify terms up to $\alpha$-renaming. Free variables (resp.  multiplicative/exponential variables) are defined as expected, and noted $\fv\tm$ (resp. $\mfv\tm$ and $\efv\tm$). We use $\size\tm$ for the number of constructors in $\tm$, and $\sizep\tm\var$ for the number of free occurrences of $\var$ in $\tm$.
There is a notion of \emph{proper term} ensuring the linearity of multiplicative variables and the exponential boundary of promotions, \techrep{detailed in \refapp{esc}}\camerar{detailed in Appendix A of the tech report \cite{accattoli2024imell}}. The only relevant case is: \emph{$\bang\tm$ is proper if $\tm$ is proper and $\mfv\tm = \emptyset$}. In the following, terms are assumed to be proper.

\subparagraph{Contexts and Plugging.} The broadest notion of context that we consider is \emph{general contexts} $\ctx$, which simply allow the hole $\ctxhole$ to replace any sub-term in a term. Because of split cuts and subtractions (that is, the fact that their left sub-term is a value rather than a term), the definition relies on the auxiliary notion of \emph{value context} $\vctx$. The definition also uses \emph{left contexts} $\lctx$, which are contexts under left constructors (or, for binary left constructors, under the right sub-term) that play a key role in the system---their use in defining $\ctx$ is just to keep the grammar compact. We also need \emph{cut contexts} $\ectx$, which are noted with $\ectx$ because they shall play the role of machine environments in \refsect{BAM}.

A fact used pervasively is that every term $\tm$ writes, or \emph{splits} uniquely as $\tm = \lctxp\val$. For instance $\dera\evar\mvar\cuta\val\mvartwo \la\evarthree\suba\mvar\mvartwo\evartwo\evartwo$ splits as $\lctx = \dera\evar\mvar\cuta\val\mvartwo\ctxhole$ and $\val =\la\evarthree\suba\mvar\mvartwo\evartwo\evartwo$. 

Because of split cuts and subtractions, the definition of plugging $\ctxp\tm$ (or $\ctxp\ctxtwo$) of a term $\tm$ (or a context $\ctxtwo$) in a context $\ctx$ is slightly tricky, as it has to preserve the split shape. We refer the reader to \cite{DBLP:journals/lmcs/Accattoli23} for such details, the definition is mostly as expected (the only two subtle cases are for $\cuta\ctxhole\var\tm$ and $\suba\mvar\ctxhole\var$). Plugging can capture variables and we use $\ctxfp\tm$ when we want to prevent it.

\begin{figure}[t!]
\begin{center}
\begin{tblr}{c}
$\begin{tblr}{rclll}
\textsc{Formulas} & \form,\formtwo,\formthree & \grameq & \aform \mid \form \tens \formtwo \mid \form \lolli \formtwo \mid \bang\form
\end{tblr}$

	\\[5pt]\hline
\begin{tblr}{cc}	
\SetCell[c=2]{c}\textsc{Multiplicative rules}
\\[3pt]
	\AxiomC{$\form \neq \bang \formtwo$}
	\RightLabel{$\ax_{\msym}$}
	\UnaryInfC{$  \mvar\hastype\form \vdash \mvar\hastype \form$}	
	\DisplayProof
&
	\AxiomC{$  \var\hastype\form,\multiForm \vdash \tm\hastype\formtwo$}
	\RightLabel{$ \lolliRightRule $}
	\UnaryInfC{$  \multiForm \vdash \la\var\tm\hastype\form \lolli \formtwo$}
	\DisplayProof
\\[6pt]
\SetCell[c=2]{c}{\AxiomC{$  \multiForm \vdash \lctxp\val \hastype\form$}
	\AxiomC{$  \multiFormtwo, \var\hastype\formtwo \vdash \tm\hastype\formthree$}
		\AxiomC{$\multiForm\#(\multiFormtwo, \var\hastype\formtwo)$, $\mvar$ fresh}
	\RightLabel{$ \lolliLeftRule $}
	\TrinaryInfC{$  \multiForm, \multiFormtwo, \mvar\hastype \form \lolli \formtwo \vdash \lctxp{\suba\mvar\val\var \tm} \hastype\formthree$}
	\DisplayProof}
\end{tblr}
	\\[8pt]\hline
\begin{tblr}{cc}	
\SetCell[c=2]{c}\textsc{Exponential rules}
\\[3pt]
	&
	\AxiomC{}
	\RightLabel{$\ax_{\esym}$}
	\UnaryInfC{$  \evar\hastype\bang\form \vdash \evar\hastype \bang\form$}	
	\DisplayProof
	\\[6pt]
	\AxiomC{$  \multiForm, \var\hastype\form \vdash \tm\hastype\formtwo$}
	\AxiomC{$\evar$ fresh}
	\RightLabel{$ \bangLeftRule$}
	\BinaryInfC{$  \multiForm, \evar\hastype\bang\form \vdash \dera\evar\var\tm \hastype\formtwo$}
	\DisplayProof
	&
		\AxiomC{$  \bang\multiForm \vdash \tm\hastype\form$}
	\RightLabel{$ \bangRightRule$}
	\UnaryInfC{$  \bang\multiForm \vdash \bang\tm\hastype\bang\form$}
	\DisplayProof
	\\[6pt]
	\AxiomC{$  \multiForm \vdash \tm\hastype\form$}
				\AxiomC{$\evar$ fresh}
	\RightLabel{$ \weakRule $}
	\BinaryInfC{$  \multiForm, \evar\hastype\bang\formtwo \vdash \tm\hastype \form$}
	\DisplayProof
&
		\AxiomC{$  \multiForm, \evar\hastype\bang\formtwo, \evartwo\hastype\bang\formtwo  \vdash \tm\hastype\form$}
	\RightLabel{$ \contrRule $}
	\UnaryInfC{$  \multiForm, \evar\hastype\bang\formtwo \vdash \cutsub\evar\evartwo \tm \hastype \form$}
	\DisplayProof
\end{tblr}
	\\[8pt]\hline
\begin{tblr}{c}	
\textsc{Cut}
\\[3pt]
	\AxiomC{$\multiForm   \vdash \lctxp\val \hastype \form$}
	\AxiomC{$\multiFormtwo, \var\hastype \form \vdash \tm\hastype\formtwo$}
	\AxiomC{$\multiForm\#(\multiFormtwo, \var\hastype \form)$}
 	\RightLabel{$\cut$}
 	\TrinaryInfC{$  \multiForm, \multiFormtwo\vdash \lctxp{\cuta\val\var\tm} \hastype\formtwo$} 	
	\DisplayProof	
\end{tblr}
\end{tblr}
\end{center}
\caption{IMELL has a type system for ESC.}
\label{fig:esc-types}
\end{figure}

\subparagraph{Types.} The formulas of IMELL, and the deductive rules of the sequent calculus annotated with ESC terms, are in \reffig{esc-types}. They are taken directly from \cite{DBLP:journals/lmcs/Accattoli23}. The typing system for  ESC is exactly the standard sequent calculus for IMELL. Both formulas and rules are standard, but for the decoration with proof terms and the side conditions about variable names of the form $\multiForm\#\multiFormtwo$, which is a shortcut for $\dom\multiForm\cap\dom\multiFormtwo = \emptyset$. Linear implication $\lolli$ is also referred to as \emph{lolli}. The only atomic formula that we consider, $\aform$, is multiplicative. There is no multiplicative unit because, in presence of the exponentials, $1$ can be simulated by $!(\aform \lolli\aform)$. We distinguish between multiplicative and exponential axioms, in order to decorate them with the corresponding kind of variable.

Note that the weakening and contraction rules do not add constructors to terms. This is crucial in order to keep the calculus manageable. Note also that the decorations of the cut and $\lolliLeftRule$ rules are \emph{split}, as explained in the previous section. Clearly, typed terms are proper.  

In this paper, types are only referred to in \reflemma{clashfree-implies-cutfree} below.

\subparagraph{Multiplicative Cut Elimination Rules.} The rewriting rules are in \reffig{esc}. The ESC has three multiplicative rules, in particular two for axioms, depending on whether they are acted upon ($\toaxmone$) or used to rename another multiplicative (thus linear) variable ($\toaxmtwo$). Rule $\toaxmone$ is expressed generically for multiplicative values $\mval$ (that is, multiplicative variables $\mvar$ and abstractions $\la\var\tm$). In $\toaxmone$, $\toaxmtwo$, and $\tololli$, it is silently assumed that $\ctx$ does not capture $\mvar$ in $\toaxmtwo$ and $\tololli$ (what is noted $\ctxfp\mvar$ in $\toaxmone$, while $\ctxholefp\cdot$ is not used in $\toaxmtwo$, and $\tololli$ because $\ctx$ might capture other variables in $\tm$ and $\val$). Note that, since $\ctx$ cannot capture $\mvar$ in these rules and terms are assumed to be proper, the hole of $\ctx$ cannot be contained in a $!$; this kind of context is called a \emph{multiplicative context} in \cite{DBLP:journals/lmcs/Accattoli23}.

In $\tololli$, the rule has to respect split cuts, which is why, for writing the reduct, the sub-term $\tmtwo$ is split on-the-fly. An example of $\tololli$ step follows:
\begin{center}$\begin{array}{rlll}
\cuta{\la\evar\dera\evar\mvar\mvar}\mvartwo\suba\mvartwo{\bang\evartwo}\mvarthree\mvarthree 
& \tololli &
\cuta{\bang\evartwo}\evar\dera\evar\mvar\cuta\mvar\mvarthree \mvarthree.
\end{array}$\end{center}

\subparagraph{Exponential Rules.} There are also four exponential rules, with again two rules for axioms. Replacement of variables ($\toaxeone$) and erasure ($\tow$) are expressed generically for exponential values $\exval$ (that is, exponential variables $\evartwo$ and promotions $\bang\tm$), interaction with derelictions (in $\toaxetwo$ and $\tobang$) instead requires inspecting $\exval$. 
Rule $\tobang$ removes the dereliction, copies the promotion body, and puts it in a cut---in proof nets jargon, \emph{it opens the box}. To preserve the split shape, the body of the promotion is split and only the value is cut. An example: 
\begin{center}$\begin{array}{rlll}
\cuta{\bang\cuta\evartwo\evarthree \evarthree}\evar \la\mvar\dera\evar{\evar'} \suba\mvar{\evar'}\mvartwo\mvartwo
& \tobang &  
\cuta{\bang\cuta\evartwo\evarthree \evarthree}\evar\la\mvar\cuta\evartwo\evarthree \cuta{\evarthree}{\evar'} \suba\mvar{\evar'}\mvartwo\mvartwo.
\end{array}$\end{center}
Note that $\tobang$ entangles \emph{interaction with a dereliction} and \emph{duplication}, which is not what proof nets usually do (but this is what the LSC does).  It is silently assumed that $\ctx$ does not capture $\evar$ in $\toaxetwo$ and $\tobang$ but it might capture other variables in $\tm$.

\subparagraph{Comments about the Rewriting Rules.} Unsurprisingly, proper terms are stable by reduction \cite{DBLP:journals/lmcs/Accattoli23}. We also use a notion of position in terms.

\begin{definition}[Positions]
\label{def:position}
A \emph{position} in a term $\tm$ is a decomposition $\tm=\ctxp\tmtwo$ such that $\tmtwo$ is a sub-term of $\tm$.
\end{definition}

For later defining the good strategy, we identify a redex with its position, which is a context. Every step $\tm \tomsnw \tmtwo$ reduces a redex of shape $\tm=\ctxp{\cuta\val\var\ctxtwop{\tm_{\var}}}$ where $\tm_{\var}$ is an \emph{occurrence} of $\var$, \ie a sub-term of $\tm$ of shape $\var$, $\suba\var\val\vartwo\tmtwo$, or $\dera\var\vartwo\tmtwo$. The \emph{redex position of $\tomsnw$ steps} is the context $\ctxp{\cuta\exval\var\ctxtwo}$. The \emph{redex position of $\tow$ steps} is the context closing the root step. We write $\ctx:\tm\toesc\tmtwo$ for a redex of position $\ctx$ in $\tm$.

\subparagraph{Clashes.} The presence of many constructors in an untyped setting gives rise to \emph{clashes}, that is, irreducible cuts.
\begin{definition}[Clashing, smooth, clash-free terms]
A \emph{clash} is a term of the form $\cuta\mval\evar\tm$ or $\cuta\exval\mvar\tm$. A term $\tm$ is \emph{clashing} if it has a clash as sub-term. Moreover, $\tm$ is \emph{clash-free} if, co-inductively,
$\tm$ is not clashing and $\tmtwo$ is clash-free for any step $\tm \toms \tmtwo$.
\end{definition}
Clash-freeness is guaranteed by all forms of types (simple, polymorphic, recursive, and multi types, for instance), in particular by the IMELL types in \reffig{esc-types}, as ensured by the next lemma. Here, we adopt the untyped calculus but consider only clash-free terms, as in \cite{DBLP:journals/lmcs/Accattoli23}.
\begin{lemma}[\cite{DBLP:journals/lmcs/Accattoli23}]
\label{l:clashfree-implies-cutfree}
Let $\tm$ be a proper term.
\begin{enumerate}
\item If $\tm$ has no clashes and $\tm \not\toms$ then $\tm$ is cut-free.
\item If $\tm$ is typable (by the IMELL type system in \reffig{esc-types}) then $\tm$ is clash-free.
\end{enumerate}
\end{lemma}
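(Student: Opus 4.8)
The plan is to prove the two items separately.

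\emph{Item 1} is a progress-style statement, which I would prove by contradiction: suppose the proper term $\tm$ has no clash sub-term, is $\toms$-normal, and yet contains a cut. Since $\toms$ is closed under general contexts, it is enough to exhibit one redex anywhere in $\tm$, so I pick an arbitrary cut sub-term $\cuta\val\var\tmtwo$ of $\tm$ and do a case analysis on the kinds of the value $\val$ and of the cut variable $\var$. If $\val=\mval$ and $\var$ is exponential, or $\val=\exval$ and $\var$ is multiplicative, then $\cuta\val\var\tmtwo$ is a clash, contradicting the hypothesis. If $\val=\mval$ and $\var=\mvar$, then properness (linearity of multiplicative variables, together with the fact that no promotion body has free multiplicative variables) gives that $\tmtwo$ has exactly one free occurrence of $\mvar$, lying outside every $!$; writing $\tmtwo=\ctxfp{\tm_{\mvar}}$ with $\tm_{\mvar}$ that occurrence, a case analysis on whether $\tm_{\mvar}$ is $\mvar$ itself or a subtraction headed by $\mvar$, and on whether $\mval$ is a variable or an abstraction (whose body is split on the fly using the unique splitting $\lctxp\valtwo$ of every term), produces a $\toaxmone$, $\toaxmtwo$, or $\tololli$ redex. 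If $\val=\exval$ and $\var=\evar$: when $\evar\notin\fv\tmtwo$ the cut is a $\tow$ redex; otherwise any free occurrence $\tm_{\evar}$ of $\evar$ in $\tmtwo$ is either $\evar$ itself or a dereliction headed by $\evar$ (possibly sitting under some $!$, which is harmless since the exponential rules close under general, not just multiplicative, contexts), and a case analysis on $\tm_{\evar}$ and on $\exval$ produces a $\toaxeone$, $\toaxetwo$, or $\tobang$ redex. In every case an assumption is contradicted, so $\tm$ contains no cut, i.e.\ is cut-free.

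\emph{Item 2} follows by coinduction. Clash-freeness is the greatest fixed point of the monotone operator mapping a set $S$ of terms to the set of terms that are not clashing and all of whose $\toms$-reducts lie in $S$, so it suffices to prove that the set of IMELL-typable terms is a post-fixed point, that is: \emph{(a)} a typable term is not clashing, and \emph{(b)} typability is preserved by $\toms$. For \emph{(b)} I would invoke subject reduction for ESC/IMELL together with preservation of properness, both in \cite{DBLP:journals/lmcs/Accattoli23}; if a self-contained proof is wanted, it is the routine inspection of the seven root rules followed by contextual closure, the only non-immediate cases being $\tololli$ and $\tobang$. For \emph{(a)}, a straightforward induction on the typing derivation shows that every sub-term of a typable term is typable and that a typable cut $\cuta\val\var\tmtwo$ is necessarily typed with $\val$ and $\var$ receiving one and the same formula (via the cut rule, possibly preceded by a cut-headed left context inside the cut or $\lolliLeftRule$ rules); since a multiplicative value or variable always gets a non-$!$ formula (the axioms --- using the side condition $\form\neq\bang\formtwo$ of the multiplicative axiom --- and $\lolli$-right, $\lolli$-left) while an exponential value or variable always gets a $!$-formula (exponential axiom, promotion, weakening, dereliction, contraction), neither $\cuta\mval\evar\tmtwo$ nor $\cuta\exval\mvar\tmtwo$ is typable, so a typable term has no clash sub-term. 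Then \emph{(a)} and \emph{(b)} plus the coinduction principle give that every typable term is clash-free.

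The main obstacle, if \emph{(b)} is not imported from \cite{DBLP:journals/lmcs/Accattoli23}, is subject reduction for the substitutive rules $\tololli$ and $\tobang$: their reducts are assembled by splitting a sub-term as $\lctxp\valtwo$ and threading a fresh cut through the left context $\lctx$, so re-deriving the conclusion needs a typing-preserving decomposition lemma for left contexts, and for $\tobang$ one also uses that the promotion body is purely exponential ($\mfv\tm=\emptyset$). Everything else --- the case analysis of item 1, the polarity invariant on the types of values and variables, and the split-cut bookkeeping in the sub-term-typability claim --- is mechanical.
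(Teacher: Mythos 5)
This lemma is not proved in the paper at all: it is recalled verbatim from Accattoli's LMCS paper \cite{DBLP:journals/lmcs/Accattoli23}, so there is no in-paper proof to compare yours against. On its own merits, your reconstruction is correct and follows the expected route: item 1 is a progress argument (any cut whose value and variable have matching kinds yields a root redex of one of the seven rules, with properness ensuring a free, non-captured occurrence of a multiplicative cut variable outside every $\bang$, and the $\tow$ rule covering the vacuous exponential case), and item 2 is preservation plus a polarity invariant, packaged as a coinduction matching the paper's coinductive definition of clash-freeness.

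Two small points worth tightening, neither of which affects correctness. First, in the multiplicative case of item 1 you only need \emph{at least one} free occurrence of $\mvar$ in $\tmtwo$ lying outside every $\bang$ (which is what the properness clauses for cut and for $\bang$ give you); exact linearity is not needed for exhibiting the redex. Second, the claim that ``every sub-term of a typable term is typable'' should be phrased with care: because the cut and $\lolliLeftRule$ rules decorate conclusions with a left context $\lctxp{\cdot}$, a cut sub-term is not in general typed by a sub-derivation. What you actually need (and essentially say) is that each cut constructor in the term is introduced by exactly one instance of the cut rule, whose premises assign the same formula $\form$ to $\lctxp\val$ and to $\var$; combined with the invariant that right-hand formulas of multiplicative values and context formulas of multiplicative variables are never of shape $\bang\formtwo$ (using the side condition of $\ax_{\msym}$), while exponential values and variables always receive $\bang$-formulas, this rules out both clash shapes. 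With these clarifications your argument is a faithful stand-in for the proof the paper delegates to \cite{DBLP:journals/lmcs/Accattoli23}.
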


\subparagraph{Postponement of Garbage Collection.} The erasing rule $\tow$, that models garbage collection, can be postponed. This point shall be crucial for our study.

\begin{proposition}[Postponement of garbage collection, \cite{DBLP:journals/lmcs/Accattoli23}]
\label{prop:postp-gc} 
If $\tm \toms^{*} \tmtwo$ then $\tm \tomsnw^{*} \tow^{*} \tmtwo$.
\end{proposition}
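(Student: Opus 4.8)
The plan is to prove \Cref{prop:postp-gc} by the standard \emph{postponement} recipe: establish a one-step swap lemma and then lift it to reduction sequences by an induction that counts only the non-erasing steps, so that the duplication of garbage by the $\tobang$-rule does not compromise the termination of the transformation.

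\textbf{Swap lemma.} First I would show that $\tow\tomsnw\subseteq\tomsnw\tow^{*}$: if $\tm\tow\tmtwo\tomsnw\tmthree$ then $\tm\tomsnw u\tow^{*}\tmthree$ for some term $u$ (in fact $u\tow^{\leq 2}\tmthree$). The erasing step $\tm\tow\tmtwo$ merely deletes one (red) cut node whose bound variable does not occur in its body; deleting such a node captures nothing new and only shortens the context separating some cuts from occurrences of their bound variables. Hence, up to $\alpha$-renaming, every $\tomsnw$-redex $R$ of $\tmtwo$ is already a $\tomsnw$-redex of $\tm$ --- in particular $\tow$ \emph{creates} no $\tomsnw$-redex --- and it lies outside the subterm erased by $\tm\tow\tmtwo$. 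Firing $R$ in $\tm$ gives $\tm\tomsnw u$, and in $u$ the cut deleted by $\tm\tow\tmtwo$ occurs once, unless $R$ is a $\tobang$-step whose duplicated promotion body contains that cut, in which case it occurs twice. A $\tomsnw$-step fired outside a subterm $w$ cannot create a free occurrence, inside $w$, of a variable bound above $w$: the only rule adding free variables is $\tobang$, which inserts a promotion body, and a promotion body cannot mention a variable whose scope is confined to $w$. So all copies of the deleted cut are still erasable in $u$; erasing them yields $u\tow^{\leq 2}\tmthree$, landing exactly on $\tmthree$ because erasing that cut and firing $R$ act on disjoint parts of the term and hence commute.

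\textbf{Lifting.} Iterating the swap lemma $a$ times gives $\tow^{a}\tomsnw\subseteq\tomsnw\tow^{*}$ for every $a\geq 0$. Given a reduction $\tm\toms^{*}\tmtwo$, I argue by induction on the number $n$ of $\tomsnw$-steps it contains. If $n=0$ the reduction is $\tm\tow^{*}\tmtwo$ and we are done. If $n>0$, isolating the first $\tomsnw$-step writes the reduction as $\tm\tow^{a}\cdot\tomsnw\cdot\sigma$ with $a\geq 0$; applying $\tow^{a}\tomsnw\subseteq\tomsnw\tow^{*}$ to the prefix yields a step $\tm\tomsnw v$ followed by a reduction $v\toms^{*}\tmtwo$ of shape $\tow^{*}\cdot\sigma$, which contains only $n-1$ non-erasing steps. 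By the induction hypothesis $v\tomsnw^{*}\tow^{*}\tmtwo$, hence $\tm\tomsnw^{*}\tow^{*}\tmtwo$.

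\textbf{Main obstacle.} The real work is the swap lemma, specifically its two delicate points: that $\tow$ creates no $\tomsnw$-redex (a boundary inspection about removing a cut constructor) and the duplication case, where a single $\tow$-step preceding a $\tobang$-step becomes up to two $\tow$-steps after it --- the reason the lifting counts non-erasing steps rather than total length. Both are settled by a routine case analysis on the rules of \reffig{esc} and on how a redex position can overlap the erased cut.
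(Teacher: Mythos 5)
The paper does not actually prove this proposition: it is imported, with a citation, from Accattoli's LMCS paper \cite{DBLP:journals/lmcs/Accattoli23}, so there is no in-paper proof to compare against, only the standard postponement argument that the citation stands for. Your proof—a one-step swap lemma $\tow\tomsnw\subseteq\tomsnw\tow^{*}$ lifted by induction on the number of non-erasing steps (the right measure, since each swap may multiply $\tow$-steps but never $\tomsnw$-steps)—is exactly that standard argument and is sound; the only imprecision is the claim that $\tobang$ is the sole source of duplication and of inserted free variables, whereas $\toaxeone$ also copies the cut value and $\toaxmone$, $\toaxmtwo$, $\toaxetwo$, $\tololli$ also insert or rename material coming from cuts located above the erased one, but this affects neither your $\tow^{\leq 2}$ bound nor the no-capture reasoning, since all such material originates above the garbage cut and so cannot mention its bound variable.
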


\subparagraph*{Basic Evaluation.} As an intermediary step towards our results, we consider also a restricted form of ESC evaluation, forbidding reduction under all constructors but the right sub-terms of cuts, and dub it \emph{basic evaluation}. It is one possible ESC analogous of weak evaluation in the $\l$-calculus (which forbids evaluation under abstraction). In $\l$-calculi with explicit substitutions, weak normal forms of closed terms are \emph{answers}, {i.e.} abstractions possibly surrounded by explicit substitutions. We obtain a similar property for basic evaluation.

\begin{definition}[Basic evaluation, answer]
A step $\ctx:\tm \toms \tmtwo$ (that is, a redex of position $\ctx$ in $\tm$) is \emph{basic} if $\ctx$ is a cut context, which is also noted with $\tm \Rew{\symfont{b}} \tmtwo$, or $\tm \Rew{\symfont{b}a} \tmtwo$ with $a\in\set{\axmone,\axmtwo,\lolli, \axeone,\axetwo,\bang,\wsym}$ if we want to specify the kind of step. Moreover, $\Rew{\symfont{b}\neg\wsym}$ denotes a non-erasing (that is, not $\wsym$) basic step. A term $\tm$ is an \emph{answer}  if $\tm = \ectxp\val$ with $\ectx$ a cut context and $\val$ not a variable. 
\end{definition}

\begin{toappendix}
\begin{lemma}
\label{l:weak-harmony}
\NoteProof{l:basic-harmony}
Let $\tm$ be a closed clash-free term. Either $\tm$ has a $\Rew{\symfont{b}\neg\wsym}$-step or it is an answer.\label{l:basic-harmony}
\end{lemma}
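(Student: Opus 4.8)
The plan is to reason by a careful case analysis on the unique split $\tm = \lctxp\val$ of the closed clash-free term $\tm$, refining it to expose whether a basic redex is available at the head, and recursing into the right sub-term of a cut when it is not. Concretely, I would first establish an auxiliary observation: if $\tm$ is closed, then the outermost left context $\lctx$ in the split $\tm = \lctxp\val$ cannot begin with a subtraction $\suba\mvar\cdot\cdot\cdot$ or a dereliction $\dera\evar\cdot\cdot\cdot$ whose head variable ($\mvar$, resp. $\evar$) is free, because closedness forbids free variables; hence the only left constructor that can sit at the outside of a closed term without being bound elsewhere is a cut. So a closed term is either a value $\val$ with $\val$ not a variable (if $\lctx=\ctxhole$; note a closed value is never a variable, giving an answer immediately), or it has the form $\ectxp{\cuta\val\var\tm'}$ after absorbing a maximal prefix of cuts into a cut context $\ectx$, where $\tm' = \ctxtwop{\tm''}$ and $\tm''$ is the first non-cut construct.

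Next, I would analyze the cut $\cuta\val\var\tm'$ at the top of this decomposition, using clash-freeness to rule out the mismatched cases. Since $\var$ is either multiplicative or exponential and, by clash-freeness, $\val$ must be of the matching kind (otherwise $\cuta\mval\evar\cdot$ or $\cuta\exval\mvar\cdot$ would be a clash), there are the following subcases. If $\var$ does not occur free in $\tm'$, then $\tm \Rew{\symfont{b}\wsym} \tm'$ — but the lemma asks for a $\Rew{\symfont{b}\neg\wsym}$-step, so this case must be excluded, and it is: were $\var \notin \fv{\tm'}$, then $\tm'$ would itself be closed of smaller size, and one should instead re-run the analysis on $\tm'$; so the genuine work is to show that, chasing down the chain of cuts, one eventually reaches a cut whose bound variable does occur in its body and produces a non-erasing step. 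When $\var \in \fv{\tm'}$, the free occurrence of $\var$ in $\tm' = \ctxtwop{\tm_\var}$ is either a bare variable $\var$, or a subtraction $\suba\var\cdot\cdot\cdot$ / dereliction $\dera\var\cdot\cdot\cdot$ with head $\var$. In the bare-variable case, depending on whether $\val$ is an axiom value or a right-rule value, rule $\rtoaxmone$, $\rtoaxeone$ applies (a $\Rew{\symfont{b}\neg\wsym}$ step). In the subtraction case with $\var=\mvar$ multiplicative: $\val$ is a multiplicative value, so it is either a multiplicative variable (rule $\rtoaxmtwo$) or an abstraction $\la\vartwo\tmtwo$ (rule $\rtololli$), and $\tmtwo$ splits uniquely as $\lctxp\valtwo$ so the reduct is well-formed; both are $\Rew{\symfont{b}\neg\wsym}$ steps. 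In the dereliction case with $\var=\evar$ exponential: $\val$ is an exponential value, so either an exponential variable (rule $\rtoaxetwo$) or a promotion $\bang\tmtwo$ (rule $\rtobang$, again using the unique split of $\tmtwo$); both are $\Rew{\symfont{b}\neg\wsym}$.

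The remaining gap is the erasing case: I must argue that the innermost cut in the maximal cut-prefix at the head of $\tm$ (equivalently, proceeding by induction on $\size\tm$, stripping one outer cut at a time) cannot have all of its nested bodies avoid their bound variables forever. This follows because each stripped cut either (i) fires a non-erasing basic step immediately — done — or (ii) has $\var \notin \fv{\tm'}$, in which case $\tm'$ is a strictly smaller closed clash-free term, and the induction hypothesis applies to $\tm'$ directly; if the IH gives $\tm'$ a $\Rew{\symfont{b}\neg\wsym}$-step $\ectx':\tm'\to\ldots$, then $\cuta\val\var\ectx'$ is a cut context and furnishes a $\Rew{\symfont{b}\neg\wsym}$-step of $\tm$; if instead the IH says $\tm'$ is an answer $\ectx''p\valtwo$ with $\valtwo$ not a variable, then $\tm = \cuta\val\var\ectx''p\valtwo$ is still of the form (cut context applied to a non-variable value), hence $\tm$ is itself an answer. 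The base of the induction is $\lctx=\ctxhole$, i.e. $\tm=\val$ a closed value, which is not a variable (no free variables) and so is an answer. The main obstacle I expect is getting the decomposition bookkeeping exactly right — in particular making precise, for the closed case, that "no left constructor other than a cut survives at the outside", and threading the unique-split property through the $\rtololli$ and $\rtobang$ reducts so that the produced terms are genuinely well-formed proper terms; the rule-by-rule redex availability, once clash-freeness has pruned the kind mismatches, is then routine.
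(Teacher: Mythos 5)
There is a genuine gap, and it sits exactly in the dichotomy you use at a top-level cut. For $\tm=\cuta\val\var{\tm'}$ you split into (i) ``$\var\in\fv{\tm'}$, and then a non-erasing basic step fires immediately'' and (ii) ``$\var\notin\fv{\tm'}$, and then the induction hypothesis applies to the closed, smaller $\tm'$''. Case (i) is not true as stated: a step of the calculus does exist for any free occurrence of $\var$ (the root rules act through an arbitrary context), but for the step to be \emph{basic} its position --- the context from the root down to the acted-upon occurrence $\tm_\var$, i.e.\ $\ctxp{\cuta\val\var\ctxtwo}$ in the notation of the paper --- must be a \emph{cut} context, so the occurrence must be reachable through cuts only. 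If every occurrence of $\var$ in $\tm'$ is buried under a non-cut constructor (inside an abstraction or a promotion, or inside the value of an inner cut or subtraction), no basic step is associated with this cut; and you cannot fall back on the induction hypothesis either, because $\tm'$ is not closed ($\var$ occurs free in it). Concretely, $\cuta{\bang{\la\mvar\mvar}}{\evar}{\bang{\dera\evar\mvartwo\mvartwo}}$ is closed and clash-free, has $\evar$ free in the body, admits no basic step at all (the only redex position goes under a $\bang$), and is in fact an answer --- a case your analysis never reaches, since your only route to the ``answer'' verdict under a cut is via $\var\notin\fv{\tm'}$.

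The paper's proof avoids this by not inducting at all: it splits the whole term along its spine of cuts, $\tm=\cuta\valtwo\var\lctxp\val$, and cases on the shape of $\lctx$. If $\lctx$ is a cut context (or empty), then either $\val$ is a variable --- necessarily bound by one of the cuts, by closedness, so a basic $\toaxmone$ or $\toaxeone$ step fires --- or $\val$ is not a variable and $\tm$ is an answer, \emph{regardless} of whether the cut variables occur inside $\val$. If $\lctx$ is not a cut context, its first non-cut constructor is a subtraction or dereliction whose head variable is bound, by closedness, by one of the cuts above it; only cuts separate the binder from that occurrence, so the step is basic, and clash-freeness ensures the kinds match so a non-erasing rule applies. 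Your closing case analysis by kind (using clash-freeness to prune mismatches, and the unique split for the $\tololli$ and $\tobang$ reducts) is fine and matches the paper; what must be repaired is the (i)/(ii) split, either by replacing it with this shape analysis of the body, or by strengthening your inductive statement to terms whose free variables are all bound by an enclosing cut context, since plain closedness is lost when you strip a cut whose variable occurs in the body.
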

\end{toappendix}

Basic evaluation is enough to simulate in ESC the call-by-name/value weak evaluation of closed $\l$-terms, via Girard's translations to linear logic. Moreover, basic non-erasing evaluation is deterministic (and diamond, a property defined in \refsect{strategy}, when erasing steps are considered). We omit the proofs of these facts, because neither basic evaluation nor $\l$-terms are the focus of the paper.

\subparagraph*{Out Cuts and Garbage Collection.} We shall mainly deal with cuts that are not contained in any other cut, the \emph{out cuts}, which induce a notion of \emph{out variable}.
\begin{definition}[Out cuts, out variables]
The \emph{out cuts} of a term $\tm$ are those cuts in $\tm$ that are not contained in any other cut, that is, if $\tm = \ctxp{\cuta\val\var\tmtwo}$ then $\cuta\val\var\tmtwo$ is an out cut of $\tm$ if $\ctx$ cannot be written as $\ctx=\ctxtwop{\cuta\vctx\vartwo\tmthree}$ for some $\ctxtwo$, $\vctx$, $\vartwo$, and $\tmthree$. The set of \emph{out variables} $\ov\tm$ of $\tm$ contains the variables having at least one occurrence out of all cuts of $\tm$.
\end{definition}

Out variables allow us to define a lax notion of cut-free terms where there are some cuts, but they can only be garbage collection cuts, or concern variable occurrences contained in garbage, so that the term becomes cut-free after garbage collection.
\begin{definition}[Cut-freeness up to garbage]
\label{def:cut-free-upto-gc}
A term $\tm$ is cut-free up to garbage if $\var\notin\ov\tmtwo$ for all the out cuts $\cuta\val\var\tmtwo$ of $\tm$.
\end{definition}

\section{The Good Strategy}
\label{sect:strategy}
 \begin{figure*}[t]

 \newcases{nullcases}
    {\ }
    {$##$\hfil} {$##$\hfil}
    {\lbrace} {.}
     \arraycolsep=2pt
  \tabcolsep=2pt
\begin{tabular}{c}
  \begin{tabular}{c@{\hspace{.08cm}}|@{\hspace{.08cm}}cccc}
\multicolumn{2}{c}{\textsc{Dominating free variables of contexts}}&
\\[4pt]
{$\begin{array}{rlll}
\dfv\ctxhole & \defeq & \emptyset
\\[3pt]
\dfv{ \la\var\ctx} & \defeq &  
		\dfv\ctx {\setminus} \set{\var}
\\[3pt]
 \dfv{ \bang \ctx}& \defeq & \dfv{  \ctx}
\\[3pt]
 \dfv{\cuta\val\var\ctx} & \defeq &
		\dfv\ctx {\setminus} \set{\var} 
\\[3pt]
 \dfv{\cuta{\vctx}\var\tm} & \defeq & \dfv\vctx
\end{array}$
}
&
{
$\begin{array}{rlll}
  \dfv{\suba\mvar\vctx\var \tm} & \defeq & \set\mvar\cup \dfv\vctx
\\[3pt]

  \dfv{\suba\mvar\val\var \ctx} & \defeq & \begin{nullcases}
		\set\mvar\cup (\dfv\ctx {\setminus} \set{\var}) & \mbox{if $\var {\in} \dfv\ctx$}\\
		\dfv\ctx & \mbox{otherwise.}
		\end{nullcases}
\\[12pt]
 \dfv{\dera\evar\var \ctx} & \defeq & \begin{nullcases}
		\set\evar\cup (\dfv\ctx {\setminus} \set{\var}) & \mbox{if $\var {\in} \dfv\ctx$}\\
		\dfv\ctx & \mbox{otherwise.}
		\end{nullcases}

\end{array}$
}
\end{tabular}
\\[9pt]\hline\\[-6pt]
\begin{tabular}{c@{\hspace{.3cm}}|@{\hspace{.3cm}}c}
$\begin{array}{rcl}
\multicolumn{3}{c}{\textsc{Good value contexts}}
\\[4pt]
 \vgctx & \grameq & \ctxhole \,\mid\,  \la\var\gctx \,\mid\, \bang\gctx
 \end{array}$
&
 $\begin{array}{rcl}
\multicolumn{3}{c}{\textsc{Good contexts}}
\\[4pt]
 \gctx & \grameq & \vgctx \,\mid\,  \suba\mvar\val\var \gctx
 \,\mid\, \suba\mvar\vgctx\var \tm 
\\[3pt]
&&\,\mid\,  
\dera\evar\var\gctx \,\mid\,  
 \cuta\val\var\gctx \mbox{ if }\var\notin\dfv\gctx
 \end{array}$
\end{tabular} 
\end{tabular}

\caption{Definitions for the good strategy: dominating free variables and good contexts.}
\label{fig:strategy}
\end{figure*}
Here, we define the \emph{good cut elimination strategy} $\tog$ for ESC, recalling the discussion from \cite{DBLP:journals/lmcs/Accattoli23} that explains the design of the strategy. 
The strategy is conceived as to have the \emph{sub-term property}, which is crucial for time analyses and which is defined as follows: every duplicated (or erased) sub-term during a cut elimination sequence is a sub-term of the initial term (up to variable renamings). See \cite{DBLP:journals/lmcs/Accattoli23} for extensive discussions about the sub-term property.

\subparagraph{Breaking the Sub-Term Property.} When does the sub-term property not hold? One has to duplicate an exponential value $\exval$ \emph{touched} by previous steps. In our setting, \emph{touched} can mean two things. Either a redex \emph{fully} contained in $\exval$ is reduced, obtaining $\exvaltwo$, and then $\exvaltwo$ is duplicated (or erased), as in the step marked with $\bigstar$ in the following diagram (the other, dashed path of which has the sub-term property):
\begin{center}
\begin{tikzpicture}[ocenter]
		\node at (0,0)[align = center](source){\normalsize$  \cuta{\exval} \evar \ctxfp{\evar}$};
		\node at (source.center)[right = 170pt](source-right){\normalsize $\cuta{\exvaltwo} \evar \ctxfp{\evar}$};
		\node at (source.center)[below = 25pt](source-down){\normalsize $\cuta{\exval}\evar\ctxfp{\exval}$};
		
		\node at (source-right|-source-down)(target){\normalsize $\cuta{\exvaltwo}\evar\ctxfp{\exvaltwo}$};
		\node at \med{source-down.east}{target.west}(fifthnode){\normalsize $\cuta{\exvaltwo} \evar \ctxfp{\exval}$};
		
		\draw[->](source) to node[above] {\scriptsize $\escsym $} (source-right);
		\draw[->](source-right) to node[right] {\scriptsize $\axeone$} node[right=20pt] {\scriptsize $\bigstar$}(target);

		\draw[->, dotted](source) to node[left] {\scriptsize $\axeone$}(source-down);
		\draw[->, dotted](source-down) to node[above] {\scriptsize $\escsym$} (fifthnode);
		\draw[->, dotted](fifthnode) to node[above] {\scriptsize $\escsym$} (target);
	\end{tikzpicture}
\end{center}
Preventing these situations from happening, thus forcing evaluation to follow the other (dashed) side of the diagram, is easy. It is enough to forbid the position of the reduced redex to be inside the left sub-term of a cut---we say inside a \emph{cut value} for short. It is however not enough, because cuts are also \emph{created}. Consider:
\begin{center}$
\begin{array}{clllllcl}
\cuta{\la\evar\evar}\mvar\suba\mvar\exval\var\tm &\toesc&
\cuta{\la\evar\evar}\mvar\suba\mvar\exvaltwo\var\tm &\tololli
\\
&&\cuta\exvaltwo\evar\cuta\evar\var\tm &\overset\bigstar\rightarrow_{\axeone} &
\cuta\exvaltwo\evar\cuta\exvaltwo\var\tm
\end{array}
$\end{center}
Reducing inside the subtraction value $\exval$ leads to a \emph{later} breaking of the sub-term property by the $\axeone$ step, because the $\tololli$ step creates a cut with $\exvaltwo$ inside.
Preventing these cases is tricky, because forbidding reducing subtraction values leads to cut elimination stopping too soon, without producing a cut-free term. In the $\l$-calculus, it corresponds to forbidding reducing inside arguments, which leads to \emph{head} reduction, that does not compute normal $\l$-terms. We shall then forbid reducing only subtraction values which are \emph{at risk} of becoming cuts. In $\l$-calculus, leftmost reduction does reduce arguments but only when the left sub-term of the application is normal and not an abstraction, so that the argument is not involved in a $\beta$-redex. We shall do something similar here, but  the sequent calculus formalizes this constraint differently, by checking that some variables are not captured.

\subparagraph{Dominating Variables.} The key notion is the one of \emph{dominating (free) variables} $\dfv\ctx$ of a context (where $\ctx$ is meant to be the position of a redex), defined in \reffig{strategy}, the base case of which is for $\suba\mvar\vctx\var\tm$. If $\ctx$ is a position and $\var\in\dfv\ctx$ then $\cuta\val\var\ctx$ turns $\ctx$ into a dangerous position, that is, a redex of position $\cuta\val\var\ctx:\tm\toms\tmtwo$ might lead to a breaking of the sub-term property later on during cut elimination. In the example, $\evar$ belongs to $\dfv\ctx$ for every context $\ctx\defeq \dera\evar\mvar\suba\mvar\vctx\var\tmtwo$ of $\dera\evar\mvar\suba\mvar\val\var\tmtwo$, for every $\vctx$.

\subparagraph{The Good Strategy.} These considerations lead to the notion of \emph{good contexts} in \reffig{strategy}. A good context forbids the two ways of breaking the sub-term property: its hole cannot be in a cut value (note the absence of the production $\cuta{\vgctx}\var\tm$) nor in a subtraction value such that one of its dominating variables is cut (because of the production $\cuta\val\var\gctx$ if $\var\notin\dfv\gctx
$). 

\begin{definition}[Good steps, good strategy]
A step $\ctx : \tm \toms \tmtwo$ is \emph{good} if its position $\ctx$ is good. In such a case, we write $\tm \tog \tmtwo$. The \emph{good cut elimination strategy} is simply $\tog$. We also use $\togp a$ to stress that the good step is of kind $a\in\set{\axmone,\axmtwo,\lolli, \axeone,\axetwo,\bang,\wsym}$. The non-erasing good strategy $\tognotw$ is the variant of $\tog$ excluding $\togp\wsym$ steps.
\end{definition}

The sub-term property in the following theorem captures the quantitative aspect for cost analyses, {i.e.} the bound on the size of duplicated values by the size of the initial term.

\begin{theorem}[Properties of the good strategy, \cite{DBLP:journals/lmcs/Accattoli23}]
\label{thm:sub-term-prop}
\hfill
\begin{enumerate}
\item \emph{Quantitative sub-term property}: if $\deriv:\tm \tog^{*} \tmtwo$ and $\val$ be a value erased or duplicated along $\deriv$, then $\size\val\leq \size\tm$.
\item \emph{Diamond}: if $ \tmtwo_{1} \lRew{\Gsym}\tm \tog \tmtwo_{2}$ and $\tmtwo_{1} \neq \tmtwo_{2}$ then $ \tmtwo_{1} \tog \tmthree \lRew{\Gsym}  \tmtwo_{2}$ for some $\tmthree$.
\item \emph{Fullness}: if $\tm$ is clash-free and not $\toms$-normal then $\tm \tog\tmtwo$ for some $\tmtwo$.
\item \emph{Good polynomial cost model}: if $\multiform\vdash \tm\hastype \form$ is a typed term then there exist  $k$ and a cut-free term $\tmtwo$ such that $\tm \tog^{k} \tmtwo$, and such a reduction sequence is implementable on RAMs in time polynomial in $k$ and $\size\tm$.
\end{enumerate}
\end{theorem}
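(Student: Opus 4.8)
The statement bundles four essentially independent facts; I sketch how I would establish each, the first being the technical core.

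\emph{Quantitative sub-term property (item~1).} The plan is an induction on the length of $\deriv:\tm\tog^{*}\tmtwo$ with a reinforced invariant: in every term reached along $\deriv$, each value occurring either as a cut value $\cuta\val\var{(\cdot)}$ or as a subtraction value $\suba\mvar\val\var{(\cdot)}$ whose binder $\mvar$ is captured by an enclosing cut is, up to renaming, a subterm of $\tm$, and no good step ever reduces inside such a value. The invariant plainly holds of $\tm$ and is preserved by the non-creating rules; the cases to check carefully are $\tololli$ and $\tobang$, the only rules that promote a value to cut-value status. For $\tololli$ the two new cut values are the former subtraction value $\val$ and a value $\valtwo$ extracted from inside the consumed cut value $\la\vartwo\tmtwo$ by splitting $\tmtwo=\lctxp\valtwo$, and goodness is exactly what guarantees that neither has been touched: reducing inside $\val$ would need a good position of the shape $\cuta{\la\vartwo\tmtwo}\mvar{\gctx'}$ with $\mvar\notin\dfv{\gctx'}$, which is impossible since $\mvar\in\dfv{\suba\mvar\vgctx\var\tmtwo}$; the rule $\tobang$ is handled symmetrically through $\dera\evar\var{\gctx'}$. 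Item~1 then follows because the values that are erased or duplicated by the root rules are precisely cut values. I expect this bookkeeping on $\dfv{\cdot}$ through the creating rules to be the main obstacle of the whole theorem.

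\emph{Diamond (item~2).} It suffices to prove the one-step local version: from $\tmtwo_1\lRew{\Gsym}\tm\tog\tmtwo_2$ with $\tmtwo_1\neq\tmtwo_2$, close the peak with one good step on each side. I would argue by case analysis on the relative positions of the two good redexes. Redexes in disjoint subterms commute trivially. For nestings and overlaps the key point is that a good position is never inside a cut value nor inside a dominated subtraction value, which trims the list of critical pairs to a handful (two axiom-like steps copying distinct occurrences of one variable, an axiom step feeding a creating step, and similar), each closed by hand, the duplicating steps simply being replayed. The one subtle peak involves a $\tow$ step: since the other redex then lies disjoint from, or strictly inside, the erased garbage, that redex is untouched and reduction cannot create free occurrences of the erased variable, so the $\tow$ step still fires on the opposite side and $\tmtwo_1\neq\tmtwo_2$ gives an honest one-step join.

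\emph{Fullness (item~3).} Given a clash-free $\tm$ that is not $\toms$-normal, the plan is to locate a good redex by descending through good contexts. I would pick a cut $\cuta\val\var\tmtwo$ of $\tm$ that still carries a redex; clash-freeness guarantees it is not a clash, so the occurrence of $\var$ witnessing the redex has the expected kind. If that occurrence is not shielded inside a cut value or a dominated subtraction value of $\tmtwo$, its position is good and we are done; otherwise the shield is itself a cut, to which the induction hypothesis applies. The delicate case, exactly as for leftmost $\beta$-reduction, is when evaluation is forced inside a subtraction value because everything preceding it is normal: there, the binder of the subtraction cannot be among the dominating variables that are cut, so the position stays good. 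Clash-freeness is what keeps the descent from stalling at an irreducible clashing cut.

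\emph{Good polynomial cost model (item~4).} If $\multiform\vdash\tm\hastype\form$, then $\tm$ is clash-free by \reflemma{clashfree-implies-cutfree} and strongly normalizing, since cut elimination for IMELL/ESC is SN; as $\tog{\subseteq}\toms$, the good strategy halts on some $\tmtwo$ after $k$ steps. By item~3 (contrapositive) $\tmtwo$ is $\toms$-normal, and clash-freeness is preserved along $\tog^{k}$, so $\tmtwo$ is non-clashing; item~1 of \reflemma{clashfree-implies-cutfree} then makes $\tmtwo$ cut-free. For the RAM bound I would combine this with item~1: each good step copies or erases a value of size $\leq\size\tm$, hence enlarges the term by $O(\size\tm)$ constructors, so after $j\leq k$ steps the term has size $O(\size\tm+k\cdot\size\tm)$; locating a redex position and performing one rewrite takes time polynomial in the current size, hence polynomial in $k$ and $\size\tm$, and summing over the $k$ steps stays polynomial. (Sharpening this polynomial overhead to the \emph{linear} one announced in the abstract is the job of the abstract machine built in the rest of the paper, which also exploits postponement of garbage collection to relegate the erasing rule to a final clean-up pass.)
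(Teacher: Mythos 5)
Your proposal cannot be matched against a proof in this paper, because the paper contains none: Theorem~\ref{thm:sub-term-prop} is recalled verbatim from Accattoli's LMCS paper \cite{DBLP:journals/lmcs/Accattoli23} and is used here as a black box (the text only remarks that the algorithmic content of item~4 is proved there without pinning down the degree of the polynomial). So what can be assessed is only the internal soundness of your sketch. Items~2--4 are reasonable at this level of detail: the reason duplication peaks close in one step is exactly the one you give (good positions are never inside cut values, so a copied value cannot contain the other good redex), the $\tow$ peak analysis is right, the fullness descent is the expected argument, and item~4 correctly combines clash-freeness, strong normalization of typed ESC, fullness, and the size bound from item~1.

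The genuine gap is in your invariant for item~1. You track cut values and subtraction values \emph{whose binder is captured by an enclosing cut}, i.e.\ only depth-one domination, whereas the whole point of the recursive clauses of $\dfv{\cdot}$ is that domination propagates through chains of subtractions. Concretely, consider $\cuta{\la\vartwo\tmtwo}\mvar\ctxp{\suba\mvar\val\var\,\ctxtwop{\suba\var\valthree\varthree\tmthree}}$ with $\var$ multiplicative: the value $\valthree$ is untouched by the good strategy (any position inside it has $\var$, hence $\mvar$, among its dominating variables, and $\mvar$ is cut), but your invariant says nothing about it, since the binder $\var$ of its subtraction is bound by a subtraction, not a cut. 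After the $\tololli$ step, the freshly created cut $\cuta\valtwo\var$ captures $\var$, so $\valthree$ suddenly falls under your invariant --- and the induction hypothesis gives you no reason why it should be a sub-term of the initial term. The same phenomenon occurs with the cut $\cuta\val\var$ created by $\tobang$ over the dereliction variable. So the preservation argument for precisely the two ``cases to check carefully'' that you single out does not go through as stated. The fix is to strengthen the invariant so that it covers every value sitting at a position that is not inside any good context (equivalently, every subtraction value one of whose hereditarily dominating variables, in the $\dfv$ sense, is cut), and then preservation under $\tololli$/$\tobang$ follows because goodness already forbade touching all of these values before the cut was created; this is essentially how the cited proof is organized around good contexts and $\dfv$ rather than around binders of single constructors.
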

The algorithmic aspect of \refthm{sub-term-prop}.4 is proved in  \cite{DBLP:journals/lmcs/Accattoli23} via arguments that do not establish the degree of the polynomial bound. The aim of this paper is exactly to show that a carefully designed abstract machine provides a bound that is linear in both $k$ and $\size\tm$.

\subparagraph{The Diamond Property.} Let us provide some background about the diamond property.
Following Dal Lago and Martini \cite{DBLP:journals/tcs/LagoM08}, we say that a relation $\Rew{\Rule}$ is \emph{diamond} if $\tmtwo_1 \lRew{\Rule} \tm \Rew{\Rule} \tmtwo_2$ and $\tmtwo_1 \neq \tmtwo_2$ imply $\tmtwo_1 \Rew{\Rule} \tmthree \lRew{\Rule} \tmtwo_2$ for some $\tmthree$. 
	The terminology in the literature is inconsistent: Terese \cite[Exercise 1.3.18]{Terese} dubs this property $\textup{CR}^1$, and defines the diamond more restrictively, without requiring $\tmtwo_1 \neq \tmtwo_2$ in the hypothesis: $\tmtwo_1$ and $\tmtwo_2$ have to join even if $\tmtwo_1 = \tmtwo_2$.

Standard corollaries of Dal Lago and Martini's notion are that, if $\Rew{\Rule}$ is diamond, then:
\begin{enumerate}
	\item \emph{Confluence}:
	$\Rew{\Rule}$ is confluent, that is, $\tmtwo_1  \lRewn{\Rule} \tm \Rew{\Rule}^* \tmtwo_2$  implies $\tmtwo_1 \Rew{\Rule}^* \tmthree \lRewn{\Rule} \tmtwo_2$ for some $\tmthree$; 
	\item \emph{Length invariance}: all $\Rule$-evaluations with the same start and $\Rule$-normal end terms have the same length (\ie if $\deriv \colon \tm \Rew{\Rule}^k \tmtwo$  and $\deriv' \colon \tm \Rew{\Rule}^h \tmtwo$ with $\tmtwo$ $\Rule$-normal, then $h = k$);
	\item \emph{Uniform normalization}: $\tm$ is weakly $\Rule$-normalizing if and only if it is strongly $\Rule$-normalizing.
\end{enumerate}
Basically, the diamond property captures a more liberal form of determinism. In particular, length invariance is essential in order to take the number of steps of a strategy as a cost model. Without it, indeed, the number of steps of a non-deterministic strategy would be an ambiguously defined notion of cost.

\section{Preliminaries on Abstract Machines}
\label{sect:prel-abs-mach}
\subparagraph{Abstract Machines Glossary.}  Abstract machines manipulate \emph{pre-terms}, that is, terms without implicit $\alpha$-renaming. In this paper, an \emph{abstract 
machine} is a quadruple $\mach = (\States, \tomach, \compilrel\cdot\cdot, \decode\cdot)$ the components of which are as follows.
\begin{itemize}

\item \emph{States.} A state $\state\in\States$ is composed by the \emph{active term} $\tm$, plus one data structure which depends on the actual machine. Terms in states are actually pre-terms.

\item  \emph{Transitions.} The pair $(\States, \tomach)$ is a transition 
system with transitions $\tomach$ partitioned into \emph{principal transitions}, whose union is noted $\tomachpr$ and that are meant to correspond to cut-elimination steps on the calculus, and \emph{search transitions}, whose union is noted $\tomachsea$, that take care of searching for (principal) redexes.

\item \emph{Initialization.} The component $\compilrel{}{}\subseteq\Lambda\times\States$ is the \emph{initialization relation} associating  terms to 
initial states. It is a \emph{relation} and not a function because $\compilrel\tm\state$ maps a $\l$-term $\tm$ (considered modulo $\alpha$) to a state $\state$ having a \emph{pre-term representant} of $\tm$ (which is not modulo $\alpha$) as active term. Intuitively, any two states $\state$ and $\statetwo$ such that $\compilrel\tm\state$ and $\compilrel\tm\statetwo$ are $\alpha$-equivalent. 
The initializing terms ({i.e.} those $\tm$ such that $\compilrel\tm\state$ for some $\state$) are always  \emph{proper} and \emph{clash-free}.
A state $\state$ is \emph{reachable} if it can be reached starting from an initial state, that is, if $\statetwo \tomach^*\state$ where $\compilrel\tm\statetwo$ for some $\tm$ and $\statetwo$, shortened as $\compilrel\tm\statetwo \tomach^*\state$.

\item \emph{Read-back.} The read-back function $\decode\cdot:\States\to\Lambda$ turns reachable states into 
terms and satisfies the \emph{initialization constraint}: if $\compilrel\tm\state$ then $\decode{\state}=_\alpha\tm$.
\end{itemize}

\subparagraph{Further Terminology and Notations.} A state is \emph{final} if no transitions apply.
 A \emph{run} $\run: \state \tomach^*\statetwo$ is a possibly empty finite sequence of transitions, the length of which is noted 
$\size\run$; note that the first and the last states of a run are not necessarily initial and final. 
If $a$ and $b$ are transitions labels (that is, $\tomachhole{a}\subseteq \tomach$ and 
$\tomachhole{b}\subseteq \tomach$) then $\tomachhole{a,b} \defeq \tomachhole{a}\cup \tomachhole{b}$ and $\sizep\run a$ 
is the number of $a$ transitions in $\run$, and $\sizep\run{\neg a}$ is the number of transitions in $\run$ that are 
not $\tomachhole{a}$.

\subparagraph{Well-Boundness and Renamings.} For the machines at work in this paper, the pre-terms in initial states shall be \emph{well-bound}, that is, they have pairwise distinct bound names; for instance $\suba\mvar{\la\mvartwo\mvartwo}\evar\la\evartwo\evartwo$ is well-bound while $\suba\mvar{\la\mvartwo\mvartwo}\evar\la\mvartwo\mvartwo$ is not. 
We shall also write $\rename{\tm}$ in a state $\state$ for a \emph{fresh well-bound renaming} of $\tm$,
\ie $\rename{\tm}$ is $\alpha$-equivalent to $\tm$, well-bound, and its bound variables
are fresh with respect to those in $\tm$ and in the other components of $\state$.

\subparagraph{Implementation Theorem, Abstractly.} We now formally define the notion of a machine implementing a strategy. Since the good strategy is non-deterministically diamond but the machine that shall implement it is deterministic, we need a slightly unusual form of implementation theorem. As it is standard, machine transitions shall be mapped to equalities or steps on the calculus. For the other direction, however, we obtain only a \emph{big-step} simulation, that is, if the strategy on the calculus terminates / diverges then the same does the machine, and with a related number of steps. But there is no step-by-step simulation of the calculus by the machine, because the non-deterministic strategy might do a step which is not the one done by the machine. Everything works fine because of the properties of diamond strategies. The general scheme is inspired by Accattoli et al.'s scheme for strong call-by-value \cite{DBLP:conf/lics/AccattoliCC21}, but it is here simpler because of the absence of structural equivalence and implosive sharing.

It would also be possible to have a more symmetric setting by either designing a more complex diamond machine, as done by Accattoli and Barenbaum \cite{DBLP:conf/aplas/AccattoliB23}, or by designing a deterministic variant of the good strategy. For the sake of simplicity, we prefer the asymmetric setting as to keep a simple machine and reuse the notion of good strategy from the literature.

\begin{definition}[Big-step implementations]
A machine $\mach = (\States, \tomach, \compilrel\cdot\cdot, \decode\cdot)$ is a \emph{big-step implementation of a strategy} $\tostrat$ on terms when, given a (proper and clash-free) term $\tm$:\label{def:implem}
\begin{enumerate}
\item \emph{Runs to evaluations}: for any $\mach$-run $\run: \compilrel\tm\statetwo \tomachine^* \state$ there is a $\tostrat$-evaluation $\deriv: \tm \tostrat^* \decode\state$ with $\sizepr\run = \size\deriv$.

\item \emph{Normalizing evaluations to runs}: if $\deriv \colon \tm \tostrat^* \tmtwo$ with $\tmtwo$  $\tostrat$-normal then there is a $\mach$-run $\run \colon \compilrel\tm\statetwo \tomachine^* \state$ such that $\decode\state = \tmtwo$ with $\sizepr\run = \size{\deriv}$. 

\item \emph{Diverging evaluations to runs}: if $\compilrel\tm\state$ and $\tostrat$ diverges on $\tm$ then $\mach$ diverges on $\state$ doing infinitely many principal transitions.
\end{enumerate}
\end{definition}

Next, we isolate sufficient conditions for big-step implementations.

\begin{definition}[Lax implementation system]
  A lax implementation system is given by a \emph{machine} $\mach = (\States, \tomach, \compilrel\cdot\cdot, \decode\cdot)$ and a strategy $\tostrat$ such that for every reachable state $\state$:$\label{def:implementation}$
  \begin{enumerate}
		\item\label{p:def-beta-projection} \emph{Principal projection}: if $\state \tomachpr \statetwo$ then $\decode\state \tostrat \decode\statetwo$;
		\item\label{p:def-overhead-transparency} \emph{Search transparency}: if $\state \tomachsea \statetwo$ then $\decode\state  =\decode\statetwo$;
		\item\label{p:def-overhead-terminate}	\emph{Search terminates}:  $\tomachsea$ terminates;

	\item\label{p:def-progress} \emph{Halt}: if $\state$ is final then $\decode\state$ is $\tostrat$-normal;
	
	\item\label{p:def-determinism} \emph{Diamond}:  $\tostrat$ is diamond.
  \end{enumerate}
\end{definition}

\begin{toappendix}
\begin{theorem}[Abstract big-step implementation]
\NoteProof{thm:abs-impl}
  Let $\mach$ and  $\tostrat$ form a lax implementation system.
  Then, $\mach$ is a big-step implementation of $\tostrat$. \label{thm:abs-impl}
\end{theorem}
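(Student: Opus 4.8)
The plan is to derive the three clauses of the notion of big-step implementation (\Cref{def:implem}) from the five properties of a lax implementation system, using freely the three standard corollaries of the diamond property recalled above: confluence, length invariance, and uniform normalization of $\tostrat$. Clause~1 (\emph{runs to evaluations}) is a straightforward induction on the length of the run $\run\colon\compilrel\tm\statetwo \to^* \state$: the empty run is handled by the initialization constraint $\decode\statetwo =_\alpha \tm$; a trailing search transition is absorbed by \emph{search transparency} (it leaves $\decode\cdot$ and $\sizepr\cdot$ unchanged), and a trailing principal transition is mapped to one $\tostrat$-step by \emph{principal projection} (both the evaluation length and $\sizepr\cdot$ increase by one). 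This produces $\deriv\colon\tm\tostrat^*\decode\state$ with $\size\deriv=\sizepr\run$.

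The hinge for the remaining two clauses is the following observation: $\mach$ cannot perform an infinite run without doing infinitely many principal transitions, since \emph{search terminates}; moreover, projecting the finite prefixes of such an infinite run via clause~1 assembles an infinite $\tostrat$-reduction starting from $\tm$. Hence, whenever $\tm$ is strongly $\tostrat$-normalizing, every maximal $\mach$-run from $\compilrel\tm\statetwo$ is finite and thus ends in a final state. Clause~3 (\emph{diverging evaluations to runs}) is then immediate: if $\tostrat$ diverges on $\tm$, then $\tm$ is not weakly $\tostrat$-normalizing (uniform normalization), so no maximal $\mach$-run from $\compilrel\tm\state$ can end in a final state — otherwise clause~1 together with \emph{Halt} would give a $\tostrat$-reduction of $\tm$ to a $\tostrat$-normal term; thus every maximal run is infinite and, by the observation, does infinitely many principal transitions.

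Clause~2 (\emph{normalizing evaluations to runs}) is the delicate point — the place where the asymmetry between the deterministic machine and the merely diamond strategy must be absorbed — and I expect it to be the main obstacle. Given $\deriv\colon\tm\tostrat^*\tmtwo$ with $\tmtwo$ $\tostrat$-normal, the term $\tm$ is weakly, hence (uniform normalization) strongly, $\tostrat$-normalizing; so by the observation above the $\mach$-run from $\compilrel\tm\statetwo$ is a finite run $\run$ ending in a final state $\state$. By clause~1 there is an evaluation $\tm\tostrat^*\decode\state$ of length $\sizepr\run$, and by \emph{Halt} the term $\decode\state$ is $\tostrat$-normal. Now confluence of $\tostrat$ forces $\decode\state=_\alpha\tmtwo$ (two $\tostrat$-normal terms sharing a common $\tostrat$-reduct coincide), and length invariance forces $\size\deriv=\sizepr\run$; so $\run$ is exactly the run required. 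In short, since there is no step-by-step simulation of $\tostrat$ by $\mach$, the argument runs the machine to a final state — legitimate only because weak normalization of $\tm$, witnessed by $\deriv$, upgrades to strong normalization via uniform normalization — and then reconciles the two resulting normalizing evaluations through confluence and length invariance; everything else follows directly from the five hypotheses.
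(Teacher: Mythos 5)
Your proposal is correct, and clauses 1 and 3 follow the paper's own argument almost verbatim (induction on the run using principal projection and search transparency; contradiction via Halt and uniform normalization, then search termination to get infinitely many principal transitions). Where you genuinely diverge is clause 2. The paper proves a generalized statement (with an arbitrary reachable state $\statetwo$ in place of the initial one) by induction on $\size\deriv$, using a one-step transfer lemma (search termination + transparency + Halt) to make the machine fire one principal transition, and then invoking the diamond property \emph{locally} to repair the mismatch between the strategy's step and the machine's projected step before applying the induction hypothesis. You instead argue \emph{globally}: uniform normalization upgrades the hypothesis to strong normalization of $\tm$, your hinge observation rules out infinite runs, so the machine reaches a final state, and then Halt, confluence and length invariance identify $\decode\state$ with $\tmtwo$ and $\sizepr\run$ with $\size\deriv$. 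Your route is shorter and reuses only the three corollaries of diamond already recalled in Section 3, avoiding both the generalization over states and the one-step transfer lemma; the paper's route stays finitary and incremental, never reasoning about infinite runs. The one point you should make explicit is the hinge itself: ``projecting the finite prefixes assembles an infinite $\tostrat$-reduction'' is not literally the statement of clause 1 (which only asserts existence of \emph{some} evaluation per prefix); you need either that the clause-1 projections are coherent along prefixes (which is evident from its inductive construction, but should be said), or, alternatively, the observation that since $\tm$ is strongly normalizing every reduction from $\tm$ extends to a normalizing one, so by confluence and length invariance all reductions from $\tm$ have length at most $\size\deriv$, contradicting a prefix with $\size\deriv+1$ principal transitions. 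Either patch is a one-liner, so this is a presentational gap rather than a mathematical one.
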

\end{toappendix}

\subparagraph{Clash-Free and Proper States.} 
Note that we have \emph{not} taken into account properness and clashes on states. One can say that a state $\state$ is proper (resp. clash-free) if its decoding $\decode\state$ is. In this way, these notions are trivially seen to be preserved by runs in a lax implementation system. Note, indeed, that transitions are mapped via decoding to either equalities or rewriting steps (and the proofs of these facts shall not need properness nor clash-freeness). Since states are initialized with  proper and clash-free  terms, and that these notions on terms are preserved by reduction (clash-freeness by definition, for properness see \cite{DBLP:journals/lmcs/Accattoli23}), proper and clash-free states are preserved by transition. Therefore, we shall omit all considerations about properness and clashes for machines.

\section{A Machine for the Closed Basic Case}
\label{sect:BAM}
\begin{figure}[t!]
\centering
\small
\SetTblrInner{rowsep=0pt}
  \begin{tblr}{c}
 $ \begin{array}{r@{\hspace{.4cm}} rcl@{\hspace{.4cm}}|@{\hspace{.4cm}} r@{\hspace{.4cm}} rclll}
%
  
    \textsc{States} & \state,\statetwo & \grameq & \twostate\ectx\tm
    &
  \textsc{Initialization} & \compilrel\tm\state & \mbox{if} & \state=\twostate\ctxhole{\rename\tm}
  \\
  \textsc{Read back} & \decode{\twostate\ectx\tm} & \defeq & \ectx\ctxholep\tm
    \end{array}$
    \\[5pt]
$\begin{tblr}{|r|r||c||r|r|l}
\cline{1-5}
\SetCell[c=1]{c} \textsc{Cut Ctx} & \SetCell[c=1]{c} \textsc{Active Tm} &\textsc{Tran.}& \SetCell[c=1]{c} \textsc{Cut Ctx} & \SetCell[c=1]{c}\textsc{Active Tm}
\\\cline{1-5}
\ectx & \cuta\val\var \tm
&\tomachsea&
\ectxp{\cuta\val\var\ctxhole} &\tm
\\
\cline[dashed]{1-5}
\ectxp{   \cuta{\mvartwo}\mvar   \ectxtwo} & \suba\mvar\val\var\tm
&\tomachaxmtwo&
\ectxp   \ectxtwo & \suba\mvartwo\val\var\tm
\\

\ectxp{   \cuta{\la\vartwo\lctxp\valtwo}\mvar   \ectxtwo} & \suba\mvar\val\var \tm
&\tomachlolli&
\ectxp{  \ectxtwo  \cuta\val\vartwo } &\lctxp{\cuta\valtwo\var\tm}
\\
\ectxp{   \cuta{\evartwo}\evar   \ectxtwo} & \dera\evar\var\tm
&\tomachaxetwo&
\ectxp{   \cuta{\evartwo}\evar   \ectxtwo} & \dera\evartwo\var\tm
\\
\ectxp{   \cuta{\bang\lctxp\val}\evar   \ectxtwo} & \dera\evar\var\tm
&\tomachbang&
\ectxp{   \cuta{\bang\lctxp\val}\evar   \ectxtwo} & \lctxtwo\ctxholep{\cuta{\valtwo}\var\tm}
&\#
\\\cline[dashed]{1-5}

\ectxp{   \cuta{\mval}\mvar   \ectxtwo} & \mvar
&\tomachaxmone&
\ectxp   \ectxtwo &\mval
\\

\ectxp{   \cuta{\exval}\evar   \ectxtwo} & \evar
&\tomachaxeone&
\ectxp{   \cuta{\exval}\evar   \ectxtwo} &\rename\exval
\\\cline{1-5}
\end{tblr}$
\\[4pt]
\SetCell[c=1]{l}\# with $\lctxtwop\valtwo=\rename{\lctxp\val}$
\end{tblr}
\caption{The Basic Abstract Machine (BAM).}
\label{fig:BAM}
\end{figure}
Here, we study a  machine implementing basic evaluation on closed ESC terms. The aim is to give a gentle introduction to some machine concepts.

\subparagraph{BAM.} The \emph{basic abstract machine} (BAM) is defined in \reffig{BAM}. States $\state = \twostate\ectx\tm$ are simply given by the active (pre-)term $\tm$ and a cut context $\ectx$ which is a list containing the cuts encountered so far by the search mechanism, playing the role of the (global) environment in machines such as Accattoli et al.'s Milner Abstract Machine \cite{DBLP:conf/icfp/AccattoliBM14}. 

The initialization relation $\compilrel\tm{\twostate\ctxhole {\rename\tm}}$ pairs (proper and clash-free) terms $\tm$ with states composed by an empty cut context $\ctxhole$ and a well-bound renaming $\rename\tm$ of $\tm$. The BAM has six principal transitions, mimicking the ESC rewriting rules but for the weakening one (BAM never erases), plus one search transition $\tomachsea$, moving cuts from the term to the cut context.

The BAM looks at the topmost constructor of the active term and proceeds applying a transition belonging to one of the following three groups. If the constructor is:
\begin{itemize}
\item \emph{Search}: a cut $\cuta\val\var$, the BAM adds it to the cut context, by applying $\tomachsea$;
\item \emph{Computation}: a subtraction $\suba\mvar\val\var$ or a dereliction $\dera\evar\var$, then the BAM looks for the associated cut in the cut context, it applies the corresponding cut elimination rule, and goes on to execute the modified active term, by applying $\tomachaxmtwo$, $\tomachlolli$, $\tomachaxetwo$, or $\tomachbang$;
\item \emph{Terminal replacements}: a variable $\mvar$ or $\evar$, the BAM looks for the associated cut in the cut context and it applies the matching replacement via $\tomachaxmone$ or $\tomachaxeone$. Note that terminal replacements can only be followed by further terminal replacements, hence the name.
\end{itemize}
Note that in transitions $\tomachbang$ and $\tomachaxeone$ some renaming takes place, using names that are fresh with respect to the whole state.
The domain $\domp\ectx$ of a cut context is defined as $\domp\ctxhole  \defeq  \emptyset$ and $\domp{\cuta\val\var\ectx} \defeq  \set\var\cup\domp\ectx$.

\begin{toappendix}
\begin{lemma}[BAM qualitative invariants]
Let $\state=\twostate\ectx\tm$ be a BAM  reachable state.\label{l:bam-invariants} %
\NoteProof{l:bam-invariants}
\begin{enumerate}
\item 
\emph{Closure}: $\fv\tm \subseteq \domp\ectx$ and if $\ectx=\ectxtwop{\cuta\val\var\ectxthree}$ then $\fv\val\subseteq \domp\ectxtwo$.

\item 
\emph{Well-bound}: if $\la\var\tmtwo$, $\dera\evar\var\tmtwo$, $\suba\mvar\val\var\tmtwo$, or $\cuta\val\var\tmtwo$ occur in $\state$ and $\var$ has any other occurrence in $\state$ then it is a free variable of $\tm$, and if $\ectx=\ectxtwop{\cuta\val\var\ectxthree}$ and $\var$ has any other occurrence in $\state$ then it is a free variable in $\ectxthree$ or $\tm$. 
\end{enumerate}
\end{lemma}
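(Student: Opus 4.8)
The plan is to prove the two invariants \emph{simultaneously}, by induction on the length of the run $\compilrel\tm\statetwo \tomach^{*} \state$ witnessing that $\state$ is reachable. For the base case $\state = \twostate\ctxhole{\rename\tm}$ the cut context is empty and $\tm$ is closed (we are in the closed basic case), so $\fv{\rename\tm} = \emptyset = \domp\ctxhole$ gives Closure with its second clause vacuous, while Well-bound is just the well-boundness of $\rename\tm$, again with its second clause vacuous since $\ectx = \ctxhole$.

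For the inductive step, let $\state = \twostate\ectx\tm$ be reachable and satisfy both invariants, and let $\state \tomach \statetwo$; I would proceed by cases on the transition, keeping the two clauses of Closure distinct throughout. Transition $\tomachsea$ merely moves the head cut $\cuta\val\var\tm$ of the active term into the cut context: the old Closure yields $\fv\val \subseteq \domp\ectx$, so the second clause holds for the new frame, and $\fv\tm \subseteq \domp\ectx \cup \set\var = \domp{\ectxp{\cuta\val\var\ctxhole}}$, so the first clause holds, while Well-bound is inherited since no name is created. Transitions $\tomachaxmtwo$ and $\tomachaxetwo$ rename the head variable of a subtraction, resp. a dereliction, to the variable $\vartwo$ carried by the matched cut; since $\vartwo$ is free in that cut value, the second Closure clause places it in the domain of the cut-context prefix preceding the matched cut, hence in the domain of the cut context of $\statetwo$, re-establishing the first clause; and as the consumed cut variable is multiplicative, it occurs linearly by properness and hence appears nowhere in the retained part of the cut context, so the second clause and Well-bound survive. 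Transition $\tomachaxmone$ replaces the active variable $\mvar$ by the multiplicative value $\mval$ of the matched cut and deletes that cut; Closure holds because $\fv\mval$ is contained in the domain of the preceding prefix and, $\mvar$ being multiplicative, hence linear, and presently realized as the whole active term, it occurs nowhere else, so deleting the cut does not deprive any stored value of a binding; Well-bound is immediate.

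The transitions $\tomachbang$ and $\tomachaxeone$ place in the active term a \emph{fresh well-bound renaming} of a value read from the cut context (for $\tomachbang$, the split promotion body; for $\tomachaxeone$, the matched exponential value $\rename\exval$), the matched cut being left in place in both cases, so the cut context changes only by the renaming happening inside the active term. Since renaming does not affect free variables, Closure is argued as above — for $\tomachbang$ one additionally uses that a promotion body has empty $\mfv\cdot$ by properness, so the materialized copy is closed over the current cut context. Crucially, Well-bound is restored precisely because the renaming is fresh with respect to every name occurring in $\state$, so the freshly introduced binders cannot collide with any existing occurrence: this is the exact role of the freshness side conditions of $\tomachbang$ and $\tomachaxeone$.

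The genuinely delicate case is $\tomachlolli$: the matched cut on $\mvar$, whose value is an abstraction $\la\vartwo\lctxp\valtwo$, is destructured, a new cut on $\vartwo$ with value $\val$ is appended after $\ectxtwo$, and the active term becomes $\lctxp{\cuta\valtwo\var\tm}$. I would check that: (i) by the second Closure clause, $\fv{\la\vartwo\lctxp\valtwo}$ lies in the domain of the prefix preceding the matched cut, hence $\fv{\lctxp\valtwo}$ lies in that domain extended by $\set\vartwo$, so the new active term stays closed over the new cut context; (ii) $\mvar$ occurs linearly by properness, so it appears neither in $\val$ nor in $\tm$, which both keeps the relocated cut on $\vartwo$ closed over its cut-context prefix and shows that deleting the $\mvar$-cut harms no value stored in $\ectxtwo$; (iii) the binders uncovered by splitting $\la\vartwo\lctxp\valtwo$ were already present in $\state$, so Well-bound is preserved. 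I expect this $\tomachlolli$ bookkeeping, together with the freshness arguments for $\tomachbang$ and $\tomachaxeone$, to be the main obstacle; all the other cases are routine once Closure is read along the list structure of the cut context and properness is invoked to treat multiplicative variables as linear.
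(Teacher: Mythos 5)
Your proposal is correct and takes the same route as the paper: induction on the length of the run, with the invariants checked by inspecting each BAM transition (the paper's own proof simply declares this case analysis routine, while you spell it out, including the use of multiplicative linearity from properness and the freshness of renamings in $\tomachbang$ and $\tomachaxeone$). No gap to report.
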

\end{toappendix}

\begin{toappendix}
\begin{proposition}
\label{prop:bam-properties}
\NoteProof{prop:bam-properties}
Let $\state$ be a BAM  reachable state and $a\in\set{\axmone,\axmtwo, \axeone,\axetwo,\lolli,\bang}$.
\begin{enumerate}
\item \emph{Search transparency}: if $\state \tomachsea \statetwo$ then $\decode\state=\decode\statetwo$.
\item \emph{Principal projection}: if $\state \tomachhole{a} \statetwo$ then $\decode\state \Rew{\symfont{b}a} \decode\statetwo$.
\item \emph{Search termination}: transition $\tomachsea$ terminates.
\item \emph{Halt}: if $\state$ is final then $\state = \twostate\ectx\val$ with $\val$ not a variable, and $\decode\state$ is normal for non-erasing basic evaluation.
\end{enumerate}
\end{proposition}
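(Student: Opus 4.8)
The plan is to check the four items by inspecting the seven transitions of \reffig{BAM}, using the qualitative invariants of \reflemma{bam-invariants} whenever the cut context is involved, and the clash-freeness of reachable states for \emph{Halt}. The latter is for free: by items 1 and 2 every transition decodes to an equality or to a $\toms$-step, and clash-freeness is preserved by $\toms$, so the decoding of a reachable state is clash-free since the decoding of an initial state is.

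\emph{Search transparency} and \emph{search termination} are immediate. For the first, the unique search transition turns $\twostate\ectx{\cuta\val\var\tm}$ into $\twostate{\ectxp{\cuta\val\var\ctxhole}}\tm$, and plugging $\tm$ into the cut context $\ectxp{\cuta\val\var\ctxhole}$ gives back $\ectxp{\cuta\val\var\tm} = \decode{\twostate\ectx{\cuta\val\var\tm}}$; so the two states have the same read-back. For the second, transition $\tomachsea$ strictly decreases the size of the active term, hence cannot diverge.

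For \emph{principal projection} I would treat the six principal transitions uniformly. Each of them acts on a cut $\cuta\valtwo\var\ectxtwo$ in the cut context whose inner context $\ectxtwo$ is occupied by the active term, itself an occurrence $\tm_\var$ of $\var$ (a variable, a subtraction, or a dereliction). Read-back places the sub-term $\cuta\valtwo\var{\ectxtwop{\tm_\var}}$ inside the outer cut context $\ectx$; since $\ectxtwo$ is a cut context it captures no variable (well-bound invariant), so this sub-term is exactly the left-hand side of the corresponding root rule of \reffig{esc}, with $\ectxtwo$ playing the role of the context $\ctx$ there --- and, for $\tomachlolli$ and $\tomachbang$, one uses that every term writes uniquely as $\lctxp{\valtwo}$ to discharge the side conditions of the $\lolli$ and $\bang$ root rules. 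Firing the rule and closing under $\ectx$ yields precisely the read-back of the target state, up to $\alpha$, which absorbs the on-the-fly renamings of $\tomachbang$ and $\tomachaxeone$. The resulting step is basic because its redex position $\ectxp{\cuta\valtwo\var\ectxtwo}$ is a concatenation of cut contexts around a single cut, hence itself a cut context.

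The item that carries the real content is \emph{Halt}, and it is where I expect the difficulty. Given a final state $\state = \twostate\ectx\tm$, I would first note that $\tm$ cannot be a cut (else $\tomachsea$ fires), and then consider the case in which the head of $\tm$ acts on a variable $\var$, i.e.\ $\tm$ is a subtraction $\suba\var\cdot\cdot\cdot$, a dereliction $\dera\var\cdot\cdot$, or $\var$ itself. By the closure invariant $\var \in \fv\tm \subseteq \domp\ectx$, so $\ectx$ splits at the (unique, by well-boundness) cut binding $\var$; inspecting the value of that cut, either one of $\tomachaxmtwo, \tomachlolli, \tomachaxetwo, \tomachbang, \tomachaxmone, \tomachaxeone$ applies --- contradicting finality --- or the value has the wrong kind, so $\decode\state$ contains a clash $\cuta\mval\evar\cdot$ or $\cuta\exval\mvar\cdot$, contradicting clash-freeness. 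Hence $\tm$ is an abstraction or a promotion, thus a value $\val$ that is not a variable. It remains to argue that $\decode\state = \ectxp\val$ is $\Rew{\symfont{b}\neg\wsym}$-normal: the redex position of a basic non-erasing step is a cut context, which forces the variable occurrence it acts on to be reachable from its cut by traversing only cut bodies; but along such a path inside $\ectxp\val$ one meets only cuts and then the active value $\val$, and neither a cut nor an abstraction/promotion is an occurrence of a variable. The delicate point throughout is exactly this tension: being final only rules out transitions that look at the active term, so the closure invariant is needed to guarantee that the cut the machine is waiting for is present in $\ectx$, and the cut-context characterization of basic redexes is needed to guarantee that the cuts left in $\ectx$ do not secretly host a basic redex.
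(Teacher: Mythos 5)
Your proof is correct and follows essentially the same route as the paper's: search transparency and termination by direct computation, principal projection by inspecting the transitions and using the well-bound invariant to discharge the non-capture side conditions of the root rules, and Halt via the closure invariant plus clash-freeness to rule out every head constructor of the active term except abstraction and promotion. The only divergence is the final step of Halt, where the paper appeals to closedness (again by the closure invariant) and the harmony lemma for closed clash-free terms, whereas you argue directly that a cut context plugged with a non-variable value hosts no $\Rew{\symfont{b}\neg\wsym}$-redex because basic redex positions are cut contexts --- an equally valid inlining of that lemma.
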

\end{toappendix}

\begin{theorem}
The BAM implements ESC non-erasing basic evaluation on closed terms.
\end{theorem}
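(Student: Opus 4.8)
The plan is to reduce the statement to the abstract machinery of \refsect{prel-abs-mach}: I would show that the BAM together with ESC non-erasing basic evaluation $\Rew{\symfont{b}\neg\wsym}$ forms a \emph{lax implementation system} in the sense of \Cref{def:implementation}, and then apply the abstract big-step implementation theorem (\Cref{thm:abs-impl}) to conclude that the BAM is a big-step implementation of $\Rew{\symfont{b}\neg\wsym}$ as in \Cref{def:implem}. Concretely, I would take as principal transitions the six transitions $\tomachaxmone$, $\tomachaxmtwo$, $\tomachlolli$, $\tomachaxetwo$, $\tomachbang$, $\tomachaxeone$ --- note that, like $\Rew{\symfont{b}\neg\wsym}$, the BAM never erases --- and as the single search transition the $\tomachsea$ that pushes a cut from the active term onto the cut context.

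Next I would verify the five clauses of \Cref{def:implementation} for every reachable BAM state $\state$. \emph{Principal projection} and \emph{search transparency} say that the read-back $\decode{\twostate\ectx\tm} = \ectx\ctxholep\tm$ sends each principal transition to one basic root step (closed under a cut context) and each $\tomachsea$ to an equality; these are items 1 and 2 of \Cref{prop:bam-properties}, obtained by a case analysis on the fired transition. \emph{Search termination} (item 3) holds because $\tomachsea$ strictly decreases the number of cuts at the top of the active term. For \emph{Halt} (item 4) I would argue that, using the closure and clash-freeness of reachable states, no stuck configuration of the form $\twostate\ectx{\suba\mvar\val\var\tm}$, $\twostate\ectx{\dera\evar\var\tm}$, $\twostate\ectx\mvar$, or $\twostate\ectx\evar$ is reachable, so a final state must be an answer $\twostate\ectx\val$ with $\val$ not a variable, whose read-back is $\Rew{\symfont{b}\neg\wsym}$-normal. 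Finally, the \emph{diamond} clause is immediate since $\Rew{\symfont{b}\neg\wsym}$ is deterministic, hence vacuously diamond. With all five clauses in place, \Cref{thm:abs-impl} yields the three implementation properties of \Cref{def:implem}; since initial states are built from closed (proper, clash-free) terms, this is exactly the statement.

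The part demanding real work --- and, I expect, the main obstacle --- is the case analysis behind item 2 of \Cref{prop:bam-properties}, where one checks that the BAM's purely local bookkeeping on the cut context $\ectx$ reproduces, once $\ectx$ is plugged back around the active term, the \emph{at-a-distance} root rules of \reffig{esc}. The delicate transitions are $\tomachlolli$ and $\tomachbang$, whose reducts are reshaped on the fly to respect split cuts and subtractions (splitting $\lctxp\valtwo$, resp.\ $\lctxp\val$) and which introduce a fresh well-bound renaming, together with $\tomachaxeone$, which copies a renamed exponential value $\rename\exval$. To make these sound --- no variable capture when reading back, correctness of the renamings --- I would first establish the qualitative invariants of reachable BAM states of \Cref{l:bam-invariants}, namely \emph{closure} ($\fv\tm \subseteq \domp\ectx$, and likewise for the values stored in $\ectx$) and \emph{well-boundedness}, both by induction on the length of the run from an initial state via a transition-by-transition inspection; the renaming transitions are precisely the ones where well-boundedness must be re-established.
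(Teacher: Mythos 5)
Your proposal matches the paper's own route: establish the closure and well-bound invariants of reachable BAM states by induction on the run, use them to prove search transparency, principal projection (with the delicate renaming/split cases), search termination, and the halt property (where the paper invokes the harmony lemma that closed clash-free answers are normal for non-erasing basic evaluation), note that non-erasing basic evaluation is deterministic hence diamond, and conclude via the abstract big-step implementation theorem. This is essentially the same decomposition and the same key lemmas as in the paper, so the proposal is correct.
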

\section{SESAME}
\label{sect:sesame}

In this section, we extend the BAM as to perform the analogous of strong evaluation, that is, as to perform cut-elimination also inside values, and implement the good strategy. The obtained machine shall ignore garbage collection and---when it terminates---it returns a term that is cut-free \emph{up to garbage} (\refdef{cut-free-upto-gc}). Garbage collection shall be addressed in \refsect{gc-and-full}.
\begin{figure}[t!]
\centering
\footnotesize
\SetTblrInner{rowsep=0pt}
  \begin{tblr}{c}
  $\begin{tblr}{rrcl|rrclll}
%
  
    \textsc{Pools} & \pool,\pooltwo & \grameq & \emptylist \mid \namctxholep\tm\name \cons \pool
    &
  \textsc{Init.} & \compilrel\tm\state & \mbox{if} & \state=\twostate\namctxholea{\rename\tm}
    \\
    \textsc{States} & \state,\statetwo & \grameq & \twostate\namctx\pool
  &
  \textsc{Read back} & \decode{\twostate\namctx\pool} & \defeq & \namctx\namctxholep{\tm_1}{\name_1}\ldots\namctxholep{\tm_k}{\name_k}
  \\
  &&&&& \SetCell[c=3]{r} \mbox{with }\pool=\namctxholep{\tm_1}{\name_1}\cons\ldots\cons\namctxholep{\tm_k}{\name_k}\cons\emptylist

    \end{tblr}$
    \\[4pt]

$\begin{tblr}{|r|r||c||r|r|l}
\cline{1-5}
\SetCell[c=1]{c} \textsc{MCtx} & \SetCell[c=1]{c} \textsc{Pool} &\textsc{Tran.}& \SetCell[c=1]{c} \textsc{MCtx} & \SetCell[c=1]{c}\textsc{Pool}
\\\cline{1-5}
\namctx & \nctxholep{\cuta\val\var \tm}\name \cons \pool 
&\tomachseaone&
\namctx  \nctxholep{\cuta\val\var \nctxhole\name}\name &\nctxholep{\tm}\name \cons \pool
\\[2pt]
\namctx \mbox{ with }\mvar\notin\domp\namctx& \nctxholep{\suba\mvar\val\var \tm}\name \cons \pool  
&\tomachseatwo&
\namctx  \namctxholep{\suba\mvar{\nctxhole\nametwo}\var \nctxhole\name}\name &\nctxholep{\val}\nametwo \cons \nctxholep{\tm}\name  \cons \pool & *

\\[2pt]
\namctx \mbox{ with }\evar\notin\domp\namctx& \nctxholep{\dera\evar\var\tm}\name \cons \pool
&\tomachseathree&
\namctx   \nctxholep{\dera\evar\var \nctxhole\name}\name &\nctxholep{\tm}\name \cons \pool
\\[2pt]
\namctx    & \nctxholep{\la\var\tm}\name \cons \pool  
&\tomachseafour&
\namctx  \nctxholep{ \la\var\nctxhole\name}\name &\nctxholep{\tm}\name \cons \pool
\\[2pt]
\namctx    & \nctxholep{\bang\tm}\name \cons \pool  
&\tomachseafive&
\namctx  \nctxholep{ \bang\nctxhole\name}\name &\nctxholep{\tm}\name \cons \pool
\\[2pt]
\namctx \mbox{ with }\var\notin\domp\namctx  & \nctxholep{\var}\name  \cons \pool  
&\tomachseasix&
\namctx  \nctxholep\var\name &\pool

\\[4pt]
\cline[dashed]{1-5}

\namctxp{\cuta\mvartwo\mvar\namctxtwo}\mute & \nctxholep{\suba\mvar\val\var\tm}\name \cons \pool    
&\tomachaxmtwo&
\namctxp\namctxtwo\mute & \nctxholep{\suba\mvartwo\val\var\tm}\name \cons \pool
\\[2pt]

\namctxp{\cuta{\la\vartwo\lctxp\valtwo}\mvar\namctxtwo}\mute & \nctxholep{\suba\mvar\val\var \tm}\name \cons \pool    
&\tomachlolli&
\namctxp\namctxtwo\mute &\nctxholep{\cuta\val\vartwo\lctxp{\cuta\valtwo\var\tm}}\name \cons \pool
\\[2pt]

\namctxp{\cuta\evartwo\evar\namctxtwo}\mute & \nctxholep{\dera\evar\var\tm}\name \cons \pool    
&\tomachaxetwo&
\namctxp{\cuta\evartwo\evar\namctxtwo}\mute & \nctxholep{\dera\evartwo\var\tm}\name \cons \pool  
\\[2pt]

\namctxp{\cuta{\bang\lctxp\val}\evar\namctxtwo}\mute & \nctxholep{\dera\evar\var\tm}\name \cons \pool    
&\tomachbang&
\namctxp{\cuta{\bang\lctxp\val}\evar\namctxtwo}\mute & \nctxholep{\lctxtwop{\cuta{\valtwo}\var\tm}}\name \cons \pool
&\#
\\[2pt]

\namctxp{\cuta\mval\mvar\namctxtwo}\mute & \nctxholep{\mvar}\name \cons \pool  
&\tomachaxmone&
\namctxp\namctxtwo\mute &\nctxholep{\mval}\name \cons \pool
\\[2pt]

\namctxp{\cuta\exval\evar\namctxtwo}\mute & \nctxholep{\evar}\name \cons \pool  
&\tomachaxeone&
\namctxp{\cuta\exval\evar\namctxtwo}\mute & \nctxholep{\rename\exval}\name \cons \pool
\\\cline{1-5}
\end{tblr}$
\\[4pt]
\SetCell[c=1]{l}* $\nametwo$ is fresh. \ \ \ \# with $\lctxtwop\valtwo=\rename{\lctxp\val}$
\end{tblr}
\caption{The Strong Exponential Substitution Abstract Machine without Erasure (SESAME).}
\label{fig:SESAM}
\end{figure}

\subparagraph{Pools and Jobs.} The \emph{Strong Exponential Substitution Abstract Machine without Erasure} (SESAME), defined in \reffig{SESAM}, relies on a technique for strong evaluation recently introduced by Accattoli and Barenbaum \cite{DBLP:conf/aplas/AccattoliB23}, which we are now going to explain. 

When evaluation goes under binders, the closure invariant of the basic case (\reflemma{bam-invariants}.1), for which free variables of the active term are associated to a cut in the cut context, is lost. Thus, when the active term is a subtraction $\suba\mvar\val\var\tm$ with no associated cut for $\mvar$,  the subtraction is kept, and the machine has to evaluate $\val$ and $\tm$. Now, the evaluations of $\val$ and $\tm$ cannot affect each other, they are independent, but one of the two sub-terms has to be evaluated first, say $\val$. Usually, strong machines (such as Cr\'egut's \cite{DBLP:journals/lisp/Cregut07}) would run through $\val$, and if such process does terminate, producing a value $\valtwo$, then they \emph{backtrack} to the subtraction generating the fork by \emph{moving sequentially} through $\valtwo$, and then start evaluate $\tm$.

Accattoli and Barenbaum's technique simply circumvents the sequential backtracking process by directly jumping back to the forking point. For that, the machine is equipped with a \emph{pool} $\pool$ of \emph{jobs}, each one paired to a unique name $\name$. In our example, there would be a job $\namctxholep\val\name$ for $\val$ and a job $\namctxholep\tm\nametwo$ for $\tm$. The idea is that when one job ends then the machine jumps to the next one, \emph{without} moving through the structure of the finished job. 

In \cite{DBLP:conf/aplas/AccattoliB23}, the pool is any data structure satisfying a certain interface. The idea is that different data structures implementing the interface realize different job scheduling policies, compactly accounting for different strong strategies within the same framework. We here omit this abstract aspect and fix the pool to be the simplest such structure, namely a LIFO list of named jobs. In the $\lambda$-calculus, LIFO list pools implement leftmost evaluation \cite{DBLP:conf/aplas/AccattoliB23}.

Another simplification with respect to \cite{DBLP:conf/aplas/AccattoliB23} is that here jobs are simply named terms, while in \cite{DBLP:conf/aplas/AccattoliB23} they are named pairs of a term and an applicative stack. The change induces a simpler notion of read back. This difference is not a design choice, it is simply induced by having a machine for sequent calculus terms (here) rather than natural deduction terms (in \cite{DBLP:conf/aplas/AccattoliB23}).

A crucial point of SESAME is that job forking happens on subtractions only, that is, it does not happen on cuts $\cuta\val\var\tm$: the machine ignores $\val$ and goes straight to evaluate $\tm$. This happens in particular to prevent the breaking of the sub-term property. 

We write $\fn\pool$ for the set of names associated to the jobs in $\pool$, that is, if $\pool = \namctxholep{\tm_1}{\name_1}\cons\ldots\cons \namctxholep{\tm_k}{\name_k}$ then $\fn\pool \defeq \set{\name_1,\ldots,\name_k}$.

\subparagraph{Multi-Contexts and Approximants.} Another aspect of the technique in \cite{DBLP:conf/aplas/AccattoliB23} is that the parts of the term that have been evaluated and that shall not be touched again by the machine---sometimes referred to as \emph{stable parts}---are accumulated in the \emph{approximant} $\appr$ (of the normal form), which is a multi-context (that is, a context with possibly many holes, possibly none). The idea is that the name $\name$ of a job $\namctxholep{\tm}{\name}$ in the pool is associated to a (unique) named hole $\namctxholep{\cdot}{\name}$ in $\appr$, and that, whenever a stable piece of term is produced by the job $\namctxholep{\tm}{\name}$, that piece is moved to $\namctxholep{\cdot}{\name}$ in $\appr$, incrementally building the normal form.

We follow this pattern from \cite{DBLP:conf/aplas/AccattoliB23}, but our setting induces a few differences. Firstly, in \cite{DBLP:conf/aplas/AccattoliB23} the machine has also a global environment, akin to the cut context of the BAM. Here, we include the environment/cut context \emph{into} the approximant, having only one data structure. Thus, the approximant shall have cuts, but these cuts bind variables that have non-garbage occurrences only in the active term, i.e., only out of the approximant itself, where instead they have only garbage occurrences. At the end of a complete run, the approximant shall be normal for the non-erasing good strategy, i.e. it shall be cut-free up to garbage (\refdef{cut-free-upto-gc}).

Secondly, in \cite{DBLP:conf/aplas/AccattoliB23} the machine is defined using approximants, while here we define it using the weaker notion of multi-contexts, and then prove invariants guaranteeing that the multi-contexts of reachable states are approximants.

\begin{definition}[Multi-contexts and approximants]
A \emph{(named) multi-context} $\namctx$ is an ESC term in which there might be
occurrences of holes $\namctxholea$ indexed with names, as follows:
\begin{center}$     \arraycolsep=2pt
\begin{array}{r@{\hspace{.3cm}} l@{\hspace{.2cm}}|@{\hspace{.2cm}}r@{\hspace{.3cm}} lllll}
\textsc{Names} & \name,\nametwo,\namethree, \name_1, \nametwo_2, \dots
&
\textsc{Value multi-ctxs} & \namvctx & \grameq & \namctxholea \mid \var \mid \la\var\namctx \mid \bang\namctx 
\\
&&
\textsc{Multi-ctxs} & \namctx & \grameq & \namvctx \mid \cuta\namvctx\var \namctx  \mid \suba\mvar\namvctx\var\namctx \mid \dera\evar\var\namctx
\end{array}$
  \end{center}
The plugging $\namctxp{\namctxtwo}{\name}$ of $\namctxtwo$ on $\name$ in $\namctx$,
is the capture-allowing substitution of $\namctxholea$ by $\namctxtwo$ in $\namctx$. We write $\namctxp{\namctxtwo}{\mute}$ for when $\mute$ is an irrelevant name occurring exactly once in $\namctx$; this notation is meant to be used for decomposing a multi-context in two, as in $\namctx=\namctxtwop{\namctxthree}{\mute}$. 
We write $\fn{\namctx}$ for the set of names that occur in $\namctx$. Out variables and out cuts extend to multi-contexts as expected.

The domain $\domp\namctx=\set{\var_1,\ldots,\var_n}$ of a multi-context $\namctx$ contains the variables on which there is a cut in $\namctx$ (the formal definition is in \techrep{\refapp{sesame}, page \pageref{def:namctx-domain}}\camerar{Appendix E of the tech report \cite{accattoli2024imell}}).

An approximant $\appr$ is a multi-context such that: 
\begin{enumerate}
\item \emph{Unique names}: every name $\name\in\fn\appr$ has exactly one occurrence in $\appr$;
\item \emph{Out cuts are hereditary garbage and hole-free}: for every out cut $\cuta\namvctx\var\namctxtwo$ in $\appr$, if $\var\in\fv\namctxtwo$ then $\var$ occurs only inside cut values, i.e. $\var\notin\ov\namctxtwo$, and $\namvctx$ is a term.
\end{enumerate}
\end{definition}

Note that a multi-context $\namctx$ without holes is simply a term, thus the defined notion of plugging subsumes the plugging $\namctxp{\tm}{\name}$ of terms in multi-contexts.

\subparagraph{Transitions.} The principal transition of SESAME are as for the BAM up to the generalization of the cut context $\ectx$ to a multi-context $\namctx$. Note the mute name $\mute$ for decomposing the multi-context in the transitions. Clearly, the name $\mute$ on the LHS and the RHS of each transition is the same. SESAME has a search transition for each constructor of the calculus. Note that transitions $\tomachaxmone$ and $\tomachaxeone$ are no longer \emph{terminal}, because now they might be followed by a search transition, after which the run might jump to a different job.

\subparagraph{Read Back.} The read back $\decode\state$ of a state $\state = \twostate\namctx\pool$ is defined in \reffig{SESAM} and simply plugs each job $\namctxholep\tm\name$ of the pool $\pool$ in the hole of name $\name$ of the multi context $\namctx$. We also write $\namctx\ctxholep\pool$ for $\decode\state$. Since the holes of $\namctx$ are all independent and jobs contains terms (with no holes), we have that $\namctx\ctxholep\pool = \namctx\ctxholep\pooltwo$ for any pool $\pooltwo$ obtained by permuting the elements of $\pool$.

\subparagraph{Example of SESAME run.} As an example, we show a SESAME run on the following term:
\begin{center}
$\begin{array}{cccc}
\tm &\defeq& {\cuta{\bang{\la{\mvar_1}{\mvar_1}}}{\evar_1}\dera{\evar_1}{\mvar_2}\dera{\evar_1}{\mvar_3}\suba{\mvar_2}{\mvar_3}{\mvar_4}{\mvar_4}}
\end{array}$ 
\end{center}
that just applies an identity function to itself, by making two copies of $\bang{\la{\mvar_1}{\mvar_1}}$ and applying one to the other via the subtraction. To fit it into the margins, we use the abbreviation $\tmtwo \defeq \suba{\mvar_2}{\mvar_3}{\mvar_4}{\mvar_4}$.
\begin{center}$
\begin{array}{l|l|l}
  \SetCell[c=1]{c} \textsc{Tr.} & \SetCell[c=1]{c} \textsc{MCtx} & \SetCell[c=1]{c} \textsc{Pool} 
  \\\hline
  & \nctxhole\name & \nctxholep{\tm}\name \,\cons\, \emptylist \\
  \tomachseaone & \cuta{\bang{\la{\mvar_1}{\mvar_1}}}{\evar_1}\nctxhole\name & \nctxholep{\dera{\evar_1}{\mvar_2}\dera{\evar_1}{\mvar_3}\tmtwo}\name \,\cons\, \emptylist \\
  \tomachbang   &\cuta{\bang{\la{\mvar_1}{\mvar_1}}}{\evar_1}
        \nctxhole\name & \nctxholep{\cuta{\la{\mvar_5}{\mvar_5}}{\mvar_2}\dera{\evar_1}{\mvar_3}\tmtwo}\name \,\cons\, \emptylist \\
  \tomachseaone & \cuta{\bang{\la{\mvar_1}{\mvar_1}}}{\evar_1}
        \cuta{\la{\mvar_5}{\mvar_5}}{\mvar_2}\nctxhole\name & \nctxholep{\dera{\evar_1}{\mvar_3}\tmtwo}\name \,\cons\, \emptylist \\
  \tomachbang   & \cuta{\bang{\la{\mvar_1}{\mvar_1}}}{\evar_1}
        \cuta{\la{\mvar_5}{\mvar_5}}{\mvar_2}\nctxhole\name & \nctxholep{\cuta{\la{\mvar_6}{\mvar_6}}{\mvar_3}\tmtwo}\name \,\cons\, \emptylist \\
  \tomachseaone & \cuta{\bang{\la{\mvar_1}{\mvar_1}}}{\evar_1}
        \cuta{\la{\mvar_5}{\mvar_5}}{\mvar_2}\cuta{\la{\mvar_6}{\mvar_6}}{\mvar_3}\nctxhole\name & \nctxholep{\tmtwo}\name \,\cons\, \emptylist \\
  \tomachlolli   & \cuta{\bang{\la{\mvar_1}{\mvar_1}}}{\evar_1}
        \cuta{\la{\mvar_6}{\mvar_6}}{\mvar_3}\nctxhole\name & \nctxholep{\cuta{{\mvar_3}}{\mvar_5}\cuta{{\mvar_5}}{\mvar_4}{\mvar_4}}\name \,\cons\, \emptylist \\
  \tomachseaone & \cuta{\bang{\la{\mvar_1}{\mvar_1}}}{\evar_1}
        \cuta{\la{\mvar_6}{\mvar_6}}{\mvar_3}\cuta{{\mvar_3}}{\mvar_5}\nctxhole\name & \nctxholep{\cuta{{\mvar_5}}{\mvar_4}{\mvar_4}}\name \,\cons\, \emptylist \\
  \tomachseaone & \cuta{\bang{\la{\mvar_1}{\mvar_1}}}{\evar_1}
        \cuta{\la{\mvar_6}{\mvar_6}}{\mvar_3}\cuta{{\mvar_3}}{\mvar_5}\cuta{{\mvar_5}}{\mvar_4}\nctxhole\name & \nctxholep{{\mvar_4}}\name \,\cons\, \emptylist \\
  \tomachaxmone & \cuta{\bang{\la{\mvar_1}{\mvar_1}}}{\evar_1}\cuta{\la{\mvar_6}{\mvar_6}}{\mvar_3}\cuta{{\mvar_3}}{\mvar_5}\nctxhole\name & \nctxholep{{\mvar_5}}\name \,\cons\, \emptylist \\
  \tomachaxmone & \cuta{\bang{\la{\mvar_1}{\mvar_1}}}{\evar_1}\cuta{\la{\mvar_6}{\mvar_6}}{\mvar_3}\nctxhole\name & \nctxholep{{\mvar_3}}\name \,\cons\, \emptylist \\
  \tomachaxmone & \cuta{\bang{\la{\mvar_1}{\mvar_1}}}{\evar_1}\nctxhole\name & \nctxholep{\la{\mvar_6}{\mvar_6}}\name \,\cons\, \emptylist \\
  \tomachseafour & \cuta{\bang{\la{\mvar_1}{\mvar_1}}}{\evar_1}\la{\mvar_6}\nctxhole\name & \nctxholep{{\mvar_6}}\name \,\cons\, \emptylist \\
  \tomachseasix & \cuta{\bang{\la{\mvar_1}{\mvar_1}}}{\evar_1}\la{\mvar_6}{\mvar_6} & \emptylist \\
\end{array}
$\end{center}
Note that in the last state the cut $\cuta{\bang{\la{\mvar_1}{\mvar_1}}}{\evar_1}$ is garbage but it is not removed. This shall be taken care of by an extra garbage collection phase described in \refsect{gc-and-full}.

%

\section{Analysis of SESAME}
\label{sect:analysis}

\subparagraph{Invariants.} As for the BAM, we prove some invariants. The first one states that the multi context of a reachable state is an approximant. A second invariant states that names in the multi context are exactly those in the pool, where they have exactly one occurrence each. Then, we need a well-bound invariant about binders, relying on a technical definition in \techrep{\refapp{sesame}, page \pageref{def:well-bound-state}}\camerar{Appendix E of the tech report \cite{accattoli2024imell}}, along the lines of the invariant of the BAM. Lastly, the contextual decoding invariant, the most important and sophisticated invariant, states that a reachable state less the first job reads back to a good context. For the invariant to hold, the statement has to be generalized in a technical way, analogously to similar invariants in \cite{DBLP:conf/aplas/AccattoliB23}.

\begin{toappendix}
\begin{lemma}[SESAME qualitative invariants]
\NoteProof{l:sesam-invariants}
Let $\state=\twostate\namctx\pool$ be a SESAME state reachable from a well-bound initial term $\tm_0$.\label{l:sesam-invariants} %
\begin{enumerate}
\item \emph{Approximant}: $\namctx$ is an approximant.

\item \emph{Names}: jobs in the pool $\pool$ have pairwise distinct names, and $\fn\namctx=\fn\pool$.

\item 
\emph{Well-bound}: $\state$ is a well-bound state.


\item \emph{Contextual decoding}: if $\pool=\namctxholep{\tm_1}{\name_1}\cons\ldots\cons\namctxholep{\tm_k}{\name_k}$ has length $k\geq1$ then: 
\begin{center}
$\namctx^\state_{\tmtwo_1,\ldots,\tmtwo_{i-1}|\name_i|\tmtwo_{i+1},\ldots,\tmtwo_k} 
\defeq 
\namctx\namctxholep{\tmtwo_1}{\name_1}\ldots\namctxholep{\tmtwo_{i-1}}{\name_{i-1}}\namctxholep{\ctxhole}{\name_{i}}\namctxholep{\tmtwo_{i+1}}{\name_{i+1}}\ldots\namctxholep{\tmtwo_k}{\name_k}$
\end{center} 
is a good context for $i\in\set{1,\ldots,k}$ and any terms $\tmtwo_1,\ldots,\tmtwo_{i-1},\tmtwo_{i+1},\ldots,\tmtwo_k$ such that $\namctx^\state_{\tmtwo_1,\ldots,\tmtwo_{i-1}|\name_i|\tmtwo_{i+1},\ldots,\tmtwo_k}$ is a term.
\end{enumerate}
\end{lemma}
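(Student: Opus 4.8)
I would prove the four items \emph{simultaneously} by induction on the length of the run $\compilrel{\tm_0}{\statetwo_0}\tomach^*\state$ that witnesses reachability, since the items feed into each other (the \emph{Approximant} invariant uses \emph{Names}, and \emph{Contextual decoding} uses \emph{Approximant} and \emph{Well-bound}). For the base case, $\state=\statetwo_0$ has a single named hole as multi-context and a singleton pool containing (a well-bound renaming of) $\tm_0$: then the multi-context is an approximant with no out cuts and one name, \emph{Names} and \emph{Well-bound} are immediate, and the unique context to remove from $\statetwo_0$ is $\ctxhole$, which is good because $\ctxhole$ is a good value context. For the inductive step I take $\state\tomach\statetwo$, assume the four items for $\state$, and do a case analysis on the transition used, i.e. on the six search transitions $\tomachseaone,\dots,\tomachseasix$ and the principal transitions $\tomachaxmtwo,\tomachlolli,\tomachaxetwo,\tomachbang,\tomachaxmone,\tomachaxeone$ of \reffig{SESAM}.

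\textbf{The routine items.} \emph{Names} and \emph{Well-bound} are bookkeeping: the only transition introducing a new name is $\tomachseatwo$ (a fresh $\nametwo$), so pairwise distinctness survives, and $\fn{\namctx}=\fn\pool$ is maintained because every search transition moves the constructor layer surrounding a job into the multi-context while keeping the correspondence between the job's name and its hole; well-boundness is preserved because the only transitions creating new binders, $\tomachbang$ and $\tomachaxeone$, do so through fresh well-bound renamings, and no other transition duplicates a binder. For \emph{Approximant}, unique names follows from \emph{Names}; the only delicate point is the out-cut condition, and the key observation is that a cut is inserted \emph{into} the multi-context only by $\tomachseaone$, which wraps $\cuta\val\var\ctxhole$ around a hole: there $\val$ is a genuine (hole-free) term by the induction hypothesis, and the bound variable $\var$ has \emph{no} occurrence inside the multi-context at all (all its occurrences live in the job that fills the hole, i.e. outside the approximant), so the hereditary-garbage condition is vacuous. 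The transitions $\tomachseatwo$, $\tomachseathree$, $\tomachseasix$, which could otherwise move a fresh occurrence of a cut-bound variable into the multi-context outside a cut value, all carry the guard ``$\var\notin\domp\namctx$'' (respectively for $\mvar$, $\evar$), which excludes exactly this; the principal transitions that delete cuts from the multi-context ($\tomachaxmtwo$, $\tomachlolli$, $\tomachaxmone$) only make the condition easier to satisfy.

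\textbf{The main item and the main obstacle.} The real work is \emph{Contextual decoding}, and, as remarked in the text, it must be kept in the generalized form stated, quantifying over which job is removed and over the terms plugged into the remaining holes; without this strengthening the induction does not close, exactly as for the analogous invariants of Accattoli and Barenbaum \cite{DBLP:conf/aplas/AccattoliB23}. The plan is to show that each transition modifies the multi-context only \emph{locally}, and that the edit turns each good context $\namctx^\state_{\cdots|\name_i|\cdots}$ into a good context $\namctx^\statetwo_{\cdots|\name_i|\cdots}$. Concretely: a search transition refines the hole of the active job by exactly one production of the good (value) contexts of \reffig{strategy} --- $\la\var\gctx$ for $\tomachseafour$, $\bang\gctx$ for $\tomachseafive$, $\dera\evar\var\gctx$ for $\tomachseathree$, $\cuta\val\var\gctx$ for $\tomachseaone$ (whose side condition $\var\notin\dfv\gctx$ is vacuous since here $\gctx=\ctxhole$ and $\dfv\ctxhole=\emptyset$), $\suba\mvar\vgctx\var\tm$ and $\suba\mvar\val\var\gctx$ for the forking transition $\tomachseatwo$, and $\tomachseasix$ closes a hole with a variable, producing a good context with one fewer hole. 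The one subtle phenomenon is that refining a hole can \emph{enlarge} the dominating-variable set of the contexts that enclose it (e.g. $\tomachseatwo$ adds $\mvar$), which could invalidate the side condition $\var\notin\dfv\gctx$ of an enclosing $\cuta\val\var\gctx$ node; this is prevented precisely by the guards $\mvar\notin\domp\namctx$, $\evar\notin\domp\namctx$, $\var\notin\domp\namctx$ of $\tomachseatwo$, $\tomachseathree$, $\tomachseasix$ together with the \emph{Approximant} invariant, ensuring no cut on such a variable sits above in the multi-context. For the principal transitions, $\tomachaxetwo$, $\tomachbang$, $\tomachaxeone$ only rewrite inside a job and leave the multi-context unchanged, while $\tomachaxmtwo$, $\tomachlolli$, $\tomachaxmone$ delete a cut node from the multi-context, which removes a good-context production (and, since its side condition forced the removed binder out of the relevant dominating variables, cannot raise them), hence preserves goodness of every $\namctx^\statetwo_{\cdots|\name_i|\cdots}$; the parts of a job modified by these steps stay inside the pool and do not affect the multi-context. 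The hard part, and the place where care is needed, is getting the generalized \emph{Contextual decoding} statement right and checking, transition by transition, that the local edit keeps every hole reachable by a cut-value-free path and compatible with all $\dfv$-side-conditions --- in particular confirming that the $\domp\namctx$ guards on the search transitions are exactly as strong as this argument requires.
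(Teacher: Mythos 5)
Your proposal is correct and takes essentially the same route as the paper's proof: induction on the length of the run with a case analysis on the transitions, the generalized contextual-decoding statement, and the same two key facts that the paper isolates as auxiliary lemmas (inner extension of good contexts, where the $\domp\namctx$ guard is exactly what handles the subtraction-value hole created by $\tomachseatwo$, and linear pruning of cuts preserving approximants and good contexts for the multiplicative principal transitions). The only slight imprecision is that for goodness only $\tomachseatwo$ actually enlarges the dominating set (the guards of $\tomachseathree$ and $\tomachseasix$ matter for the approximant invariant rather than for the $\dfv\gctx$ side conditions), which does not affect the argument.
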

\end{toappendix}

\subparagraph{Implementation Theorem.} The well-bound and contextual decoding invariants are used to prove principal projection. The approximant and names invariants are used to prove the halt property. Search transparency and termination are straightforward. Then we apply the abstract implementation theorem (\refthm{abs-impl}), and obtain the implementation theorem.
\begin{toappendix}
\begin{proposition}
\label{prop:sesam-properties}
\NoteProof{prop:sesam-properties}
Let $a\in\set{\axmone,\axmtwo, \axeone,\axetwo,\lolli,\bang}$.
\begin{enumerate}
\item \emph{Search transparency}: if $\state \tomachsea \statetwo$ then $\decode\state=\decode\statetwo$.
\item \emph{Principal projection}: if $\state \tomachhole{a} \statetwo$ then $\decode\state \togp{a} \decode\statetwo$.
\item \emph{Search termination}: transition $\tomachsea$ terminates.
\item \emph{Halt}: if $\state$ is final then $\state=\twostate\tm\emptylist$ and $\decode\state=\tm$ is cut-free up to garbage.
\end{enumerate}
\end{proposition}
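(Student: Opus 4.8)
We prove the four points; the diamond property of $\tognotw$ (inherited from that of the good strategy, \refthm{sub-term-prop}.2) then makes SESAME and $\tognotw$ a lax implementation system (\refdef{implementation}), so that \refthm{abs-impl} gives that SESAME is a big-step implementation of $\tognotw$. Throughout we freely use the SESAME invariants (\reflemma{sesam-invariants}) and the fact, recalled above, that every reachable state is clash-free. \emph{Search transparency and termination (Points 1 and 3)} are by inspection of the six search transitions. For Point 1: each of $\tomachseaone$--$\tomachseasix$ moves the topmost constructor of the first job from the pool into the multi-context, leaving its immediate subterm(s) as new job(s) bearing the appropriate name(s); since read back plugs each job into the named hole it comes from, unfolding $\decode\cdot$ on the source and target of the transition gives the very same term---in the forking transition $\tomachseatwo$ the freshness of the new name $\nametwo$ keeps the two resulting holes independent, so the equality still holds. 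For Point 3: the total number of constructors occurring in the jobs of the pool strictly decreases along every search transition ($\tomachseaone$--$\tomachseafive$ drop one head constructor; $\tomachseasix$ discards a one-variable job), so $\tomachsea$ admits no infinite run.

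\emph{Principal projection (Point 2).} Fix a transition $\state \tomachhole{a} \statetwo$. By the contextual decoding invariant (\reflemma{sesam-invariants}.4), removing the first job from $\state$ yields a good context $\gctx$, and $\decode\state = \gctx\ctxholep{\tm_1}$ where $\tm_1$ is the first job. In each of the six cases $\tm_1$ is headed by the occurrence (a variable, a subtraction, or a dereliction) acted upon by the transition, and $\gctx$ contains, above its hole, the unique cut on the relevant variable (uniqueness by the well-bound invariant, \reflemma{sesam-invariants}.3); hence the fired redex has position exactly $\gctx$, which is good. Unfolding $\decode\cdot$ on $\statetwo$ shows that $\decode\statetwo$ is $\decode\state$ with that redex fired according to the rule $a$; the well-bound invariant makes read back capture-free, so the redex is genuinely the calculus redex it appears to be, and the on-the-fly renamings of $\tomachbang$ and $\tomachaxeone$ are absorbed since terms are taken up to $\alpha$. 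Therefore $\decode\state \togp{a} \decode\statetwo$.

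\emph{Halt (Point 4).} Let $\state = \twostate\namctx\pool$ be final and suppose, for contradiction, that $\pool$ is non-empty; inspect the head constructor of the first job. A cut enables $\tomachseaone$, an abstraction enables $\tomachseafour$, a promotion enables $\tomachseafive$. If it is a subtraction, a dereliction, or a variable, let $\var$ be the variable it acts upon ($\mvar$, $\evar$, or the variable itself): if $\var \notin \domp\namctx$ then one of $\tomachseatwo, \tomachseathree, \tomachseasix$ applies; if $\var \in \domp\namctx$ then, by well-boundness, there is a unique cut on $\var$ in $\namctx$, whose left value must be of the kind forced by $\var$ (multiplicative or exponential), since a mismatch would be a clash and $\decode\state$ is clash-free---so that one of $\tomachaxmtwo, \tomachlolli, \tomachaxetwo, \tomachbang, \tomachaxmone, \tomachaxeone$ applies. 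Thus a transition is always available, contradicting finality; hence $\pool = \emptylist$, so by the names invariant $\namctx$ is hole-free, i.e. a term $\tm$ with $\decode\state = \tm$, and by the approximant invariant (\reflemma{sesam-invariants}.1) $\tm$ is an approximant, whose clause on out cuts is precisely the statement that $\tm$ is cut-free up to garbage (\refdef{cut-free-upto-gc}).

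\emph{Expected main obstacle.} Points 1, 3, and 4 are routine case analyses. The real work is behind Point 2, and it is not in the projection argument above but in the contextual decoding invariant on which it rests: maintaining that invariant through the search transitions is what forces the guards $\mvar \notin \domp\namctx$ in $\tomachseatwo$ (the machine counterpart of the side condition $\var \notin \dfv\gctx$ in the grammar of good contexts), and it requires the somewhat technical generalized formulation with arbitrary terms allowed in the other named holes, exactly as for the analogous invariants of \cite{DBLP:conf/aplas/AccattoliB23}.
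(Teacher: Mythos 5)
Your proof is correct and follows essentially the same route as the paper's: direct unfolding of the read back for search transparency, a decreasing count of pool constructors for termination, the well-bound plus contextual decoding invariants for principal projection (the position of the fired redex being exactly the good context obtained by removing the first job), and the names/approximant invariants (with clash-freeness covering the case analysis on the head of the first job) for the halt property. Your closing remark correctly locates the real work in maintaining the contextual decoding invariant, which is indeed where the paper's effort lies (in the proof of the invariants lemma rather than of this proposition).
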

\end{toappendix}

\begin{theorem}[SESAME is good]
\label{thm:sesame-is-good}
SESAME is a big-step implementation of the non-erasing good strategy $\tognotw$ of ESC.
\end{theorem}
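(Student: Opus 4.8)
The plan is to reduce the statement to the abstract machinery of \refsect{prel-abs-mach}: I will argue that SESAME together with the non-erasing good strategy $\tognotw$ constitutes a \emph{lax implementation system} in the sense of \refdef{implementation}, and then invoke the abstract big-step implementation theorem (\refthm{abs-impl}) to conclude that SESAME is a big-step implementation of $\tognotw$ (\refdef{implem}). Thus the proof amounts to checking, for every reachable SESAME state, the five clauses of \refdef{implementation}; the substantive work is delegated to the invariants (\reflemma{sesam-invariants}) and to \refprop{sesam-properties}, which I may assume.

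Three of the clauses are immediate from \refprop{sesam-properties}: \emph{search transparency} is item~1, \emph{search termination} is item~3, and for \emph{principal projection} item~2 gives $\decode\state \togp{a}\decode\statetwo$ whenever $\state \tomachhole{a}\statetwo$ for $a\in\set{\axmone,\axmtwo,\axeone,\axetwo,\lolli,\bang}$; since SESAME --- the machine \emph{without Erasure} --- has no $\wsym$ transition, these exhaust its principal transitions, so $\state \tomachpr \statetwo$ always yields $\decode\state \tognotw \decode\statetwo$, as required. For \emph{halt}, \refprop{sesam-properties}.4 says that a final state is of the form $\twostate\tm\emptylist$ with $\decode\state=\tm$ cut-free up to garbage; it remains to observe that such a $\tm$ is in fact $\tognotw$-normal. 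I would argue this using the $\domp\namctx$-side-conditions of the search transitions: SESAME never terminates a job at an occurrence of a variable that still carries a cut in the multi-context, so in a final state every cut-bound variable has been consumed outside of cut values; since good contexts never enter cut values, no good redex (erasing or not) survives, hence $\tm$ is $\tognotw$-normal.

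The remaining clause is \emph{diamond} of $\tognotw$, which I would derive from the diamond of $\tog$ (\refthm{sub-term-prop}.2): a local peak $\tmtwo_1 \lRew{\Gsym} \tm \tognotw \tmtwo_2$ with $\tmtwo_1 \neq \tmtwo_2$ is closed, via \refthm{sub-term-prop}.2, by $\tmtwo_1 \tog \tmthree \lRew{\Gsym} \tmtwo_2$, and the two closing good steps are again non-erasing because the only erasing rule is $\wsym$, whose firing needs a cut-bound variable to have no occurrence, whereas the residual of a non-erasing good redex retains its acted-upon variable occurrence (possibly duplicated or relocated). Having verified all five clauses, \refthm{abs-impl} applies and delivers the theorem. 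The assembly above is routine; the real obstacle --- handled in the appendix, not here --- is the \emph{contextual decoding} invariant of \reflemma{sesam-invariants}, which must be stated in a sufficiently generalized form (quantifying over all fillings of the non-active jobs, as in \cite{DBLP:conf/aplas/AccattoliB23}) to be stable under every transition, and which is precisely what forces principal projection onto \emph{good} steps rather than arbitrary ESC steps.
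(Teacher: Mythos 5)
Your proposal takes exactly the paper's route: verify the lax-implementation-system clauses for SESAME and $\tognotw$ --- search transparency, principal projection onto good steps of the non-erasing kinds, search termination, halt, and diamond --- using the SESAME invariants and properties, and then conclude by the abstract big-step implementation theorem; the two glue facts you make explicit (that a term cut-free up to garbage is $\tognotw$-normal, and that the diamond diagrams for the good strategy preserve non-erasingness) are precisely what the paper leaves implicit at this point and only spells out later, so your assembly is sound. One parenthetical is wrong, though harmless for the theorem: in a final state \emph{erasing} good redexes can perfectly well survive (they are exactly what the final garbage-collection phase removes, since a term cut-free up to garbage but not cut-free always contains an out cut that is a $\togp\wsym$-redex), so you should only claim that no \emph{non-erasing} good redex survives --- which is all the halt clause for $\tognotw$ requires, and which your "good contexts never enter cut values" argument does establish.
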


\subsection{Complexity Analysis} 
The complexity analysis of SESAME amounts to bound the cost of implementing a run $\run:\compilrel\tm\state \tosesame^*\statetwo$ on random access machines (RAMs) as a function of two parameters: the number $\sizepr\run$ of principal transitions in $\run$ (which are in bijection with the number of $\tognotw$ steps) and the size $\size\tm$ of the initial term/state $\tm$/$\state$. 

\subparagraph{Sub-Term Property.} The analysis relies on the \emph{sub-term property}, ensuring that the duplicating principal transitions $\tomachbang$ and $\tomachaxeone$ manipulate only sub-terms of the initial term. Therefore, the cost of duplications is connected to size of the initial term. The property for SESAME can be inferred by the one for the good strategy, but we prefer to give a direct simple proof. The statement of the related invariant is about cut values, not duplications, but note that such values are the only terms duplicated by SESAME. 
\begin{toappendix}
\begin{lemma}[Sub-term] 
\NoteProof{l:subterm-invariant}
	Let $\run \colon \compilrel\tm\state \tosesame^* \twostate\namctx\pool$ be a SESAME run.\label{l:subterm-invariant}
	
	\begin{enumerate}
	\item \emph{Invariant}: if $\val$ is a cut value in $\namctx$ or a value in $\pool$ then $\size\val \leq\size\tm$. 
	\item \emph{Property}: if $\val$ is a value duplicated along $\run$ (by $\tomachbang$ or $\tomachaxeone$) then  $\size\val \leq\size\tm$. 
	\end{enumerate}
\end{lemma}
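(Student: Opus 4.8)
The plan is to prove the \emph{Invariant}~(1) by induction on the length of the run $\run$, and then get the \emph{Property}~(2) as an immediate corollary. Indeed, the only duplicating transitions are $\tomachbang$ and $\tomachaxeone$, and in both the duplicated value is (a renaming of a sub-term of) a cut value of $\namctx$, so its size is bounded by applying the Invariant to the state right before the transition: $\tomachaxeone$ copies the cut value $\exval$, whence $\size{\rename\exval}=\size\exval\leq\size\tm$, and $\tomachbang$ copies $\lctxp\val$, a sub-term of the cut value $\bang\lctxp\val$, so the duplicated value part $\valtwo$ of $\lctxtwop\valtwo=\rename{\lctxp\val}$ satisfies $\size\valtwo\leq\size{\lctxp\val}<\size{\bang\lctxp\val}\leq\size\tm$.

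For the induction, the key point is that~(1) as stated is not quite self-supporting and has to be strengthened. The invariant I would carry is: \emph{for every reachable state $\twostate\namctx\pool$ of a run started on $\tm$, every value occurring as the left immediate sub-term of a cut or of a subtraction --- whether inside $\namctx$ or inside a job of $\pool$ --- has size $\leq\size\tm$; and every job of $\pool$ whose term is a value has size $\leq\size\tm$ as well.} This entails the two stated bounds. The base case is the initial state $\twostate\namctxholea{\rename\tm}$: its multi-context has no cut and no subtraction, its unique job is $\rename\tm$, and every cut/subtraction value of $\rename\tm$ (as well as $\rename\tm$ itself, if a value) is a renaming of a sub-term of $\tm$, hence of size $\leq\size\tm$ since $\alpha$-renaming preserves size.

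For the inductive step I would go through the transitions of \reffig{SESAM}. The search transitions $\tomachseaone$--$\tomachseasix$ only relocate sub-terms between the active job and $\namctx$ and create no new cut- or subtraction-\emph{value}: $\tomachseaone$ moves a cut value living inside a job up to $\namctx$ (still bounded by the induction hypothesis); the subtraction that $\tomachseatwo$ adds to $\namctx$ carries a hole, not a value, in its left slot, so it is not counted, while the two jobs it spawns hold $\val$ and $\tm$, both sub-terms of the previous active job; and the jobs exposed by $\tomachseafour$ and $\tomachseafive$ only shrink. The principal transitions $\tomachaxmtwo$, $\tomachaxetwo$, $\tomachaxmone$ and $\tomachaxeone$ merely rename a cut variable or replace a variable occurrence by a (possibly renamed) copy of a cut value, hence enlarge no monitored value. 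The two informative cases are $\tomachlolli$ and $\tomachbang$. In $\tomachlolli$ the cut value $\la\vartwo\lctxp\valtwo$ is consumed and the new active job $\cuta\val\vartwo\lctxp{\cuta\valtwo\var\tm}$ gains the two fresh cuts together with the cuts and subtractions sitting inside $\lctx$; here $\val$ was the subtraction value of the previous active job (bounded by the hypothesis), while $\valtwo$ and the values inside $\lctx$ are sub-terms of $\la\vartwo\lctxp\valtwo$, hence of size $<\size{\la\vartwo\lctxp\valtwo}\leq\size\tm$. In $\tomachbang$ the cut value $\bang\lctxp\val$ is left untouched in $\namctx$, while the active job becomes $\lctxtwop{\cuta{\valtwo}\var\tm}$ with $\size{\lctxtwop\valtwo}=\size{\lctxp\val}<\size{\bang\lctxp\val}\leq\size\tm$, so the new cut value $\valtwo$ and every cut- or subtraction-value inside $\lctxtwo$ stays below $\size\tm$. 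In both transitions the values inherited from the pre-existing $\tm$ were already bounded because $\tm$ lived inside a job. This closes the induction, and~(2) follows as explained above.

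The main obstacle is pinning down the right generalization: one must track \emph{subtraction} values, and inside jobs rather than only in $\namctx$, because $\tomachlolli$ turns a subtraction value buried in a job into a brand-new cut value. Once the invariant is phrased this way every transition case is routine bookkeeping and, unlike the contextual-decoding invariant, it requires no delicate reasoning about the shape of $\namctx$.
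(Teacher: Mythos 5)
Your overall architecture is the same as the paper's (induction on the length of the run, case analysis on the last transition, with point~2 obtained as a corollary because the only duplicated values are cut values of $\namctx$), and your treatment of $\tomachlolli$ and $\tomachbang$ is the right one. The problem is your re-reading of point~1. In the lemma, ``a value in $\pool$'' means \emph{any} value occurring in a job of the pool; that is already the self-supporting strengthening (the paper's proof remarks that the invariant is stated for all values of $\pool$, and not just cut ones, precisely because $\tomachlolli$ turns a subtraction value of a job into a cut value). Your replacement invariant --- left sub-terms of cuts and subtractions, plus jobs whose whole term is a value --- is strictly weaker, and with it the induction does not close.

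Concretely, the step that fails is the maintenance of your clause ``every job whose term is a value has size $\leq\size\tm$'' under the search transitions $\tomachseaone$, $\tomachseatwo$, $\tomachseathree$: the newly exposed job $\tmtwo$ (the right sub-term of the cut, subtraction or dereliction) may itself be an abstraction or a promotion, and your induction hypothesis says nothing about it, since it is neither a cut/subtraction value nor the whole previous job (which is a cut, a subtraction or a dereliction, hence not a value). Your justification that it is ``a sub-term of the previous active job'' does not give the bound: active jobs grow beyond $\size\tm$ along a run (each $\tomachbang$, $\tomachaxeone$ or $\tomachlolli$ step adds a copy of a cut value or new cuts to the job), so being a sub-term of the current job does not imply having size at most $\size\tm$. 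The repair is exactly the invariant as stated in the lemma: every value occurring anywhere in the pool is bounded by $\size\tm$. That version is immediately preserved by all transitions, because they only relocate or rename (copies of) such values and never make one grow --- new cuts are always inserted at the spine of a job, never inside a value --- which is the one-line argument the paper uses.
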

\end{toappendix}

Since search transitions decrease the number of term constructors in the pool, which is only increased by principal transitions and of a quantity bounded by $\size\tm$ (by the sub-term invariant), we obtain the following bound.
\begin{toappendix}
\begin{lemma}[Search transitions are bi-linear in number] 
\NoteProof{l:search-is-bilinear}
Let $\run \colon \compilrel{\tm}\state \tosesame^* \statetwo$ be a SESAME run. Then $\sizesea\run\leq \size\tm\cdot(\sizepr\run+1)$.
 \label{l:search-is-bilinear}
\end{lemma}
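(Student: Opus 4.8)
The plan is to track along the run a single integer measure, namely the number $\|\pool\|$ of term constructors occurring in the jobs of the pool: for $\pool = \namctxholep{\tm_1}{\name_1}\cons\cdots\cons\namctxholep{\tm_k}{\name_k}\cons\emptylist$ we set $\|\pool\| \defeq \size{\tm_1} + \cdots + \size{\tm_k}$. I would establish four facts: \emph{(i)} on the initial state of $\run$ one has $\|\pool\| = \size\tm$; \emph{(ii)} $\|\pool\| \geq 0$ always; \emph{(iii)} each search transition decreases $\|\pool\|$ by at least $1$; \emph{(iv)} each principal transition increases $\|\pool\|$ by at most $\size\tm$. Granting these, write $\run$ as $\state_0 \tosesame \cdots \tosesame \state_n = \statetwo$ and let $\pool_i$ be the pool of $\state_i$; telescoping the per-step bounds of \emph{(iii)} and \emph{(iv)} and using \emph{(i)}--\emph{(ii)} gives
\[
0 \;\leq\; \|\pool_n\| \;\leq\; \|\pool_0\| - \sizesea\run + \size\tm\cdot\sizepr\run \;=\; \size\tm - \sizesea\run + \size\tm\cdot\sizepr\run ,
\]
which rearranges to $\sizesea\run \leq \size\tm\cdot(\sizepr\run+1)$.

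Facts \emph{(i)} and \emph{(ii)} are immediate: by initialization the initial pool consists of a single job carrying the term $\rename\tm$, and renamings preserve size; and $\|\pool\|$ is always a sum of non-negative sizes. Fact \emph{(iii)} is a routine inspection of the six search rules of \reffig{SESAM}: $\tomachseaone$ and $\tomachseatwo$ strip, respectively, the leading cut and subtraction constructor of the active job; $\tomachseathree$, $\tomachseafour$ and $\tomachseafive$ strip, respectively, its leading dereliction, abstraction and promotion constructor; and $\tomachseasix$ deletes the active job $\var$ from the pool altogether. In each case at least one constructor leaves the pool: the constructors stripped by $\tomachseaone$--$\tomachseafive$ are relocated into the multi-context rather than into another job, and $\tomachseatwo$, although it splits the active job into two, still discards the subtraction constructor.

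Fact \emph{(iv)} is the only point requiring more than bookkeeping. Transitions $\tomachaxmtwo$ and $\tomachaxetwo$ merely rename a bound variable in the active job, so $\|\pool\|$ is unchanged. For each of the remaining four principal transitions $\tomachaxmone$, $\tomachaxeone$, $\tomachbang$, $\tomachlolli$, the active job is replaced by a new job assembled from the old one together with (a renaming of) the cut value that the transition reads off the multi-context; a direct size computation in each of the four cases shows that the new job exceeds the old one in size by at most the size of that cut value (in fact by strictly less, except for $\tomachlolli$, where the excess is exactly that size). Since the source state of the transition is reachable, the sub-term invariant (\reflemma{subterm-invariant}.1) applied to the prefix run leading to it bounds the size of every cut value of its multi-context by $\size\tm$, whence the claim. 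The main obstacle is therefore just to carry out these four size computations carefully, respecting the split shape of cuts and subtractions and the on-the-fly splitting of the promotion/abstraction body in the $\tomachbang$ and $\tomachlolli$ transitions; the rest is elementary arithmetic.
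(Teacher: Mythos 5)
Your proof is correct and follows essentially the same approach as the paper's: the paper's (much terser) argument likewise observes that search transitions strictly decrease the total size of the pool, that only principal transitions increase it, by at most the size of a cut value, which the sub-term invariant bounds by $\size\tm$, with the $+1$ accounting for the initial pool of size $\size\tm$. Your version simply makes the amortized/telescoping bookkeeping and the per-transition case analysis explicit.
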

\end{toappendix}

\subparagraph{Cost of Single Transitions.} For bounding the total cost, we need to make some hypotheses on how SESAME is going to be itself implemented on RAMs:
  \begin{enumerate}
    \item \emph{Variable occurrences, binders, and cuts}:  a variable is a memory location, a variable occurrence is a reference to it, and a cut $\cuta\val\var$ is the fact that the location associated with $\var$ contains $\val$;
    \item \emph{Random access to variables}: the cuts in $\namctx$ can be accessed in $\bigo(1)$ by just following the reference given by the variable occurrence, with no need to search through $\namctx$;
    \item \emph{Named holes and jobs}: a named hole $\namctxholea$ is again a memory location $\name$, and a job $\namctxholep\tm\name$ is a pointer to $\name$ and expresses the fact that location $\name$ contains $\tm$;
    \item \emph{Sequences of left rules}: cuts have a back pointer to the constructors before them (for instance in $\dera\evar\mvar\cuta\val\var\suba\mvar\valtwo\vartwo\tm$ the cut $\cuta\val\var$ has a pointer to $\dera\evar\mvar$) so that the removal of a cut from the multi context $\namctx$ of a state---needed for the multiplicative transitions $\tomachaxmone$, $\tomachaxmtwo$, and $\tomachlolli$---can be performed in $\bigo(1)$.
  \end{enumerate}
As it is standard for time analyses, we also assume that pointers can be managed in $\bigo(1)$. These hypotheses mimic similar ones behind machines for $\lambda$-calculi, which are shown to be implementable in OCaml by Accattoli and Barras \cite{DBLP:conf/ppdp/AccattoliB17} and Accattoli et al. \cite{DBLP:conf/lics/AccattoliCC21}, and are followed by the OCaml implementation outlines in \refsect{ocaml}. They allow us to consider search transitions as having constant cost, and principal transitions as having cost bound by the initial term, by the sub-term property.

\begin{toappendix}
\begin{lemma}[Cost of single transitions]
\NoteProof{l:cost-single-trans}
Let $\run \colon \compilrel{\tm}\state \tosesame^* \statetwo$ be a SESAME run. Search (resp. principal) transitions of $\run$ are implementable in $\bigo(1)$ (resp. $\bigo(\size\tm)$).\label{l:cost-single-trans}
\end{lemma}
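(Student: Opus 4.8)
The plan is to go transition by transition, in each case exhibiting the constant number of pointer operations (reads, writes, allocations) needed to realise it under the four implementation hypotheses listed above, and then invoke the sub-term invariant (\reflemma{subterm-invariant}) for the two duplicating transitions to bound the single cost they are \emph{not} $\bigo(1)$, namely the copy.

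\textbf{Search transitions.} For $\tomachseaone$, $\tomachseatwo$, $\tomachseathree$, $\tomachseafour$, $\tomachseafive$ the machine inspects the head constructor of the active term of the first job, allocates one (or, for $\tomachseatwo$, two) fresh named holes, rewires the back pointer for the newly exposed constructor, and pushes the sub-term(s) onto the pool as new job(s): a fixed number of pointer allocations and updates, so $\bigo(1)$. For $\tomachseatwo$ one also checks $\mvar\notin\domp\namctx$, which under hypothesis (2) is the $\bigo(1)$ test ``does the location $\mvar$ currently hold a value or is it still a binder'', and similarly $\evar\notin\domp\namctx$ for $\tomachseathree$ (this side condition is used for $\tomachseathree$ as well; a dereliction with an associated cut triggers a principal transition instead). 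Transition $\tomachseasix$ just pops the finished job and places the variable occurrence in its named hole of $\namctx$: $\bigo(1)$. In every case no traversal of $\namctx$ or of the term occurs, only a bounded local surgery, so search transitions are implementable in $\bigo(1)$.

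\textbf{Principal transitions.} By hypothesis (2), from the head constructor of the active term (a subtraction $\suba\mvar\val\var\tm$, a dereliction $\dera\evar\var\tm$, or a variable occurrence $\mvar$ or $\evar$) the machine reaches the associated cut in $\namctx$ in $\bigo(1)$ by following the reference attached to $\var$, $\evar$, $\mvar$. For the non-duplicating transitions $\tomachaxmtwo$, $\tomachlolli$, $\tomachaxmone$ the work is: read the cut value, remove that cut from $\namctx$ — $\bigo(1)$ by hypothesis (4), using the back pointer — and rewrite the active term by a bounded rearrangement of constructors (e.g.\ for $\tomachlolli$, splitting $\lctxp\valtwo$ is realised by the back pointers of hypothesis (4), and building $\cuta\val\vartwo$ and $\cuta\valtwo\var$ is two writes). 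For $\tomachaxetwo$ it is one read of $\evartwo$ from the cut and one write into the dereliction, the cut being kept: $\bigo(1)$.

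\textbf{The two duplicating transitions.} Transitions $\tomachbang$ and $\tomachaxeone$ each locate their cut in $\bigo(1)$ and then copy a value: $\tomachaxeone$ copies the exponential value $\exval$ sitting in the cut $\cuta\exval\evar$, and $\tomachbang$ copies the promotion body, i.e.\ it produces $\lctxtwop\valtwo = \rename{\lctxp\val}$ from the cut $\cuta{\bang\lctxp\val}\evar$; both then insert the fresh renamed copy in place of the dereliction/variable, a further $\bigo(1)$. The cost of the copy is proportional to the number of constructors of the copied value; by the sub-term invariant (\reflemma{subterm-invariant}.1) that value is a cut value in $\namctx$, hence has size $\leq\size\tm$, so the copy — and with it the whole transition — costs $\bigo(\size\tm)$. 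The fresh-name generation and the allocation of fresh binders performed by $\rename{\cdot}$ are linear in the size of what is renamed, hence also $\bigo(\size\tm)$; under the standing assumption that pointer management is $\bigo(1)$ there is no hidden overhead. This yields the two bounds of the statement.

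\textbf{Main obstacle.} The only genuinely non-routine points are the constant-time realisations that depend on the implementation hypotheses: that removing a cut from the middle of $\namctx$ is $\bigo(1)$ (this is exactly what hypothesis (4), the back pointers along sequences of left rules, is designed to secure, and one must check that $\tomachaxmone$, $\tomachaxmtwo$, $\tomachlolli$ only ever remove a cut sitting immediately after such a sequence), and that the on-the-fly splitting $\lctxp\val$/$\lctxp\valtwo$ used in $\tomachlolli$ and $\tomachbang$ does not require scanning the term. Once these are granted, the rest is a bounded bookkeeping count per transition, plus the single appeal to \reflemma{subterm-invariant} for the copy size; I do not expect any difficulty there.
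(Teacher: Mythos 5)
Your proposal follows essentially the same route as the paper's proof: search transitions are constant-time pointer surgery, and the only non-constant cost comes from the copies performed by $\tomachbang$ and $\tomachaxeone$, bounded via the sub-term invariant (\reflemma{subterm-invariant}); the paper just states this more tersely and points to the linear-time copy algorithm of Accattoli and Barras for the $\alpha$-renaming. The one step that does not hold as you state it is the claim that the splitting of the abstraction body in $\tomachlolli$ is $\bigo(1)$, ``realised by the back pointers of hypothesis (4)'': those back pointers go from a cut to the constructor immediately preceding it, and serve only to unlink a cut from the multi-context; they do not let you jump from the top of the body $\lctxp\valtwo$ to the value $\valtwo$ sitting at its end. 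Indeed, in the paper's OCaml implementation this split (the function \texttt{enter\_bo}, and \texttt{alpha} for $\tomachbang$) is a linear traversal of the body. This does not endanger the statement, because the traversed body $\la\vartwo\lctxp\valtwo$ is a cut value of the multi-context, so \reflemma{subterm-invariant} bounds its size by $\size\tm$ and the transition stays within the claimed $\bigo(\size\tm)$; the fix is simply to invoke the sub-term invariant for $\tomachlolli$ as well, rather than filing it among the constant-time transitions.
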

\end{toappendix}

\subparagraph{Summing Up.} By putting together the bounds on the number of search transitions with the cost of single transitions we obtain the complexity of SESAME.

\begin{toappendix}
\begin{theorem}[SESAME bi-linear overhead bound]
\NoteProof{thm:sesame-overhead-bound}
  Let   $\run \colon \compilrel{\tm}\state \tosesame^* \statetwo$ be a SESAME run. Then $\run$ is implementable on RAMs in $\bigo\left(\size\tm\cdot(\sizepr\run+1)\right)$.\label{thm:sesame-overhead-bound}
\end{theorem}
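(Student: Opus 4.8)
The plan is to derive the bound by combining the two lemmas stated just above, \reflemma{search-is-bilinear} and \reflemma{cost-single-trans}, plus a small remark about initialization. First I would split the run $\run$ into its search part and its principal part and bound the RAM cost of each separately. By \reflemma{cost-single-trans}, every search transition of $\run$ is implementable in $\bigo(1)$ and every principal transition in $\bigo(\size\tm)$, so the total cost of executing the transitions of $\run$ is $\bigo(\sizesea\run) + \bigo(\size\tm\cdot\sizepr\run)$. Then I would feed in \reflemma{search-is-bilinear}, which gives $\sizesea\run \le \size\tm\cdot(\sizepr\run+1)$; this turns the first summand into $\bigo(\size\tm\cdot(\sizepr\run+1))$, and the second summand is dominated by it, yielding the claimed $\bigo(\size\tm\cdot(\sizepr\run+1))$.

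It then remains to justify the additive $+1$ (rather than a cleaner $\bigo(\size\tm\cdot\sizepr\run)$): beyond executing the transitions, the machine must first build the initial state $\twostate\namctxholea{\rename\tm}$ from $\tm$, i.e. allocate a well-bound renaming of $\tm$, which costs $\bigo(\size\tm)$ under the representation hypotheses preceding \reflemma{cost-single-trans} — a single pass allocating a fresh memory location per binder and wiring each occurrence as a pointer. This linear initialization cost equals $\bigo(\size\tm\cdot(\sizepr\run+1))$ already when $\sizepr\run=0$, hence it is absorbed into the stated bound, and the theorem follows.

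Methodologically, there is no real obstacle in this final step: it is pure $\bigo$-arithmetic once the two lemmas are available. The genuine work has already been done upstream — the sub-term invariant \reflemma{subterm-invariant}, which caps by $\size\tm$ the size of every value duplicated by $\tomachbang$ and $\tomachaxeone$, and hence both the cost and the constructor-count increment of each principal transition; and \reflemma{search-is-bilinear}, whose proof is a potential argument, namely that search transitions strictly decrease the number of constructors sitting in the pool, a quantity that starts at $\size{\rename\tm}=\size\tm$ and is increased only by principal transitions, by at most $\size\tm$ each by the sub-term invariant, so that $\sizesea\run \le \size\tm + \size\tm\cdot\sizepr\run$. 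I would therefore keep the proof of the present statement to the two or three lines of assembly sketched above.
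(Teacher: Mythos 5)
Your proposal is correct and follows essentially the same route as the paper: sum the costs of the two kinds of transitions, using \reflemma{cost-single-trans} for the per-transition costs and \reflemma{search-is-bilinear} for the number of search transitions, and conclude by $\bigo$-arithmetic (note that in the paper the $+1$ is already supplied by \reflemma{search-is-bilinear}, accounting for search transitions on the initial term before any principal one, so your extra remark about the cost of initialization is harmless but not needed). Your recap of the potential argument behind \reflemma{search-is-bilinear} also matches the paper's proof of that lemma.
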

\end{toappendix}

\section{Final Garbage Collection and Full Cut Elimination}
\label{sect:gc-and-full}
By the halt property (\refprop{sesam-properties}.4), SESAME stops on final state $\state=\twostate\appr\emptylist$ where $\appr$ is a (pre-)term $\tm$ that is cut-free \emph{up to garbage}. The final garbage collection process turning $\tm$ into a cut-free proof term $\gcdec\tm$ is formalized as the following function:
\begin{center}
$
\arraycolsep=2pt\begin{array}{rll@{\hspace{.3cm}}|@{\hspace{.3cm}} rll@{\hspace{.3cm}}|@{\hspace{.3cm}} rll}
\multicolumn{9}{c}{\textsc{Final garbage Collection}}
\\[4pt]
\gcdec\var &\defeq& \var
&
\gcdec{\la\var\tm} &\defeq& \la\var\gcdec{\tm}
&
\gcdec{\bang\tm} &\defeq& \bang\gcdec{\tm}
\\
\gcdec{\cuta\val\var\tm} &\defeq& \gcdec{\tm}
&
\gcdec{\suba\mvar\val\var\tm} &\defeq& \suba\mvar{\gcdec\val}\var\gcdec{\tm}
&
\gcdec{\dera\evar\var\tm} &\defeq& \dera\evar\var\gcdec{\tm}
\end{array}$
\end{center}
Clearly, $\gcdec\tm$ is cut-free. The following proposition states the qualitative and quantitative properties of garbage collection. Its first point uses the lemma before it, while the second point rests on the SESAME bi-linear bound.

\begin{toappendix}
\begin{lemma}
\label{l:good-w-redex-exists}
\NoteProof{l:good-w-redex-exists}
Let $\tm$ be a term that is cut-free up to garbage but not cut-free. 
Then in $\tm$ there is at least an out cut that is a $\togp\wsym$-redex.
\end{lemma}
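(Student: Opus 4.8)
The plan is to exploit the definition of "cut-free up to garbage" (\refdef{cut-free-upto-gc}) together with the definition of good contexts (\reffig{strategy}) to locate an out cut of $\tm$ whose position, viewed as a context, is good, and which is an erasing redex. Since $\tm$ is cut-free up to garbage but not cut-free, $\tm$ has at least one out cut $\cuta\val\var\tmtwo$, and by \refdef{cut-free-upto-gc} every such out cut satisfies $\var\notin\ov\tmtwo$. The key observation is that $\var\notin\ov\tmtwo$ means all occurrences of $\var$ in $\tmtwo$ lie inside cut values; in particular $\var$ has \emph{no} occurrence out of all cuts of $\tmtwo$. First I would argue that this is exactly the situation in which $\cuta\val\var\tmtwo$ is a $\tow$-redex up to the garbage sitting in cut values: if $\var\notin\fv\tmtwo$ outright, the root step $\cuta\val\var\tmtwo\rtow\tmtwo$ applies directly; and if $\var$ does occur in $\tmtwo$ but only inside cut values, I would still need a $\togp\wsym$-redex, so the argument must descend into the term.

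The heart of the proof is therefore an induction on $\tm$ (or on the structure leading to the innermost problematic out cut). First I would reduce to the case where $\tm$ itself is an out cut $\cuta\val\var\tmtwo$ with $\var\notin\ov\tmtwo$: if $\tm=\lctxp{\tmtwo'}$ with $\tmtwo'$ the first out cut along the left spine, the context $\lctx$ is a cut context hence in particular a good context (cut contexts are good: note the production $\cuta\val\var\gctx$ with the side condition $\var\notin\dfv\gctx$ — and here $\gctx=\ctxhole$, for which $\dfv\ctxhole=\emptyset$, so the side condition is vacuous, and left-constructor productions $\suba\mvar\val\var\gctx$, $\dera\evar\var\gctx$ are likewise available), so it suffices to find the good $\togp\wsym$-redex inside the out cut and prepend $\lctx$. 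Then, with $\tm=\cuta\val\var\tmtwo$ and $\var\notin\ov\tmtwo$: if $\var\notin\fv\tmtwo$ we are done with the root $\tow$ step at position $\ctxhole$, which is trivially good. Otherwise $\var$ occurs in $\tmtwo$ but only inside cut values of $\tmtwo$, so $\tmtwo$ contains an out cut $\cuta\valtwo\vartwo\tmthree$ such that $\var\in\fv\valtwo$. This inner cut is a cut of $\tm$ but not an out cut of $\tm$ (it is nested in the outer cut on $\var$); however, its left value $\valtwo$ is garbage-bearing, and I would recurse: either $\valtwo=\lctxtwop{\valfour}$ where $\valfour$ is again a cut-free-up-to-garbage value with a proper out cut inside it, and descending into $\valtwo$ — which is reached through the context $\cuta\val\var\cuta{\vctxtwo}{\vartwo}{\tmthree}$, again a good context — eventually bottoms out; or $\valtwo$ has no cuts, but then $\var\in\fv\valtwo$ forces $\var$ to have an occurrence out of all cuts of $\valtwo$, contradicting $\var\notin\ov\tmtwo$.

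I expect the main obstacle to be bookkeeping the nesting of contexts so that the position of the eventually-found $\tow$-redex is genuinely a \emph{good} context, in particular verifying the side condition $\var'\notin\dfv\gctx'$ at every $\cuta{\cdot}{\var'}{\gctx'}$ production crossed on the way down. The clean way to handle this is to prove the auxiliary statement: \emph{if $\var\notin\ov\tm$ and $\var\in\fv\tm$, then $\tm$ has a subterm $\cuta\valtwo\vartwo\tmthree$ reached by a good context and with $\cuta\valtwo\vartwo\tmthree$ itself a $\rtow$-redex (i.e.\ $\vartwo\notin\fv\tmthree$) or with $\var\in\fv\valtwo$ and $\vartwo$ bound strictly "outside",} and push the induction through that. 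The side conditions on dominating variables should be automatically satisfied because all the cuts we descend through bind variables ($\var,\vartwo,\ldots$) that, by the cut-free-up-to-garbage hypothesis, do \emph{not} dominate anything reachable — they have no non-garbage occurrences below — so $\dfv$ of the traversed context stays clear of them. Once the good-context bookkeeping is in place, the existence of the $\togp\wsym$-redex is immediate from the root rule $\rtow$ applied at a cut whose bound variable is (now provably) not free in its body.
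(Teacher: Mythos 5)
There is a genuine gap, and it sits exactly where your case analysis tries to recover by recursion. When the chosen out cut $\cuta\val\var\tmtwo$ still has $\var\in\fv\tmtwo$ (with occurrences only inside cut values), you propose to descend into the value of an inner cut and claim the resulting position is still a good context. It is not: the grammar of good contexts deliberately omits the production $\cuta{\vgctx}\var\tm$, so no position inside a cut value is ever good; moreover, any cut found inside a cut value is by definition not an out cut, so it could not witness the lemma anyway, whose conclusion is about \emph{out} cuts. Relatedly, your parenthetical that the inner cut is ``not an out cut of $\tm$ because it is nested in the outer cut on $\var$'' misreads the definition: only nesting inside the \emph{left} sub-term (the cut value) disqualifies a cut from being out; cuts occurring in the body of another cut are perfectly good out cuts, and cut-freeness up to garbage constrains them too. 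So the recursive descent is aimed at positions that can never produce a good $\tow$-redex, and the ``eventually bottoms out'' step has nothing correct to bottom out on. A secondary issue: your preliminary reduction to ``the first out cut along the left spine'' through a cut/left context is too restrictive, since out cuts may sit under abstractions, bangs, or inside subtraction values.

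The missing idea, which is how the paper's proof avoids your problematic case altogether, is to pick an \emph{innermost} out cut, i.e.\ one whose body $\tmtwo$ contains no other out cut of $\tm$. Then $\tmtwo$ is in fact cut-free (an outermost cut of $\tmtwo$ would be an out cut of $\tm$, since the position of $\tmtwo$ is itself not under any cut value), so the hypothesis $\var\notin\ov\tmtwo$ collapses to $\var\notin\fv\tmtwo$ and this out cut is a $\tow$-redex outright—no descent into cut values is ever needed. What remains is goodness of its position $\ctx$, and there your intuition is essentially the paper's argument: the hole cannot lie in a cut value because the cut is out, and if some cut $\cuta\valtwo\vartwo\ctxthree$ traversed by $\ctx$ had $\vartwo\in\dfv\ctxthree$, the corresponding dereliction or subtraction on $\vartwo$ would sit between two out cuts, hence outside all cut values, giving an out occurrence of $\vartwo$ in the body of its out cut and contradicting cut-freeness up to garbage. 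With the innermost choice in place, that contradiction argument is all that is left to write; without it, your proof cannot reach a correct witness in the case that actually matters.
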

\end{toappendix}

\begin{toappendix}
\begin{proposition}
\label{prop:gc-properties}
\NoteProof{prop:gc-properties}
Let $\run: \compilrel\tm\state \tosesame^* \twostate\tmtwo\emptylist$ a SESAME run ending on a final state.
\begin{enumerate}
\item \emph{Good weakening steps simulate GC}: $\tmtwo \togp\wsym^k \gcdec\tmtwo$ where $k$ is the number of out cuts in $\tmtwo$;
\item \emph{GC is bi-linear}: the final garbage collection function $\tmtwo \mapsto \gcdec\tmtwo$ can be implemented on RAMs in $ \bigo(\size\tm\cdot(\sizepr\run+1))$.
\end{enumerate}
\end{proposition}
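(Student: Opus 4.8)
The plan is to handle the two points separately: Point~1 by an induction on the number of out cuts, firing one garbage cut at a time via \reflemma{good-w-redex-exists}, and Point~2 by pairing the cost of a single recursive pass with the bi-linear bound already established for SESAME.

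\textbf{Point 1.} I would prove the slightly more general statement that every term $\tmtwo$ that is cut-free up to garbage satisfies $\tmtwo\togp\wsym^{k}\gcdec\tmtwo$, with $k$ the number of out cuts of $\tmtwo$; the claim then follows because, by the halt property (\refprop{sesam-properties}.4), the $\tmtwo$ returned by $\run$ is cut-free up to garbage, and $k$ is as required. The proof is by induction on $k$. If $k=0$ then $\tmtwo$ is cut-free --- any cut of $\tmtwo$ lies on a root path that enters no cut value, hence is an out cut --- so $\gcdec\tmtwo=\tmtwo$ by a one-line structural induction and $0$ steps suffice. If $k>0$ then $\tmtwo$ is not cut-free, so \reflemma{good-w-redex-exists} yields an out cut $c=\cuta\exval\evar\tmthree$ of $\tmtwo$ that is a $\togp\wsym$-redex (so $\evar\notin\fv\tmthree$ and its position is a good context); firing it gives $\tmtwo\togp\wsym\tmtwo'$, where $\tmtwo'$ is $\tmtwo$ with $c$ replaced by its body $\tmthree$. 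I would then check three facts: (i) the out cuts of $\tmtwo'$ are exactly those of $\tmtwo$ minus $c$ --- the value $\exval$, together with the non-out cuts nested in it, is discarded, while the cuts on the spine of $\tmthree$ were already out cuts of $\tmtwo$ and remain such; (ii) $\tmtwo'$ is still cut-free up to garbage --- for every out cut $c'$ of $\tmtwo$ having $c$ on the spine of its body, the variable bound by $c'$ has all its occurrences inside cut values of that body, hence after grafting $\tmthree$ in place of $c$ those occurrences are still inside cut values (those in $\exval$ simply vanish), so the garbage condition persists, and out cuts not enclosing $c$ are untouched; (iii) $\gcdec\tmtwo=\gcdec{\tmtwo'}$, since both $\togp\wsym$ and $\gcdec$ delete $c$ together with the whole value $\exval$ and leave the body $\tmthree$ to be treated identically. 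The induction hypothesis applied to $\tmtwo'$ then gives $\tmtwo'\togp\wsym^{k-1}\gcdec{\tmtwo'}$, whence $\tmtwo\togp\wsym^{k}\gcdec\tmtwo$.

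\textbf{Point 2.} The function $\tmtwo\mapsto\gcdec\tmtwo$ is a single recursive traversal of $\tmtwo$ that never descends into a cut value (on meeting $\cuta\val\var\tm$ it returns the result on $\tm$ and discards $\val$), so it costs $\bigo(\size\tmtwo)$ under the RAM conventions already fixed for SESAME. It remains to bound $\size\tmtwo$: the initial state has size $\size\tm$, search transitions do not increase the number of constructors, and each principal transition adds at most one renamed cut value, of size $\leq\size\tm$ by the sub-term invariant (\reflemma{subterm-invariant}); hence $\size\tmtwo\leq\size\tm\cdot(\sizepr\run+1)$ --- alternatively, this is immediate from \refthm{sesame-overhead-bound}, since $\run$ builds $\tmtwo$ within that time budget. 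Therefore the garbage collection pass is implementable on RAMs in $\bigo(\size\tm\cdot(\sizepr\run+1))$.

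\textbf{Main obstacle.} The delicate point is (ii): showing that firing the weakening redex supplied by \reflemma{good-w-redex-exists} keeps the term cut-free up to garbage, so that the induction goes through. It hinges on the observation that the occurrences witnessing garbage of an out cut are confined to cut values and so are preserved as cut-value occurrences by the rewrite; everything else --- the structural identity $\gcdec\tmtwo=\gcdec{\tmtwo'}$, the counting of out cuts, and the cost estimates --- is routine.
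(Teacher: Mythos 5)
Your proposal is correct and follows essentially the same route as the paper: Point~1 is proved by induction on the number of out cuts, firing one good weakening redex at a time via \reflemma{good-w-redex-exists}, and Point~2 bounds $\size\tmtwo$ by the run's bi-linear cost (sub-term invariant / \refthm{sesame-overhead-bound}) and then performs a linear traversal. The only difference is that you spell out the preservation facts (out-cut count, cut-freeness up to garbage, invariance of $\gcdec$) that the paper's induction leaves implicit, which is a welcome but not divergent refinement.
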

\end{toappendix}

By simulating good cut-elimination via the SESAME followed by final garbage collection, we obtain our main result, which---beyond the bi-linear bound---is a further slight (but sort of obvious) refinement of Accattoli's result, as it allows one to count only \emph{non-erasing} good steps, rather than general good steps.
\begin{theorem}[Good cut-elimination is big-step implementable with bi-linear overhead]
  Let   $\deriv:\tm \tog^* \tmtwo$ be a good evaluation sequence with $\tmtwo$ cut-free. Then $\tmtwo$ is computable from $\tm$ on RAMs in $\bigo\left(\size\tm\cdot(\sizep\deriv{\neg\wsym}+1)\right)$.\label{thm:final}
\end{theorem}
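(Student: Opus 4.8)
The procedure computing $\tmtwo$ from $\tm$ is: build the initial SESAME state $\compilrel\tm\state$ (cost $\bigo(\size\tm)$), run SESAME to a final state, and apply the final garbage collection function $\gcdec\cdot$ to the resulting term. First I would argue that this terminates and outputs $\tmtwo$. Since $\deriv$ reaches the cut-free, hence $\tog$-normal, term $\tmtwo$, the term $\tm$ is weakly $\tog$-normalizing; as $\tog$ is diamond (\refthm{sub-term-prop}.2), uniform normalization makes it strongly $\tog$-normalizing, so $\tog$-evaluations from $\tm$ have bounded length, and so do $\tognotw$-evaluations ($\tognotw\subseteq\tog$). By \refthm{sesame-is-good} and \refdef{implem}.1 every SESAME run from $\compilrel\tm\state$ projects to a $\tognotw$-evaluation of $\tm$ with as many steps as it has principal transitions; since moreover search transitions terminate (\refprop{sesam-properties}.3), the unique maximal run $\run\colon\compilrel\tm\state\tosesame^{*}\statetwo$ out of $\compilrel\tm\state$ is finite and ends in a final state. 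By the halt property (\refprop{sesam-properties}.4), $\statetwo=\twostate\tmthree\emptylist$ with $\decode\statetwo=\tmthree$ cut-free up to garbage (\refdef{cut-free-upto-gc}), and \refdef{implem}.1 gives $\tm\tognotw^{\sizepr\run}\tmthree$. By \refprop{gc-properties}.1 we have $\tmthree\togp{\wsym}^{*}\gcdec\tmthree$, hence $\tm\tog^{*}\gcdec\tmthree$ with $\gcdec\tmthree$ cut-free, i.e.\ $\tog$-normal; since $\tog$ is confluent (again from diamond) and both $\tmtwo$ and $\gcdec\tmthree$ are $\tog$-normal forms reached from $\tm$, they coincide, so the procedure outputs $\tmtwo$.

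\textbf{Counting the principal transitions.} It remains to show $\sizepr\run=\sizep\deriv{\neg\wsym}$. For this I would establish a postponement of erasing steps \emph{internal to the good strategy} — the analogue for $\tog$ of \refprop{postp-gc} — that additionally preserves the number of non-erasing steps: pushing a $\togp{\wsym}$-step past a non-erasing good step leaves the count of non-erasing steps untouched. The reason is that a $\togp{\wsym}$-step only deletes a \emph{garbage} cut, i.e.\ a cut whose bound variable has no free occurrence in its scope; since good contexts never enter a cut value (there is no production $\cuta{\vgctx}\var\tm$ in the grammar of \reffig{strategy}), the deleted value never hosts, and could never host, a good redex, and deleting the cut leaves the rest of the term structure — hence its good redexes — unchanged. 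Applying this reordering to $\deriv$ yields $\tm\tognotw^{\sizep\deriv{\neg\wsym}}\tmthree'\togp{\wsym}^{*}\tmtwo$ for some $\tmthree'$; such a $\tmthree'$ is $\tognotw$-normal, since any good redex of $\tmthree'$ would survive the subsequent $\togp{\wsym}$-steps and appear in the cut-free term $\tmtwo$, which is impossible. Now $\tognotw$ is diamond (this is exactly what makes SESAME a lax implementation of it), hence confluent and length-invariant; so $\tmthree'$ equals the $\tognotw$-normal form $\tmthree$ of the first paragraph, and the two $\tognotw$-evaluations $\tm\tognotw^{\sizepr\run}\tmthree$ and $\tm\tognotw^{\sizep\deriv{\neg\wsym}}\tmthree$ have the same length, i.e.\ $\sizepr\run=\sizep\deriv{\neg\wsym}$.

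\textbf{Cost, and the main obstacle.} Initialization costs $\bigo(\size\tm)$, running $\run$ costs $\bigo(\size\tm\cdot(\sizepr\run+1))$ by \refthm{sesame-overhead-bound}, and the final garbage collection applied to $\tmthree$ costs $\bigo(\size\tm\cdot(\sizepr\run+1))$ by \refprop{gc-properties}.2; summing and substituting $\sizepr\run=\sizep\deriv{\neg\wsym}$ gives the claimed $\bigo(\size\tm\cdot(\sizep\deriv{\neg\wsym}+1))$. The one genuinely non-routine ingredient is the good-internal postponement of garbage collection with preservation of the non-erasing count used above; the delicate point there is that a duplicating non-erasing step ($\togp{\bang}$ or $\togp{\axeone}$) may copy a garbage cut, so postponing one $\togp{\wsym}$-step past it may require several $\togp{\wsym}$-steps afterwards — one must check that this only ever changes the number of erasing steps, never that of non-erasing ones, and that the reordering process terminates. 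Everything else is a matter of assembling the already-established implementation and overhead theorems for SESAME with the standard consequences (confluence, length invariance, uniform normalization) of the diamond property of $\tog$ and $\tognotw$.
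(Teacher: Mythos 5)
Your overall architecture is exactly the paper's: initialize SESAME, run it to a final state, apply the final garbage collection, identify the output with $\tmtwo$ via confluence of $\tog$, and assemble the cost from \refthm{sesame-overhead-bound} and \refprop{gc-properties}.2; the use of uniform normalization, the halt property, and \refthm{sesame-is-good} also matches. Where you genuinely diverge is the step-count identity $\sizepr\run=\sizep\deriv{\neg\wsym}$. The paper gets it almost for free: it concatenates the SESAME projection $\derivtwo:\tm\tognotw^{*}\tmtwo'$ with the GC simulation $\togp\wsym^{*}$ into a normalizing good sequence $\derivfour$, and then observes that, although \refthm{sub-term-prop}.2 as stated only gives length invariance, the diamond diagrams in the cited proof preserve the \emph{kind} of each step, so length invariance refines to invariance of the number of steps of each kind; hence $\sizep\deriv{\neg\wsym}=\sizep\derivfour{\neg\wsym}=\size\derivtwo=\sizepr\run$. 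You instead propose a new postponement lemma internal to the good strategy (erasing good steps commute past non-erasing good steps, preserving the non-erasing count), followed by length invariance of $\tognotw$. That route would work, and your use of diamond of $\tognotw$ is no worse founded than the paper's own appeal to it for the lax implementation system; but note that your comparison of the two $\tognotw$-evaluations also silently uses that the final SESAME read-back, being cut-free up to garbage, is $\tognotw$-normal (again something the paper's halt condition needs anyway).

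The one real weakness is that this postponement lemma---which you yourself flag as the only non-routine ingredient---is left unproved, and it is not a small check. Beyond the duplication issue you mention (a delayed garbage cut can be relocated into a newly created cut value by a $\tololli$ step and then copied by later $\togp\bang$/$\togp\axeone$ steps, so one postponed erasure may become several), one must also verify that each swapped non-erasing step is still \emph{good} in the earlier term (insertion of a garbage cut does not disturb dominating-variable conditions precisely because its variable has no occurrences), that the delayed cut stays garbage, and that the whole reordering terminates; moreover the trailing erasing steps need not remain good steps, so the conclusion should be phrased with $\tow$ rather than $\togp\wsym$ (which, as you note, suffices for your argument that $\tmthree'$ is $\tognotw$-normal). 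So as submitted the counting step rests on a plausible but unestablished lemma, whereas the paper closes the same hole with a one-line inspection of the existing diamond proof; if you keep your route, that lemma must be proved in full, and you should be aware that the kind-preserving-diamond observation makes it unnecessary.
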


\begin{proof}
\applabel{thm:final}
If $\deriv:\tm \tog^* \tmtwo$ with $\tmtwo$ cut-free then $\tognotw$ terminates, by the \emph{uniform normalization} corollary of the diamond property for $\tog$ (see the end of \refsect{strategy}). Then there is an evaluation $\derivtwo:\tm \tognotw^* \tmtwo'$ with $\tmtwo'$ cut-free up to garbage. Since SESAME is a big-step implementation of $\tognotw$ (\refthm{sesame-is-good}), we obtain a SESAME run $\run :\compilrel\tm\state \tosesame^*\statetwo$ with $\decode\statetwo=\tmtwo'$, of cost $\bigo\left(\size\tm\cdot(\sizepr\run+1)\right)=\bigo\left(\size\tm\cdot(\size{\derivtwo}+1)\right)$. By the halt property of SESAME and cut-freeness up to garbage of SESAME, we obtain that $\statetwo$ is final, that is, $\statetwo=\twostate\tmthree\emptylist$ with $\tmthree =_\alpha \tmtwo'$.

Since good weakening steps simulate GC (\refprop{gc-properties}.1), we obtain $\derivthree: \tmthree \togp\wsym^* \gcdec{\tmtwo'}$ with $\gcdec{\tmtwo'}$ cut-free. Concatenating $\derivtwo$ and $\derivthree$ we obtain a good and normalizing evaluation sequence $\derivfour:\tm \tognotw^*\togp\wsym^*\gcdec{\tmtwo'}$. The diamond property of the good strategy (\refthm{sub-term-prop}.2) implies \emph{confluence} and \emph{length invariance} of $\tog$ (see the end of \refsect{strategy}), that is, we obtain $\tmtwo = \gcdec{\tmtwo'}$ and $\size\deriv=\size\derivfour$. Since computing GC is bi-linear (\refprop{gc-properties}.1), and precisely costs again $\bigo\left(\size\tm\cdot(\sizepr\run+1)\right)=\bigo\left(\size\tm\cdot(\size{\derivtwo}+1)\right)$, we obtain that the total cost of SESAME followed by GC is $2\cdot\bigo\left(\size\tm\cdot(\size{\derivtwo}+1)\right)= \bigo\left(\size\tm\cdot(\size{\derivtwo}+1)\right)$.

Lastly, the diamond property of the good strategy (\refthm{sub-term-prop}.2) as stated in \cite{DBLP:journals/lmcs/Accattoli23} does not give information about the kinds of steps, but by looking at its proof it is easily seen that diamond diagrams preserve the kind of steps. Therefore, not only we have $\size\deriv=\size\derivfour$, but also $\sizep\deriv{\neg\wsym}=\sizep\derivfour{\neg\wsym}$. Finally, note that $\sizep\derivfour{\neg\wsym}=\size{\derivtwo}$, so the cost actually is $\bigo\left(\size\tm\cdot(\sizep\deriv{\neg\wsym}+1)\right)$.
\end{proof}

\section{Implementation in OCaml}
\label{sect:ocaml}
An implementation in OCaml can be found on GitHub at
\url{https://github.com/sacerdot/sesame/}. All the datatypes and functions
of the implementation are documented at \url{https://sacerdot.github.io/sesame}.

\subparagraph{Aim and Design.}
The implementation is provided to support evidence for the cost of transitions claimed by \reflemma{cost-single-trans}. Moreover, it
allows one to easily study the computational behaviour of family of terms by observing
their evaluation.

The implementation has \emph{not} been heavily optimized, in order to keep the code readable and
close to the pen-and-paper presentation. Nevertheless, the employed data structures 
are reasonably close to those of an optimized implementation. Advanced recent OCaml
features, like Generalized Algebraic Data Types, could have been used to 
statically enforce more invariants. We sticked instead to a simpler subset of OCaml,
to make the code readable to non experts of the language. As a consequence, in several
places there are assertions to abort the program in case the invariants
are violated (possible only in case of bugs).

\subparagraph{Overview.}
The implementation consists of a Read-Eval-Print-Loop (REPL) that asks the user
to enter an IMELL term to be reduced, in ESC syntax. The accepted BNF is printed by
the executable before starting the REPL. The term is first checked to be proper,
then its strong normal form up to garbage is computed by the SESAME machine and
finally garbage is removed. All intermediate machine steps are shown.

Before starting the REPL, a few test terms are reduced and printed (among which the example at the end of \refsect{sesame}), to check the 
machine functionalities and as examples of the input syntax. In particular, the last tests are terms from the exploding family discussed in the next section.

\subparagraph{Data Structures.}
Terms are encoded at runtime as term graphs, which are Direct Acyclic Graphs (DAG)
augmented with a few back edges. Nodes in the graph are given by constructors
of Algebraic Data Types whose arguments are mutable, to imperatively
change the graph during reduction. All type declarations are given in the
\verb+termGraphs.ml+ file. The data structures used in the implementation are described in \techrep{\refapp{ocaml}}\camerar{Appendix H of the tech report \cite{accattoli2024imell}}. 

\subparagraph{Machine Runs: Auxiliary Functions and Their Complexity.}

The code that implements the runs of SESAME to normal form and the final garbage collection
process can be found in \verb+reduction.ml+, which is independent from all the remaining
files but \verb+termGraphs.ml+. Both files together amount to 412 lines of
commented OCaml code.

A quick inspection of the code in \verb+reduction.ml+ shows only a few non-trivial
functions, all the others being constant time:
\begin{enumerate}
 \item \verb+copy_{term,value,var,bvar}+ that copies the
   DAG in input while visiting it. 
  A $\bigo(1)$ fresh name generator is used
    to assign a new \verb+name+ value to copies of variables.
 \item \verb+alpha+ and \verb+enter_bo+, used respectively in the
   $\bang$ and $\lolli$ transitions, that take the body of a promotion/abstraction,
   traverse it to split it into a left context and a value, and build a new term
   by glueing together the obtained left context, a new cut that uses the value, and
   a remaining term. The two functions differ only in the fact that \verb+alpha+ also
   copies ($\alpha$-renames) the term in input, calling the \verb+copy_*+ functions
   on the sub-terms, while \verb+enter_bo+ consumes the given term.
 \item \verb+steps+, the SESAME main loop, that runs until the pool is empty and
   a normal form is therefore reached, or it diverges otherwise.
 \item \verb+gc_{value,term}+ that traverse in linear time the input to remove
   all garbage cuts from a normal form.
\end{enumerate}

Linearity for all the previous functions but \verb+steps+ is easily established
observing that the functions are based on visits of DAGs that never visit a node twice. In particular, the functions of Point 2 are linear in the size of the given term, which, by the
   sub-term invariant, is a (copy of a) sub-term of the initial term.

The only major source of technicalities---which however do not affect the complexity---is the fact that multiplicative transitions remove the acting cut from the multi context. For example, consider a multiplicative step $\dera\evar\evartwo \cuta\val\mvar \cuta\valtwo\vartwo \tm \to \dera\evar\evartwo  \cuta\valtwo\vartwo \tm'$ involving the cut on $\mvar$. To implement it, beyond manipulating $\val$, $\tm$, and $\tm'$, one also has to connect $\dera\evar\evartwo$ and $\cuta\valtwo\vartwo$. For that, cuts have a back-pointer to the preceding constructor (as mentioned at Point 4, labeled \emph{sequences of left rule}, before \reflemma{cost-single-trans}), which induces inelegant imperative manipulations of the term graph,
in particular in the copy function. The back-pointer on $\cuta\val\mvar$ is used to retrieve the dereliction $\dera\evar\evartwo$ from $\cuta\val\mvar$, while making the new connection of $\dera\evar\evartwo$ and $\cuta\valtwo\vartwo$ requires changing two pointers: the one from the dereliction to its body, which has to target $\cuta\valtwo\vartwo$, plus the back-pointer from $\cuta\valtwo\vartwo$, which now has to target $\dera\evar\evartwo$.

The general scheme for addressing this issue is to
augment the input $\tm$ of functions that take a term with the pointer to the
parent node of $\tm$, in order to be able to reassign the back-pointer when $\tm$ is
a cut.

\subparagraph{Parsing, Pretty Printing, and Check for Properness.}
We have not attempted to achieve the best asymptotic costs for these functionalities
since they are not relevant for the paper and since in practice we are manipulating
small terms in input. To keep the code reproducible in the long term, we have not
used any external library or tool for parsing and pretty-printing. The printed version of the example run at the end of \refsect{sesame} is shown in \techrep{\refapp{example}}\camerar{Appendix I of the tech report \cite{accattoli2024imell}}.
\section{What's Next: an Interesting Family of Terms}
We here hint at what we consider the most interesting future work, already mentioned by Accattoli in the conclusions of \cite{DBLP:journals/lmcs/Accattoli23}. The difference is that here we provide a family of terms showing that such a future work is challenging.

\subparagraph{The Question.} \emph{One} $\beta$-step of the $\l$-calculus is simulated in IMELL by \emph{one} multiplicative step (actually a $\tololli$ step) followed by possibly \emph{many} exponential steps. Since for many strategies the number of $\beta$-steps is a polynomial cost model, it means that one can count only the number of multiplicative (or even $\tololli$) steps, that is, \emph{one can count zero for exponential steps}. Such a surprising fact is established easily in the case of weak evaluation with closed terms (roughly, it can be proved via standard abstract machines), while for strong evaluation it requires a sophisticated additional technique called \emph{useful sharing} \cite{DBLP:journals/corr/AccattoliL16,DBLP:conf/lics/AccattoliCC21,DBLP:conf/csl/AccattoliL22}. A question naturally arises: \emph{is the number of ESC multiplicative/$\tololli$ good steps a polynomial time cost model as well?} This question---left to future work---is far from obvious. In the $\l$-calculus, there is a strong, hardcoded correlation between multiplicatives and exponentials, not present in IMELL. In the standard call-by-name/value encodings of $\l$-calculus in IMELL, indeed, multiplicatives and exponentials connectives \emph{rigidly alternate}, while IMELL also has consecutive exponentials, as in $\bang\bang\form$, enabling wilder exponential behaviour. A hint that question might have a positive answer is given by the strong normalization of untyped exponentials  in ESC/IMELL shown in \cite{DBLP:journals/lmcs/Accattoli23} (while they are \emph{not} SN in MELL, see \cite{DBLP:journals/lmcs/Accattoli23}).

\subparagraph{The Question is Challenging.} We here show that the question mentioned above is challenging, and---surprisingly---it is already challenging in the apparently simple setting of basic evaluation with closed terms, that is the ESC analogous of weak evaluation for the $\l$-calculus, for which instead the question has an easy answer. The challenging aspect is here shown by building a family of ESC terms $\sigma_n$ having size linear in $n$, whose evaluation is basic and made out of exponential steps only, and of a number of steps that is exponential in $n$. This is impossible in the $\l$-calculus, and it is crucially related to iterated exponential terms such as $\bang\bang\tm$. Since the evaluation of $\sigma_n$ uses 0 multiplicative steps, it suggests that the number of multiplicative steps is \emph{not} a polynomial cost model. More precisely, it shows that a smart additional mechanism---akin to useful sharing---is needed in order to circumvent the exponential number of exponential steps. Useful sharing as it appears in the literature, however, cannot be the answer, since useful sharing addresses exponential inefficiencies that are induced by strong evaluation, which is an orthogonal issue.

\subparagraph{The Exponential Exploding Family.} The family is built in three steps. 
\begin{enumerate}
\item For $k \geq 1$, define $\pi_k := \overbrace{\dera\evartwo\_ \ldots \dera\evartwo\_ \dera\evartwo\evar}^{k}\evar$, where $\_$ is an exponential variable with no occurrences, and whose name is irrelevant. These terms reduce to (the cut-free term) $\pi_{k\cdot h}$ via  exponential basic evaluation, that is, $\cuta{\bang{\pi_k}}\evartwo{\pi_h} \toe^* \pi_{k\cdot h}$.
\item Define $\delta_n := \overbrace{\cuta{\bang{\pi_2}}\evartwo \ldots \cuta{\bang{\pi_2}}\evartwo}^{n-1}\pi_2$, and note that, by Point 1, one has $\delta_n \toe^* \pi_{2^n}$.
\item Finally, define the exponential exploding family as $\sigma_n \defeq \cuta{\bang{\bang{\la\mvar\mvar}}}\evartwo\delta_n$. By Point 2, we have $\sigma_n \toe^* \cuta{\bang{\bang{\la\mvar\mvar}}}\evartwo \pi_{2^n}$. Then note that $\cuta{\bang{\bang{\la\mvar\mvar}}}\evartwo \pi_{2^n} \toe^{\Omega(2^n)} \bang{\la\mvar\mvar}$.
\end{enumerate}
Summing up, $\sigma_n$ is an example of term of size $\bigo(n)$ that reduces in $\Omega(2^n)$ exponential steps and 0 multiplicative steps to a cut-free proof using only basic evaluation. As mentioned in the previous section, the OCaml implementation starts by running some tests, including examples of the three points above, respectively for $k=3,h=4$, for $n=3$, and for $n=3$.

\bibliography{main.bbl}

\techrep{
\newpage
\appendix
\setboolean{appendix}{true}
\section*{Proof Appendix}
\section{Proofs and Definitions Omitted from \refsect{esc} (Exponential Substitution Calculus)}
\label{app:esc}

\begin{definition}[Proper terms]
Proper terms are defined by induction on $\tm$ as follows:\label{def:proper-terms}
\begin{itemize}
\item \emph{Variables}: $\mvar$ and $\evar$ are proper.
\item \emph{Implication}: $\la\var\tm$ is proper if $\tm$ is proper and if $\var$ is a multiplicative variable then $\var \in\mfv\tm$.
\item \emph{Subtraction}: $\suba\mvar\val\var \tm$ is proper if $\tm$ is proper, $\val$ is proper, $\mfv\val \cap (\mfv\tm\setminus\set{\var}) = \emptyset$, $\mvar \notin\mfv\tm$, and if $\var$ is a multiplicative variable then $\var \in\mfv\tm$.
\item \emph{Bang}: $\bang\tm$ is proper if $\tm$ is proper and $\mfv\tm = \emptyset$.
\item \emph{Dereliction}: $\dera\evar\var \tm$ is proper if $\tm$ is proper and if $\var$ is a multiplicative variable then $\var \in\mfv\tm$.
\item \emph{Cut}: $\cuta\val\var\tm$ is proper if $\tm$ is proper, $\val$ is proper, $\mfv\val \cap (\mfv\tm\setminus\set\var) = \emptyset$, and if $\var$ is a multiplicative variable then $\var \in\mfv\tm$.
\end{itemize}
\end{definition}

\gettoappendix{l:weak-harmony}

\gettoappendix{l:basic-harmony}
\begin{proof}
Cases of $\tm$:\applabel{l:basic-harmony}
\begin{itemize}
\item \emph{Abstraction, promotion}: $\tm$ is an answer and has no basic redexes.

\item \emph{Par, subtraction, dereliction, variable}: $\tm$ is open, against hypotheses.

\item \emph{Cut}: then $\tm = \cuta\valtwo\var\lctxp\val$. Two cases:
\begin{itemize}
\item \emph{$\lctx$ is a cut context or empty}. Two cases:
\begin{itemize}
\item \emph{$\val$ is a variable}. By closure, $\val = \var$ or it is captured by $\lctx$. In both cases $\tm$ has a basic non-erasing redex, and it is not an answer because $\var$ is captured.
\item \emph{$\val$ is not a variable}. Then $\tm$ is an answer and it has no non-erasing basic redex.
\end{itemize}

\item \emph{$\lctx$ is non-empty and not a cut context}. Then $\tm$ is not an answer, and $\lctx = \ectxp{\lctxtwo}$ with $\ectx$ cut context and $\lctxtwo$ starting with a par, a subtraction, or a dereliction on a variable $\varthree$. Since $\tm$ is closed, $\varthree$ is bound in $\cuta\valtwo\var\ectx$, and so there is a basic non-erasing redex.\qedhere
\end{itemize}
\end{itemize}
\end{proof}

\section{Adding Tensors}
\label{app:tensors}
The presentation of ESC in \cite{DBLP:journals/lmcs/Accattoli23} includes two constructors $\para\mvar\var\vartwo \tm$  and $\pair\tm\tmtwo$ corresponding to the left and right sequent calculus rules of tensor. The rules decorated with proof terms follow (as explained in the previous section of the Appendix, $\multiForm\#\multiFormtwo$ means that the domains of $\multiForm$ and $\multiFormtwo$ are disjoint):
\begin{center}
\begin{tblr}{c}
	\AxiomC{$  \multiForm, \var\hastype\form, \vartwo\hastype\formtwo \vdash \tm\hastype\formthree$}
	\AxiomC{$\mvar$ fresh}
	\RightLabel{$ \tensLeftRule $}
	\BinaryInfC{$  \multiForm, \mvar\hastype\form \tens \formtwo \vdash \para\mvar\var\vartwo \tm \hastype \formthree$}	
	\DisplayProof
	&
		\AxiomC{$  \multiForm\vdash \tm\hastype\form$}
 	\AxiomC{$  \multiFormtwo\vdash \tmtwo\hastype\formtwo$}
	\AxiomC{$\multiForm\#\multiFormtwo$}
 	\RightLabel{$ \tensRightRule $}
 	\TrinaryInfC{$  \multiForm, \multiFormtwo\vdash \pair\tm\tmtwo \hastype\form\otimes \formtwo$}
	\DisplayProof
\end{tblr}
\end{center}
The associated rewriting rule is:
\begin{center}
$\begin{array}{rcl}
\cuta{\pair\tmtwo\tmthree}\mvar \ctxp{\para\mvar\var\vartwo\tm}
& \rtotens & 
\ctxp{\lctxp{\cuta\val\var \lctxtwop{\cuta\valtwo\vartwo \tm}}} 
\\[4pt]&& \mbox{with }\tmtwo = \lctxp{\val}\mbox{ and }\tmthree = \lctxtwop{\valtwo}
\end{array}$
\end{center}
SESAME then needs to be extended with the following transitions:
\begin{center}
\scriptsize
\begin{tblr}{c}
$\begin{tblr}{|r|r||c||r|r|l}
\cline{1-5}
\SetCell[c=1]{c} \textsc{MCtx} & \SetCell[c=1]{c} \textsc{Pool} &\textsc{Tran.}& \SetCell[c=1]{c} \textsc{MCtx} & \SetCell[c=1]{c}\textsc{Pool}
\\\cline{1-5}
\namctx \mbox{ with }\mvar\notin\domp\namctx & \nctxholep{\para\mvar\var\vartwo \tm}\name \cons \pool 
&\tomachseaseven&
\namctx  \nctxholep{\para\mvar\var\vartwo \nctxhole\name}\name &\nctxholep{\tm}\name \cons \pool
\\[2pt]
\namctx & \nctxholep{\pair\tm\tmtwo}\name \cons \pool  
&\tomachseaeight&
\namctx  \namctxholep{\pair{\nctxhole\name}{\nctxhole\nametwo}}\name &\nctxholep{\tm}\name \cons \nctxholep{\tmtwo}\nametwo  \cons \pool &*
\\[4pt]
\cline[dashed]{1-5}
\namctxp{\cuta\mvartwo\mvar\namctxtwo}\mute & \nctxholep{\para\mvar\var\vartwo\tm}\name \cons \pool    
&\tomachaxmtwob&
\namctxp\namctxtwo\mute & \nctxholep{\para\mvartwo\var\vartwo\tm}\name \cons \pool
\\[2pt]
\namctxp{\cuta{\pair\tmtwo\tmthree}\mvar\namctxtwo}\mute & \nctxholep{\para\mvar\var\vartwo\tm}\name \cons \pool    
&\tomachhole{\tens}&
\namctxp\namctxtwo\mute &\nctxholep{\lctxp{\cuta\val\var \lctxtwop{\cuta\valtwo\vartwo \tm}}}\name \cons \pool
&\#
\\\cline{1-5}
\end{tblr}$
\\[4pt]
\SetCell[c=1]{l}* $\nametwo$ is fresh. \ \ \ \# with $\tmtwo = \lctxp{\val}$ and $\tmthree = \lctxtwop{\valtwo}$.
\end{tblr}
\end{center}
It is easily seen that, with the assumption about how SESAME is implemented that we made in \refsect{analysis}, these transitions can be implemented in constant time and they do not break the sub-term invariant / property. Everything smoothly adapts as to cover the extension with tensors.
\section{Proofs from \refsect{prel-abs-mach} (Preliminaries on Abstract Machines)}
\label{app:prel-abs-mach}

\begin{lemma}[One-step transfer]
  Let $\mach$ and $\tostrat$ be an asymmetric implementation system.
  For any state $\state$ of $\mach$, if $\decode\state \tostrat \tmtwo$ then there is a state $\statetwo$ of $\mach$ 
such that $\state \tomachsea^*\tomachpr \statetwo$.  \label{l:one-step-transfer}
\end{lemma}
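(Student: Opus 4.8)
The plan is to prove the statement by induction on the search-transition ordering, consuming search transitions one at a time until a principal transition becomes available. Assume $\decode\state \tostrat \tmtwo$. The first observation is that $\decode\state$ is then not $\tostrat$-normal, so by the contrapositive of the \emph{Halt} clause the state $\state$ cannot be final; hence some machine transition applies, $\state \tomach \state'$. If that transition is principal we are immediately done, taking $\statetwo \defeq \state'$: this realizes $\state \tomachsea^*\tomachpr \statetwo$ with an empty block of search steps.

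Otherwise $\state \tomachsea \state'$. Here I would use \emph{Search transparency} to get $\decode{\state'} = \decode\state$, hence $\decode{\state'} \tostrat \tmtwo$, so that the inductive hypothesis applies to $\state'$ and produces a state $\statetwo$ with $\state' \tomachsea^*\tomachpr \statetwo$; prepending the step $\state \tomachsea \state'$ then gives $\state \tomachsea^*\tomachpr \statetwo$. For the induction to be legitimate I rely on the \emph{Search terminates} clause: since $\tomachsea$ admits no infinite forward chain, the relation $\tomachsea^+$ is well-founded and Noetherian induction along it is sound. This is the only place where search termination is used.

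The argument is short and I do not expect a genuine obstacle; the subtle points are merely bookkeeping. The first is making the induction principle precise, i.e. spelling out that ``$\tomachsea$ terminates'' is exactly what licenses induction along $\tomachsea^+$, which the paper's conventions should make routine. The second is the interplay between \emph{Halt} and \emph{Search transparency}: each time search transforms the state, transparency guarantees that the read-back still reduces, so \emph{Halt} can be re-applied to conclude that the new state is again non-final, and this is what keeps the peeling-off process going until it necessarily lands on a principal transition. Notably, the lemma needs neither the diamond property nor any clash-freeness or properness hypotheses; those enter only later, when one-step transfer is lifted to the big-step simulation theorem.
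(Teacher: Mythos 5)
Your proof is correct and uses exactly the same ingredients as the paper's (search termination, search transparency, and the halt property); the paper merely packages the argument non-inductively, passing directly to a $\tomachsea$-normal form of $\state$ --- which exists by search termination, decodes to $\decode\state$ by transparency, and is therefore non-final by halt, so it must fire a principal transition --- whereas you unfold the same reasoning as a Noetherian induction along $\tomachsea$, peeling off one search step at a time. The two formulations are interchangeable, so there is nothing to fix.
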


\begin{proof}
  For any state $\state$ of $\mach$, let $\nfsea{\state}$ be a normal form of $\state$ with respect to $\tomachsea$: such a state exists because search transitions 
terminate.
  Since $\tomachsea$ is mapped on identities (by search transparency), one has $\decode{\nfsea{\state}} = 
\decode\state$.
  As $\decode\state$ is not $\tostrat$-normal by hypothesis, the halt property entails that 
$\nfsea{\state}$ is not final, therefore $\state \tomachsea^* \nfsea{\state} \tomachpr \statetwo$ for some state $\statetwo$. 
\end{proof}

\gettoappendix{thm:abs-impl}

\begin{proof}
\applabel{thm:abs-impl}\hfill
\begin{enumerate}
\item \emph{Runs to evaluations}: by induction on the length of the run. It follows easily from the properties of principal 
projection and search transparency of lax implementation systems.

\item \emph{Normalizing evaluations to runs}: the statement that we need to prove is:
\begin{center}
if $\deriv \colon \tm \tostrat^* \tmtwo$ with $\tmtwo$  $\tostrat$-normal then there is a $\mach$-run $\run \colon \compilrel\tm\statetwo \tomachine^* \state$ with $\state$ final such that $\decode\state = \tmtwo$ with $\sizepr\run = \size{\deriv}$.
\end{center}
We prove a more general statement where we replace the initial term $\tm$ and the initial state $\compilrel\tm\statetwo$ with 
a general state $\statetwo$ and use its decoding $\decode\statetwo$ for $\tm$: 
\begin{center}
if $\deriv:\decode\statetwo \tostrat^* \tmtwo$ with $\tmtwo$  
$\tostrat$-normal then there is a $\mach$-run $\run: \statetwo \tomach^* \state$ with $\state$ final such that 
$\decode\state = \tmtwo$ with $\sizepr\run = \size\deriv$.
\end{center}
 Then if we instantiate this more general statement on 
initial states we obtain the official statement, because by the initialization constraint of 
the machine if $\compilrel\tm\statetwo$ then $\decode{\statetwo} =_\alpha \tm$.

The proof of the generalized statement is by induction on $\size\deriv$. If $\size\deriv=0$ then consider 
$\nfsea{\statetwo}$, that by search transparency satisfies $\decode{\nfsea{\statetwo}} = 
\decode{\statetwo}$. Now, if $\nfsea{\statetwo}$ can do a principal transition then from $\decode{\statetwo}$ it is possible to do a $\tostrat$ step by  principal projection, which is absurd, because $\size\deriv=0$ and by the diamond property all evaluations sequences from a term have the same number and kind of steps. Then, $\nfsea{\statetwo}$ is a final state and there is a run $\run: \statetwo \tomachsea^*\nfsea{\statetwo}$ such 
that $\sizepr\run= 0$. The statement holds with $\state \defeq \nfsea{\statetwo}$.

If $\size\deriv=k>0$ then $\decode\statetwo \tostrat \tmthree \tostrat^{k-1} \tmtwo$ for some 
$\tmthree$. By the one-step transfer lemma (\reflemma{one-step-transfer}), we obtain $\statetwo\tomachsea^*\tomachpr 
\statethree$ for some $\statethree$. By search transparency and principal projection, we obtain 
$\deriv':\decode\statetwo\tostrat\decode\statethree$. If $\tmthree = \decode\statethree$ then we have and evaluation $\derivtwo:\decode\statethree \tostrat^{k-1} \tmtwo$. If $\tmthree \neq \decode\statethree$, the diamond property of $\tostrat$ gives anyway an 
evaluation $\derivtwo:\decode\statethree \tostrat^{k-1} \tmtwo$. We can then apply the \ih, obtaining a run $\runtwo:\statethree \tomach^* \state$ with $\state$ 
final and such that $\decode{\state} = \tmtwo$, and with $\sizepr\runtwo = k-1$. Note that 
the run $\run:\statetwo \tomachsea^*\tomachpr \statethree\tomach^* \state$ obtained by prefixing $\runtwo$ with the run given by the one-step transfer lemma satisfies the statement because $\sizepr\run = \sizepr\runtwo +1 = \size\derivtwo +1 = \size\deriv$.

\item \emph{Diverging evaluations to runs}: suppose that $\tostrat$ diverges on $\tm$ but $\mach$ terminates, that 
is, that there is a run $\run:\compilrel\tm\statetwo \tomach^* \state$ with $\state$ final. Then the projection 
$\decode\run:\tm \tostrat^*  \decode\state$ given by the \emph{run to evaluations} part (of this theorem) is a normalizing sequence 
by the halt property (guaranteeing that $\decode\state$ is $\tostrat$-normal). Then $\tostrat$ normalizes $\tm$ and so, by the diamond property (precisely by \emph{uniform 
normalization} implied by the diamond property, see \refapp{esc}), $\tostrat$ cannot diverge on 
$\tm$---absurd. Therefore $\mach$ diverges on $\statetwo$. Now, since $\tomachsea$ terminates, the diverging run 
from $\statetwo$ must have infinitely many principal transitions.\qedhere
 
\end{enumerate}
\end{proof}
\section{Proofs Omitted from \refsect{BAM} (Basic Abstract Machine)}

\gettoappendix{l:bam-invariants} %

\begin{proof}
\applabel{l:bam-invariants}
By induction on the length of the run $\run$ leading from the initial state to $\state$. If the run is empty the invariants hold because the initial term is closed and well-bound by hypotheses. If the run is non-empty, the invariants follow very easily from the \ih and the inspection of the transitions in \reffig{BAM}.\qedhere
\end{proof}

\gettoappendix{prop:bam-properties}

\begin{proof}
\applabel{prop:bam-properties}\hfill
\begin{enumerate}
\item If $\state = \twostate\ectx{\cuta\val\var \tm} \tomachsea \twostate{\ectxp{\cuta\val\var\ctxhole} }\tm = \statetwo$ then $\decode \statetwo = \ectxp{\cuta\val\var\ctxhole}\ctxholep\tm = \ectxp{\cuta\val\var\ctxholep\tm} = \ectxp{\cuta\val\var\tm} = \decode\state$.
\item Straightforward. We show it for one transition, they go all in the same way. If $\state = \twostate{ \ectxp{\cuta{\mvartwo}\mvar\ectxtwo} }{ \suba\mvar\val\var\tm }
\tomachaxmtwo
\twostate{ \ectxp\ectxtwo }{ \suba\mvartwo\val\var\tm }$. Then $\decode\state = \ectxp{\cuta{\mvartwo}\mvar\ectxtwop{\suba\mvar\val\var\tm}}$. By the well-bound invariant (\reflemma{bam-invariants}.2), there are no cuts on $\mvar$ in $\ectxtwo$ and $\ectx$. Then 
\begin{center}
$\ectxp{\cuta{\mvartwo}\mvar\ectxtwop{\suba\mvar\val\var\tm}} \toaxmtwo \ectxp{\ectxtwop{\suba\mvartwo\val\var\tm}} = \decode\statetwo$
\end{center}
which clearly is a basic step.
\item Straightforward, since there can only be a finite number of cuts in the active term.
\item Let $\state = \twostate\ectx\tm$ be a final state. Then the outermost constructor of $\tm$ is not a cut, otherwise $\tomachsea$ would apply. By the closure invariant (\reflemma{bam-invariants}.1), the outermost constructor is not a subtraction, nor a dereliction, nor a variable, otherwise there would be a cut on the variable occurrence of that constructor, and by clash-freeness (see the last paragraph of \refsect{prel-abs-mach}) the machine would make a transition. Then $\tm$ is an abstraction or a bang. Then $\decode\state = \ectxp\tm$ is an answer. By the closure invariant, $\ectxp\tm$ is a closed term. By \reflemma{weak-harmony}, $\ectxp\tm$ is normal for basic evaluation.\qedhere
\end{enumerate}
\end{proof}
\section{Definitions omitted from \refsect{sesame} (SESAME)}
\label{app:sesame}
\begin{definition}[Domain of multi contexts]
\label{def:namctx-domain}
The domain of a multi-context is the set of variables $\domp\namctx$ on which there is an out cut $\cuta\namvctx\var \namctxtwo$ in $\namctx$ such that there are some holes in $\namctxtwo$. Formally, it is defined as follows:
\begin{center}
\begin{tabular}{ccc}
$
\begin{array}{rll@{\hspace{.4cm}}@{\hspace{.4cm}} rll}
\multicolumn{6}{c}{\textsc{Domain of value multi contexts}}
\\
\domp\namctxholea & \defeq & \emptyset
&
\domp\var & \defeq & \emptyset
\\
\domp{\la\var\namctx}  & \defeq & \domp\namctx
&
\domp{\bang\namctx}  & \defeq & \domp\namctx
\end{array}$
\\\\
$\begin{array}{rlll}
\multicolumn{4}{c}{\textsc{Domain of multi contexts}}
\\
\domp{\cuta\namvctx\var \namctx}  & \defeq & \set\var\cup\domp\namctx & \mbox{if $\namctx$ is not a term}
\\
\domp{\cuta\namvctx\var \namctx}  & \defeq & \emptyset & \mbox{if $\namctx$ is a term}
\\
\domp{ \suba\mvar\namvctx\var\namctx}  & \defeq & \domp\namvctx\cup\domp\namctx
\\
\domp{\dera\evar\var\namctx}  & \defeq & \domp\namctx
\end{array}$
\end{tabular}
  \end{center}
\end{definition}


\begin{definition}[Well bound states]
\label{def:well-bound-state}
A SESAME state $\state=\twostate\namctx\pool$ is \emph{well bound} if:
\begin{itemize}
\item \emph{Term binders}:  if $\la\var\tm$, $\dera\evar\var\tm$, $\suba\mvar\val\var\tm$, or $\cuta\val\var\tm$ occur anywhere in $\state$ then the only other possible occurrences of $\var$ in $\state$ are as a free variable of $\tm$;

\item \emph{Context binders}: if $\namctx=\namctxtwop{\cuta\val\var\namctxthree}\mute$, $\namctx=\namctxtwop{\suba\mvar\val\var\namctxthree}\mute$, $\namctx=\namctxtwop{\dera\evar\var\namctxthree}\mute$, or $\namctx=\namctxtwop{\la\var\namctxthree}\mute$ then the only other possible occurrences of $\var$ in $\state$ are as a free variable of $\namctxthree$ or of jobs $\nctxholep{\tm}\name$ of the pool $\pool$ such that $\name\in\names{\namctxthree}$.
\end{itemize}
\end{definition}

\section{Proofs Omitted from \refsect{analysis} (Analysis of SESAME)}

\begin{lemma}[Inner extension of good contexts]
Let $\gctx$ be a good context. Then: \label{l:good-ctx-inner-extension}
\begin{enumerate}
\item \label{p:good-ctx-inner-extension-one}
$\gctxp{\cuta\val\var\ctxhole}$, $\gctxp{\suba\mvar\val\var\ctxhole}$, $\gctxp{\dera\evar\var\ctxhole}$, $\gctxp{\la\var\ctxhole}$, and $\gctxp{\bang\ctxhole}$ are good contexts.
\item \label{p:good-ctx-inner-extension-two}
If $\mvar \notin\domp\gctx$ then $\gctxp{\suba\mvar\ctxhole\var\tm}$ is a good context.
\end{enumerate}
\end{lemma}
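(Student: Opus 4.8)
The plan is to prove each point by induction on the good context $\gctx$ (equivalently, by induction on the mutually recursive derivations of ``$\gctx$ is a good context'' and ``$\vgctx$ is a good value context''), casing on the last production used. The one ingredient needed throughout concerns dominating variables: for each of the five one-hole contexts
\[
C \in \set{\, \cuta\val\var\ctxhole,\ \suba\mvar\val\var\ctxhole,\ \dera\evar\var\ctxhole,\ \la\var\ctxhole,\ \bang\ctxhole \,}
\]
one has $\dfv{C}=\emptyset$. This is read off \reffig{strategy}: $\dfv{\cuta\val\var\ctxhole}=\dfv{\la\var\ctxhole}=\dfv\ctxhole\setminus\set\var=\emptyset$; $\dfv{\bang\ctxhole}=\dfv\ctxhole=\emptyset$; and for $\suba\mvar\val\var\ctxhole$ and $\dera\evar\var\ctxhole$ the test ``$\var\in\dfv\ctxhole$'' fails on $\dfv\ctxhole=\emptyset$, so the ``otherwise'' clause again gives $\emptyset$. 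Since $\dfv$ of a context is computed by a recursion running down the spine and bottoming out at $\dfv\ctxhole=\emptyset$, inserting such a $C$ into the hole of an arbitrary context $\ctx$ leaves the result unchanged, i.e.\ $\dfv{\ctxp{C}}=\dfv\ctx$. This is precisely what keeps the side condition of the production $\cuta\val\var\gctx$ in force under the hole extensions of the first point.

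For the first point: if $\gctx=\ctxhole$, then $\gctxp{C}=C$, and each of the five $C$'s is itself a good context ($\la\var\ctxhole$ and $\bang\ctxhole$ are good value contexts; $\suba\mvar\val\var\ctxhole$ and $\dera\evar\var\ctxhole$ instantiate their own productions with body $\ctxhole$; and $\cuta\val\var\ctxhole$ instantiates $\cuta\val\var\gctx$ since $\var\notin\dfv\ctxhole=\emptyset$). If $\gctx$ has an abstraction, a promotion, a dereliction, or a subtraction (with a good body $\gctx'$) as its top constructor, then $\gctxp{C}$ is the same constructor applied to $\gctx'\ctxholep{C}$, which is good by the \ih, and the same production re-applies --- the abstraction and promotion cases additionally noting that the result is again a good value context. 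If $\gctx=\cuta\valtwo\varthree\gctx'$ with $\varthree\notin\dfv{\gctx'}$, then $\gctxp{C}=\cuta\valtwo\varthree{\gctx'\ctxholep{C}}$, whose body is good by the \ih, and $\dfv{\gctx'\ctxholep{C}}=\dfv{\gctx'}$ by the remark above, so $\varthree\notin\dfv{\gctx'\ctxholep{C}}$ and $\cuta\val\var\gctx$ re-applies. Finally, if $\gctx=\suba\mvartwo\vgctx\varthree\tm$, case on $\vgctx$: when $\vgctx$ is an abstraction or promotion over a good context $\gctx'$, the result is $\suba\mvartwo{\vgctx\ctxholep{C}}\varthree\tm$ with $\vgctx\ctxholep{C}$ again a good value context (its body $\gctx'\ctxholep{C}$ is good by the \ih); when $\vgctx=\ctxhole$ the hole lies in a value slot, so one appeals to the split-preserving definition of plugging of \cite{DBLP:journals/lmcs/Accattoli23} --- inserting $\la\var\ctxhole$ or $\bang\ctxhole$ keeps a good value context as the subtraction's left subterm, whereas inserting $\cuta\val\var\ctxhole$, $\suba\mvar\val\var\ctxhole$, or $\dera\evar\var\ctxhole$ floats the new left constructor out of the subtraction, producing a good context via $\cuta\val\var\gctx$, $\suba\mvar\val\var\gctx$, resp.\ $\dera\evar\var\gctx$ (the cut case using $\dfv{\suba\mvartwo\ctxhole\varthree\tm}=\set\mvartwo$ and $\var\neq\mvartwo$, which holds by $\alpha$-freshness).

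For the second point: run the same induction with $D\defeq\suba\mvar\ctxhole\var\tm$ in the role of $C$ and carrying the extra hypothesis $\mvar\notin\domp\gctx$. The base case $\gctx=\ctxhole$ gives $\gctxp{D}=D$, good via $\suba\mvar\vgctx\var\tm$ instantiated at $\vgctx=\ctxhole$. In every inductive step with a good sub-context $\gctx'$, the top constructor of $\gctx$ is re-applied over $\gctx'\ctxholep{D}$, which is good by the \ih: $\mvar\notin\domp{\gctx'}$ follows from $\mvar\notin\domp\gctx$ because $\domp{\gctx'}\subseteq\domp\gctx$. The only subtle step is a top cut $\gctx=\cuta\valtwo\varthree\gctx'$ with $\varthree\notin\dfv{\gctx'}$: re-applying $\cuta\val\var\gctx$ requires $\varthree\notin\dfv{\gctx'\ctxholep{D}}$; now $\dfv{D}=\set\mvar$, and since $\mvar$ is fresh no spine binder of $\gctx'$ can remove it and no conditional clause of $\dfv$ is perturbed, so $\dfv{\gctx'\ctxholep{D}}\subseteq\dfv{\gctx'}\cup\set\mvar$, and then $\varthree\notin\dfv{\gctx'}$ together with $\varthree\neq\mvar$ (freshness of $\mvar$) closes the case. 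Value-slot holes are handled by split-preserving plugging exactly as in the first point.

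The step I expect to cost the most care is bookkeeping rather than conceptual: following, case by case, which production of the good-context grammar re-applies once a one-constructor context is inserted, and in particular handling split-preserving plugging at value positions (the $\vgctx=\ctxhole$ sub-cases), together with the $\alpha$-freshness side conditions that let the $\dfv$-computation behave uniformly under the extension. The $\dfv$-invariance remark at the outset is the lever that keeps all the $\cuta\val\var\gctx$ side conditions under control with no further work.
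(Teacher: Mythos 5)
Your proposal is correct and takes essentially the same route as the paper's (much terser) proof: a straightforward induction on $\gctx$, with point 1 resting on the inserted one-constructor contexts having empty dominating-variable sets (so all $\cuta\val\var\gctx$ side conditions survive unchanged), and point 2 on the inserted subtraction contributing exactly $\set\mvar$, so that the hypothesis $\mvar\notin\domp\gctx$ discharges the cut cases. One small repair: in the top-cut case of point 2, the inequality $\varthree\neq\mvar$ should be justified not by ``freshness'' of $\mvar$ (which is not a hypothesis of the lemma) but by the actual hypothesis you are already carrying, since the cut binding $\varthree$ has the hole in its body and hence $\varthree\in\domp\gctx$ while $\mvar\notin\domp\gctx$.
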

\begin{proof}
The proofs are straightforward inductions on $\gctx$. Intuitively, for the first point the statement follows from the fact that the inner extensions  of $\gctx$ have empty sets of dominating free variables (and the hole is not in the left sub-term of a cut); the second point has $\mvar$ as dominating variable, which is why the hypothesis $\mvar \notin\domp\gctx$ is needed.\qedhere
\end{proof}

\begin{lemma}[Linear pruning of good contexts]
Let $\ctxp{\cuta\mval\mvar\ctxtwo}$ be a good context. Then $\ctxp\ctxtwo$ is a good context.\label{l:linear-pruning-good-ctxs}
\end{lemma}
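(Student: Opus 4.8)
The plan is to establish the lemma by structural induction on the outer context $\ctx$, strengthening the statement so that it also records that pruning leaves dominating free variables unchanged: \emph{if $\ctxp{\cuta\mval\mvar\ctxtwo}$ is a good context, then $\ctxp\ctxtwo$ is a good context and $\dfv{\ctxp\ctxtwo}=\dfv{\ctxp{\cuta\mval\mvar\ctxtwo}}$}. The extra $\dfv$-clause is precisely what is needed to transport the side condition $\var\notin\dfv\gctx$ of the good-context cut production $\cuta\val\var\gctx$ from the unpruned context to the pruned one, so it is what makes the induction self-supporting. (Equivalently, one could isolate the small auxiliary fact that $\mvar\notin\dfv\ctxtwo$ implies $\dfv{\ctxp{\cuta\mval\mvar\ctxtwo}}=\dfv{\ctxp\ctxtwo}$ and prove it separately by induction on $\ctx$.)

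The heart of the argument is the base case $\ctx=\ctxhole$, where the hypothesis says that $\cuta\mval\mvar\ctxtwo$ itself is a good context. The good-context grammar of \reffig{strategy} is syntax-directed, and its only production with a cut at the root is $\cuta\val\var\gctx$ under the condition $\var\notin\dfv\gctx$; hence inversion yields both that $\ctxtwo$ is a good context and that $\mvar\notin\dfv\ctxtwo$. Thus $\ctxp\ctxtwo=\ctxtwo$ is good. Moreover, since $\mval$ is a value (so the unique hole of $\cuta\mval\mvar\ctxtwo$ sits in the right sub-term of the cut), the relevant clause of the definition of $\dfv$ gives $\dfv{\cuta\mval\mvar\ctxtwo}=\dfv\ctxtwo\setminus\set\mvar$, which equals $\dfv\ctxtwo=\dfv{\ctxp\ctxtwo}$ because $\mvar\notin\dfv\ctxtwo$; this is the required equality.

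For the inductive step I would first observe that, in a good context, the hole can never lie inside a cut value (there is no production $\cuta{\vgctx}\var\tm$) and cannot fill a subtraction's left slot as the term-context $\cuta\mval\mvar\ctxtwo$ (that slot must hold a value), so $\ctx$ is of the form $\la\vartwo{\ctx'}$, $\bang{\ctx'}$, $\suba\mvar\val\vartwo{\ctx'}$, $\dera\evar\vartwo{\ctx'}$, $\cuta\val\vartwo{\ctx'}$, or $\suba\mvar{\vgctx}\vartwo\tm$ with the hole inside $\vgctx$ (which must then itself be of the form $\la\vartwo{\ctx'}$ or $\bang{\ctx'}$, the bare $\ctxhole$ being excluded as above). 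In each case, inversion of the good-context grammar isolates a smaller good context $\ctx'[\cuta\mval\mvar\ctxtwo]$; the induction hypothesis yields that $\ctx'[\ctxtwo]$ is a good context with $\dfv{\ctx'[\ctxtwo]}=\dfv{\ctx'[\cuta\mval\mvar\ctxtwo]}$. One then reassembles the same production(s) on top: this is legal because, in the cut case, the side condition $\var\notin\dfv{\ctx'[\cdot]}$ is preserved by the $\dfv$-equality, while the productions $\suba\mvar\val\vartwo{\cdot}$, $\dera\evar\vartwo{\cdot}$, $\la\vartwo{\cdot}$, $\bang{\cdot}$, and $\suba\mvar{\vgctx}\vartwo\tm$ carry no side condition. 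Finally, the $\dfv$ of the reassembled context is computed from $\dfv{\ctx'[\ctxtwo]}$ (together with membership tests on it) by exactly the same clause of the definition of $\dfv$ that computes $\dfv{\ctxp{\cuta\mval\mvar\ctxtwo}}$ from $\dfv{\ctx'[\cuta\mval\mvar\ctxtwo]}$, so the two values coincide. This closes the induction, and the lemma is the special case where one forgets the $\dfv$-clause.

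I expect the main obstacle to be purely organizational: getting the case analysis of the inductive step right — in particular being precise about why the "hole inside a value" configurations are genuinely impossible around a cut, and how reassembly in the subtraction-value case also threads through the $\la/\bang$ layer of the value context — and recognizing that the $\dfv$-invariant has to be carried through the induction (or extracted as the auxiliary lemma mentioned above) rather than hoped to follow for free. Once that is set up, every case is a one-line unfolding of the definitions of $\dfv$ and of good contexts in \reffig{strategy}.
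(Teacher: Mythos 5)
Your proposal is correct and follows essentially the same route as the paper, which disposes of this lemma by a straightforward induction on $\ctx$ with the observation that removing a cut can only make the goodness constraints easier to satisfy. Your strengthened induction hypothesis (pruning preserves $\dfv$ exactly, thanks to the inverted side condition $\mvar\notin\dfv\ctxtwo$ at the cut) is just the precise form of that observation needed to transport the $\var\notin\dfv\gctx$ conditions of outer cuts, so it matches the paper's argument, merely spelled out in more detail.
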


\begin{proof}
The proof is a straightforward induction on $\ctx$. Intuitively, removing a cut makes goodness easier to establish, because the constraints on goodness are related to cuts.\qedhere
\end{proof}

\begin{lemma}[Linear pruning of approximants]
Let $\namctxp{\cuta\mval\mvar\namctxtwo}\mute$ be an approximant. Then $\namctxp{\namctxtwo}\mute$ is an approximant.\label{l:linear-pruning-approximants}
\end{lemma}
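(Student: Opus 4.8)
The plan is to prove this exactly as its context-level companion \reflemma{linear-pruning-good-ctxs}: by induction on the multi-context $\namctx$ carrying the decomposition-hole $\mute$, checking at each step that the two defining clauses of an approximant survive the deletion of the leading cut $\cuta\mval\mvar$ at the $\mute$-spot. The moral is the usual one — removing a cut can only make the approximant clauses easier to satisfy, since all of them constrain cuts — but the \emph{linearity} of the removed cut must be used to rule out the degenerate (garbage) configuration. So the first thing I would record are the two structural facts that make the deletion safe: since $\mvar$ is multiplicative, properness forces $\mvar\in\fv\namctxtwo$ and linearity of multiplicative variables makes that occurrence unique; and since $\mval$ is a multiplicative value it is a term, hence it carries no named holes.

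Next I would check the two clauses of ``approximant''. The \emph{unique names} clause is immediate: passing from $\namctxp{\cuta\mval\mvar\namctxtwo}{\mute}$ to $\namctxp{\namctxtwo}{\mute}$ only erases $\mval$, which carries no names, so the multiset of name occurrences is unchanged and uniqueness is inherited. The \emph{out cuts are hole-free and hereditarily garbage} clause is the real content, and I would handle it by relating the out cuts of $\namctxp{\namctxtwo}{\mute}$ to those of $\namctxp{\cuta\mval\mvar\namctxtwo}{\mute}$ through a case analysis on the position of the $\mute$-hole inside $\namctx$. If that hole lies inside a cut of $\namctx$, then neither $\cuta\mval\mvar$ nor any cut of $\namctxtwo$ is an out cut on either side, so the out cuts are literally the same constructors of $\namctx$ before and after; for each of them I would argue that deleting $\cuta\mval\mvar$ does not enlarge the relevant $\ov$-set (occurrences living inside the deleted value $\mval$ were never out occurrences, and detaching $\namctxtwo$ from the right sub-term of $\cuta\mval\mvar$ moves nothing out of a cut value) and does not turn a term-valued cut into a hole-valued one (because $\mval$ is a term), so the clause transfers. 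If instead the $\mute$-hole is not under any cut of $\namctx$ — in particular in the base case where $\namctx$ is the bare hole — then $\cuta\mval\mvar\namctxtwo$ is itself an out cut of the source, so the clause applied to it, which is \emph{non-vacuous} precisely because properness gives $\mvar\in\fv\namctxtwo$, tells us $\mval$ is a term and $\mvar\notin\ov\namctxtwo$, and from this, together with the approximant structure already available for $\namctx$ and its other holes, I would deduce the clause for the cuts of $\namctxtwo$ that get exposed as new out cuts.

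The main obstacle I expect is exactly this last step: arguing that the newly-exposed out cuts of $\namctxtwo$ are hole-free and hereditarily garbage, since the source approximant only speaks directly about the single out cut $\cuta\mval\mvar\namctxtwo$. To control the top cuts of $\namctxtwo$ I would lean on a further property of the states in which the lemma is invoked — namely that in a reachable SESAME state no cut of the multi-context has a hole as its value (only subtractions do, via $\tomachseatwo$), which excludes the would-be pathological configurations — or, equivalently, strengthen the statement so that $\namctxtwo$ is itself known to be an approximant. Once this wrinkle is settled the induction closes routinely, in the same style as \reflemma{good-ctx-inner-extension} and \reflemma{linear-pruning-good-ctxs}.
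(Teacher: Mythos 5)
There is a genuine gap, and it stems from a misreading of the notion of \emph{out cut}. In the paper, a cut is an out cut unless its position lies inside the \emph{value} (left sub-term) of another cut: the disqualifying decomposition is $\ctx=\ctxtwop{\cuta\vctx\vartwo\tmthree}$, whose hole sits in the value context $\vctx$. Occurring in the \emph{right} sub-term of a cut does not disqualify a cut from being out (this is also what makes the notion of ``innermost out cut'' in \reflemma{good-w-redex-exists} non-trivial, and what final garbage collection relies on). Consequently, the cuts of $\namctxtwo$ that are not inside cut values are \emph{already} out cuts of $\namctxp{\cuta\mval\mvar\namctxtwo}\mute$; nothing gets ``newly exposed'' when $\cuta\mval\mvar$ is deleted, and the hypothesis directly gives the hereditary-garbage/hole-freeness condition for them. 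Your third paragraph identifies exactly this point as the obstacle and then resolves it by importing hypotheses that are not in the statement (a reachable-state invariant of SESAME, or strengthening the lemma so that $\namctxtwo$ is an approximant). That does not prove the lemma as stated, and it would also entangle this auxiliary lemma with the machine invariants it is meant to support.

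With the correct reading, the argument collapses to the paper's short observation: the out cuts of $\namctxp\namctxtwo\mute$ are among those of $\namctxp{\cuta\mval\mvar\namctxtwo}\mute$, and removing $\cuta\mval\mvar$ cannot alter the uniqueness of names (its value $\mval$ carries no holes), cannot add holes, and cannot create occurrences of out variables (occurrences inside $\mval$ disappear, and every other occurrence lies inside a cut value after the removal iff it did before, since only the cut $\cuta\mval\mvar$ is gone). A further, minor issue: your appeal to properness and to the linearity of $\mvar$ (to ensure $\mvar\in\fv\namctxtwo$) uses hypotheses the lemma does not assume and that the proof does not need --- whether $\mvar$ occurs in $\namctxtwo$ is irrelevant, and multiplicativity plays no role beyond matching the transitions in which the lemma is applied.
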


\begin{proof}
The removal of $\cuta\mval\mvar$ cannot alter the uniqueness of names, nor add holes, nor create occurrences of out variables, thus $\namctxp{\namctxtwo}\mute$ is an approximant because $\namctxp{\cuta\mval\mvar\namctxtwo}\mute$ is.\qedhere
\end{proof}

\gettoappendix{l:sesam-invariants} %

\begin{proof}
\applabel{l:sesam-invariants}
By induction on the length of the run $\run$ leading from the initial state to $\state$. If the run is empty then the invariants straightforwardly hold, because $\namctxholea$ is an approximant, pools of initial states have only one job named $\name$, the initial term is well-bound, and $\namctxholea$ is a good context. If the run is non-empty each point follows from the \ih plus a straightforward inspection of the SESAM transitions, as detailed below.

We define some notations for the contextual decoding invariant. Firstly, note, since consecutive pluggings on distinct names in a multi-context commute, we have:
\begin{center}
$\namctx^\state_{\tmtwo_1,\ldots,\tmtwo_{i-1}|\name_i|\tmtwo_{i+1},\ldots,\tmtwo_k} = \namctx\namctxholep{\tmtwo_1}{\name_1}\ldots\namctxholep{\tmtwo_{i-1}}{\name_{i-1}}\namctxholep{\tmtwo_{i+1}}{\name_{i+1}}\ldots\namctxholep{\tmtwo_k}{\name_k} \namctxholep{\ctxhole}{\name_{i}}$
\end{center} 
Then, we define the following pool:
\begin{center}
$\pool^\state_{\tmtwo_1,\ldots,\tmtwo_{i-1}|\name_i|\tmtwo_{i+1},\ldots,\tmtwo_k} \defeq \namctxholep{\tmtwo_1}{\name_1}\cons\ldots\cons\namctxholep{\tmtwo_{i-1}}{\name_{i-1}}\cons\namctxholep{\tmtwo_{i+1}}{\name_{i+1}}\cons\ldots\cons\namctxholep{\tmtwo_k}{\name_k}$
\end{center} 
So that $\namctx^\state_{\tmtwo_1,\ldots,\tmtwo_{i-1}|\name_i|\tmtwo_{i+1},\ldots,\tmtwo_k} = \namctx\ctxholep{\pool^\state_{\tmtwo_1,\ldots,\tmtwo_{i-1}|\name_i|\tmtwo_{i+1},\ldots,\tmtwo_k}}\namctxholep\ctxhole{\name_i}$.

Now, cases of the last transition $\statetwo \tosesame \state$:
\begin{itemize}
\item $\statetwo = \twostatetrans \namctx { \nctxholep{\cuta\val\var \tm}\name \cons \pooltwo } 
\tomachseaone
{ \namctx  \nctxholep{\cuta\val\var \nctxhole\name}\name } { \nctxholep{\tm}\name \cons \pooltwo }=\state$
\begin{enumerate}
\item \emph{Approximant}: $\namctx$ is an approximant by \ih then $\namctxtwo \defeq \namctx  \nctxholep{\cuta\val\var \nctxhole\name}\name$ is an approximant, as we now argue. About names, $\name$ is removed and immediately re-added, so it preserves its unique occurrence. About out cuts: $\namctxtwo$ does not add holes in cut values, nor out variable occurrences, and in particular, the newly cut variable $\var$ has no occurrences in $\namctxtwo$.
\item \emph{Names}: it follows by the \ih
\item \emph{Binders}: it follows by the \ih In particular, the property for $\var$ in $\namctx  \nctxholep{\cuta\val\var \nctxhole\name}\name$ follows from the property for $\cuta\val\var\tm$ in $\statetwo$ given by the \ih
\item \emph{Contextual decoding}: for all names $\name_i\in\fn\pool$ different from $\name$, the statement follows from the \ih For $\name$, we have that:
\begin{center}
$\begin{array}{lllll}
\namctx^\state_{|\name|\tmtwo_2,\ldots,\tmtwo_k}
& = & (\namctx  \nctxholep{\cuta\val\var \nctxhole\name}\name)\ctxholep{\pool^\state_{|\name|\tmtwo_2,\ldots,\tmtwo_k}}\namctxholep\ctxhole\name
\\
& = & \namctx\ctxholep{\pool^\state_{|\name|\tmtwo_2,\ldots,\tmtwo_k}} \nctxholep{\cuta\val\var \ctxhole}\name
\\
& = & \namctx\ctxholep{\pool^{\statetwo}_{|\name|\tmtwo_2,\ldots,\tmtwo_k}} \nctxholep{\cuta\val\var \ctxhole}\name
\\
& = & \namctx^{\statetwo}_{|\name|\tmtwo_2,\ldots,\tmtwo_k} \nctxholep{\cuta\val\var \ctxhole}\name
\end{array}$
\end{center}
By \ih, $\namctx^{\statetwo}_{|\name|\tmtwo_2,\ldots,\tmtwo_k}$ is a good context. By \reflemmap{good-ctx-inner-extension}{one},\\ $\namctx^{\statetwo}_{|\name|\tmtwo_2,\ldots,\tmtwo_k}\nctxholep{\cuta\val\var \ctxhole}\name$ is a good context.
\end{enumerate}

\item $\statetwo = \twostatetrans \namctx { \nctxholep{\suba\mvar\val\var \tm}\name \cons \pool }
\tomachseatwo
{ \namctx\nctxholep{\suba\mvar{\nctxhole\nametwo}\var \nctxhole\name}\name } { \nctxholep{\val}\nametwo \cons \nctxholep{\tm}\name  \cons\pool } =\state$ if $\mvar\notin\domp\namctx$ and $\nametwo$ is fresh.
\begin{enumerate}
\item \emph{Approximant}: it follows from the $\namctx$ being an approximant by \ih, since the extension with $\suba\mvar{\nctxhole\nametwo}\var \nctxhole\name$ does not add occurrence of out variables, because by hypothesis $\mvar\notin\domp\namctx$, and it preserves the uniqueness of names, since $\name$ is recycled and $\nametwo$ is fresh by hypothesis.
\item \emph{Names}: it follows from the \ih and freshness of $\nametwo$.
\item \emph{Binders}: it follows from the \ih
\item \emph{Contextual decoding}: for all names $\name_i\in\fn\pool$ different from $\name$ and $\nametwo$, the statement follows from the \ih For $\name$, we have that:
\begin{center}
$\begin{array}{lllll}
\namctx^\state_{\valtwo |\name|\tmtwo_2,\ldots,\tmtwo_k}
& = & (\namctx\nctxholep{\suba\mvar{\nctxhole\nametwo}\var \nctxhole\name}\name)\ctxholep{\pool^\state_{\valtwo |\name|\tmtwo_2,\ldots,\tmtwo_k}}\namctxholep\ctxhole\name
\\
& = & \namctx\ctxholep{\pool^{\statetwo}_{|\name|\tmtwo_2,\ldots,\tmtwo_k}} \nctxholep{\suba\mvar\valtwo\var \ctxhole}\name
\\
& = & \namctx^{\statetwo}_{|\name|\tmtwo_2,\ldots,\tmtwo_k} \nctxholep{\suba\mvar\valtwo\var \ctxhole}\name
\end{array}$
\end{center}
By \ih, $\namctx^{\statetwo}_{|\name|\tmtwo_2,\ldots,\tmtwo_k}$ is a good context. By \reflemmap{good-ctx-inner-extension}{one}, $\namctx^{\statetwo}_{|\name|\tmtwo_2,\ldots,\tmtwo_k}\nctxholep{\suba\mvar\valtwo\var \ctxhole}\name$ is a good context.

Similarly, for $\nametwo$ we obtain that $\namctx^\state_{|\nametwo|\tmtwo_1,\tmtwo_2,\ldots,\tmtwo_k}  = \namctx^{\statetwo}_{|\name|\tmtwo_2,\ldots,\tmtwo_k} \nctxholep{\suba\mvar\ctxhole\var \tmtwo_1}\name$, which is a good context by \reflemmap{good-ctx-inner-extension}{two}, which can be applied because:
\begin{enumerate}
\item $\mvar\notin\domp\namctx$ by hypothesis;
\item That implies that $\mvar\notin\domp{\namctx^{\statetwo}_{|\name|\tmtwo_2,\ldots,\tmtwo_k}}$ because plugging terms in a multi context does not extend its dominion.
\end{enumerate}
\end{enumerate}

\item $\statetwo = \twostatetrans \namctx { \nctxholep{\dera\evar\var\tm}\name \cons \pool }
\tomachseathree
 { \namctx   \nctxholep{\dera\evar\var \nctxhole\name}\name } { \nctxholep{\tm}\name \cons \pool }
 =\state$  if $\evar\notin\domp\namctx$.
\begin{enumerate}
\item \emph{Approximant}: it follows from the $\namctx$ being an approximant by \ih, since the extension with $\dera\evar\var \nctxhole\name$ does not add occurrence of out variables, because by hypothesis $\evar\notin\domp\namctx$, and it preserves the uniqueness of names, since $\name$ is recycled.
\item \emph{Names}: it follows from the \ih
\item \emph{Binders}: it follows from the \ih
\item \emph{Contextual decoding}: for all names $\name_i\in\fn\pool$ different from $\name$, the statement follows from the \ih For $\name$ the reasoning is essentially as in the cut case, we have that:
\begin{center}
$\begin{array}{lllll}
\namctx^\state_{|\name|\tmtwo_2,\ldots,\tmtwo_k}
& = & (\namctx  \nctxholep{\dera\evar\var \nctxhole\name}\name)\ctxholep{\pool^\state_{|\name|\tmtwo_2,\ldots,\tmtwo_k}}\namctxholep\ctxhole\name
\\
& = & \namctx\ctxholep{\pool^\state_{|\name|\tmtwo_2,\ldots,\tmtwo_k}} \nctxholep{\dera\evar\var \ctxhole}\name
\\
& = & \namctx\ctxholep{\pool^{\statetwo}_{|\name|\tmtwo_2,\ldots,\tmtwo_k}} \nctxholep{\dera\evar\var \ctxhole}\name
\\
& = & \namctx^{\statetwo}_{|\name|\tmtwo_2,\ldots,\tmtwo_k} \nctxholep{\dera\evar\var \ctxhole}\name
\end{array}$
\end{center}
By \ih, $\namctx^{\statetwo}_{|\name|\tmtwo_2,\ldots,\tmtwo_k}$ is a good context. By \reflemmap{good-ctx-inner-extension}{one},\\ $\namctx^{\statetwo}_{|\name|\tmtwo_2,\ldots,\tmtwo_k}\nctxholep{\dera\evar\var \ctxhole}\name$ is a good context.
\end{enumerate}

\item $\statetwo = \twostatetrans \namctx { \nctxholep{\la\var\tm}\name \cons \pool  }
\tomachseafour
{ \namctx  \nctxholep{ \la\var\nctxhole\name}\name } { \nctxholep{\tm}\name \cons \pool }
 =\state$.
\begin{enumerate}
\item \emph{Approximant}: it follows from the \ih
\item \emph{Names}: it follows from the \ih
\item \emph{Binders}: it follows from the \ih

\item \emph{Contextual decoding}: for all names $\name_i\in\fn\pool$ different from $\name$, the statement follows from the \ih For $\name$ the reasoning is essentially as in the cut case, we have that:
\begin{center}
$\begin{array}{lllll}
\namctx^\state_{|\name|\tmtwo_2,\ldots,\tmtwo_k}
& = & (\namctx  \nctxholep{\la\var \nctxhole\name}\name)\ctxholep{\pool^\state_{|\name|\tmtwo_2,\ldots,\tmtwo_k}}\namctxholep\ctxhole\name
\\
& = & \namctx\ctxholep{\pool^\state_{|\name|\tmtwo_2,\ldots,\tmtwo_k}} \nctxholep{\la\var \ctxhole}\name
\\
& = & \namctx\ctxholep{\pool^{\statetwo}_{|\name|\tmtwo_2,\ldots,\tmtwo_k}} \nctxholep{\la\var \ctxhole}\name
\\
& = & \namctx^{\statetwo}_{|\name|\tmtwo_2,\ldots,\tmtwo_k} \nctxholep{\la\var \ctxhole}\name
\end{array}$
\end{center}
By \ih, $\namctx^{\statetwo}_{|\name|\tmtwo_2,\ldots,\tmtwo_k}$ is a good context. By \reflemmap{good-ctx-inner-extension}{one},\\ $\namctx^{\statetwo}_{|\name|\tmtwo_2,\ldots,\tmtwo_k}\nctxholep{\la\var \ctxhole}\name$ is a good context.
\end{enumerate}

\item $\statetwo = \twostatetrans \namctx { \nctxholep{\bang\tm}\name \cons \pool  }
\tomachseafive
{ \namctx  \nctxholep{ \bang\nctxhole\name}\name } { \nctxholep{\tm}\name \cons \pool }
 =\state$.
\begin{enumerate}
\item \emph{Approximant}: it follows from the \ih
\item \emph{Names}: it follows from the \ih
\item \emph{Binders}: it follows from the \ih
\item \emph{Contextual decoding}: for all names $\name_i\in\fn\pool$ different from $\name$, the statement follows from the \ih For $\name$ the reasoning is essentially as in the cut case, we have that:
\begin{center}
$\begin{array}{lllll}
\namctx^\state_{|\name|\tmtwo_2,\ldots,\tmtwo_k}
& = & (\namctx  \nctxholep{\bang \nctxhole\name}\name)\ctxholep{\pool^\state_{|\name|\tmtwo_2,\ldots,\tmtwo_k}}\namctxholep\ctxhole\name
\\
& = & \namctx\ctxholep{\pool^\state_{|\name|\tmtwo_2,\ldots,\tmtwo_k}} \nctxholep{\bang \ctxhole}\name
\\
& = & \namctx\ctxholep{\pool^{\statetwo}_{|\name|\tmtwo_2,\ldots,\tmtwo_k}} \nctxholep{\bang \ctxhole}\name
\\
& = & \namctx^{\statetwo}_{|\name|\tmtwo_2,\ldots,\tmtwo_k} \nctxholep{\bang \ctxhole}\name
\end{array}$
\end{center}
By \ih, $\namctx^{\statetwo}_{|\name|\tmtwo_2,\ldots,\tmtwo_k}$ is a good context. By \reflemmap{good-ctx-inner-extension}{one},\\ $\namctx^{\statetwo}_{|\name|\tmtwo_2,\ldots,\tmtwo_k}\nctxholep{\bang \ctxhole}\name$ is a good context.
\end{enumerate}

\item $\statetwo = \twostatetrans \namctx { \nctxholep{\var}\name  \cons \pool }
\tomachseasix
{ \namctx  \nctxholep\var\name }\pool
 =\state$  if $\var\notin\domp\namctx$.
\begin{enumerate}
\item \emph{Approximant}: it follows from the $\namctx$ being an approximant by \ih, since the extension with $\var$ does not add occurrence of out variables, because by hypothesis $\var\notin\domp\namctx$, and it preserves the uniqueness of names, since $\name$ is removed from both the multi context and the pool.
\item \emph{Names}: it follows from the \ih
\item \emph{Binders}: it follows from the \ih
\item \emph{Contextual decoding}: let $\name_i\in\fn\pool$ and note that:
\begin{center}
$\begin{array}{lllll}
\namctx^\state_{\tmtwo_1,\ldots,\tmtwo_{i-1}|\name_i|\tmtwo_{i+1},\ldots,\tmtwo_k}
& = & (\namctx  \nctxholep{\var}\name)\ctxholep{\pool^\state_{\tmtwo_1,\ldots,\tmtwo_{i-1}|\name_i|\tmtwo_{i+1},\ldots,\tmtwo_k}}
\\
& = & \namctx\ctxholep{\pool^{\statetwo}_{\var,\tmtwo_1,\ldots,\tmtwo_{i-1}|\name_i|\tmtwo_{i+1},\ldots,\tmtwo_k}}
\end{array}$
\end{center}
By \ih, $\namctx\ctxholep{\pool^{\statetwo}_{\var,\tmtwo_1,\ldots,\tmtwo_{i-1}|\name_i|\tmtwo_{i+1},\ldots,\tmtwo_k}}$ is a good context.
\end{enumerate}

\item $\statetwo = \twostatetrans { \namctxp{\cuta\mvartwo\mvar\namctxtwo}\mute } { \nctxholep{\suba\mvar\val\var\tm}\name \cons \pool }
\tomachaxmtwo
{ \namctxp\namctxtwo\mute } { \nctxholep{\suba\mvartwo\val\var\tm}\name \cons \pool } = \state$.
\begin{enumerate}
\item \emph{Approximant}: it follows from the \ih and the linear pruning of approximants lemma (\reflemma{linear-pruning-approximants}).
\item \emph{Names}: it follows from the \ih and Point 1: approximants have no holes in cut values, so that removing the cut on $\mvar$ from the approximant $\namctxp{\cuta\mvartwo\mvar\namctxtwo}\mute$ does not change the set of names.
\item \emph{Binders}: it follows from the \ih
\item \emph{Contextual decoding}: by definition, we have:
\begin{center}
$\begin{array}{lllll}
\namctx^\state_{\tmtwo_1,\ldots,\tmtwo_{i-1}|\name_i|\tmtwo_{i+1},\ldots,\tmtwo_k}
& = & (\namctxp\namctxtwo\mute)\ctxholep{\pool^\state_{\tmtwo_1,\ldots,\tmtwo_{i-1}|\name_i|\tmtwo_{i+1},\ldots,\tmtwo_k}}
\\
\mbox{and}
\\
\namctx^{\statetwo}_{\tmtwo_1,\ldots,\tmtwo_{i-1}|\name_i|\tmtwo_{i+1},\ldots,\tmtwo_k}
& = & (\namctxp{\cuta\mvartwo\mvar\namctxtwo}\mute)\ctxholep{\pool^\state_{\tmtwo_1,\ldots,\tmtwo_{i-1}|\name_i|\tmtwo_{i+1},\ldots,\tmtwo_k}}
\end{array}$
\end{center}
By \ih, $\namctx^{\statetwo}_{\tmtwo_1,\ldots,\tmtwo_{i-1}|\name_i|\tmtwo_{i+1},\ldots,\tmtwo_k}$ is a good context. By linear pruning of good contexts (\reflemma{linear-pruning-good-ctxs}), $\namctx^\state_{\tmtwo_1,\ldots,\tmtwo_{i-1}|\name_i|\tmtwo_{i+1},\ldots,\tmtwo_k}$ is a good context.
\end{enumerate}

\item $\statetwo = \twostatetrans { \namctxp{\cuta{\la\vartwo\lctxp\valtwo}\mvar\namctxtwo}\mute } { \nctxholep{\suba\mvar\val\var \tm}\name \cons \pool }
\tomachlolli
{ \namctxp\namctxtwo\mute } { \nctxholep{\cuta\val\vartwo\lctxp{\cuta\valtwo\var\tm}}\name \cons \pool } = \state$.
\begin{enumerate}
\item \emph{Approximant}: it follows from the \ih and the linear pruning of approximants lemma (\reflemma{linear-pruning-approximants}).
\item \emph{Names}: it follows from the \ih and Point 1: approximants have no holes in cut values, so that removing the cut on $\mvar$ from the approximant $\namctxp{\cuta{\la\vartwo\lctxp\valtwo}\mvar\namctxtwo}\mute$ does not change the set of names.
\item \emph{Binders}: it follows from the \ih
\item \emph{Contextual decoding}: it goes as in the $\tomachaxmtwo$ case. By definition, we have:
\begin{center}
$\begin{array}{lllll}
\namctx^\state_{\tmtwo_1,\ldots,\tmtwo_{i-1}|\name_i|\tmtwo_{i+1},\ldots,\tmtwo_k}
& = & (\namctxp\namctxtwo\mute)\ctxholep{\pool^\state_{\tmtwo_1,\ldots,\tmtwo_{i-1}|\name_i|\tmtwo_{i+1},\ldots,\tmtwo_k}}
\\
\mbox{and}
\\
\namctx^{\statetwo}_{\tmtwo_1,\ldots,\tmtwo_{i-1}|\name_i|\tmtwo_{i+1},\ldots,\tmtwo_k}
& = & (\namctxp{\cuta{\la\vartwo\lctxp\valtwo}\mvar\namctxtwo}\mute)\ctxholep{\pool^\state_{\tmtwo_1,\ldots,\tmtwo_{i-1}|\name_i|\tmtwo_{i+1},\ldots,\tmtwo_k}}
\end{array}$
\end{center}
By \ih, $\namctx^{\statetwo}_{\tmtwo_1,\ldots,\tmtwo_{i-1}|\name_i|\tmtwo_{i+1},\ldots,\tmtwo_k}$ is a good context. By linear pruning of good contexts (\reflemma{linear-pruning-good-ctxs}), $\namctx^\state_{\tmtwo_1,\ldots,\tmtwo_{i-1}|\name_i|\tmtwo_{i+1},\ldots,\tmtwo_k}$ is a good context.
\end{enumerate}

\item $\statetwo = \twostatetrans { \namctxp{\cuta\evartwo\evar\namctxtwo}\mute } { \nctxholep{\dera\evar\var\tm}\name \cons \pool }
\tomachaxetwo
{ \namctxp{\cuta\evartwo\evar\namctxtwo}\mute } { \nctxholep{\dera\evartwo\var\tm}\name \cons \pool } = \state$.
\begin{enumerate}
\item \emph{Approximant}: it follows from the \ih
\item \emph{Names}: it follows from the \ih
\item \emph{Binders}: it follows from the \ih
\item \emph{Contextual decoding}: it follows from the \ih
\end{enumerate}

\item $\statetwo = \twostatetrans { \namctxp{\cuta{\bang\lctxp\val}\evar\namctxtwo}\mute } { \nctxholep{\dera\evar\var\tm}\name \cons \pool }
\tomachbang
{ \namctxp{\cuta{\bang\lctxp\val}\evar\namctxtwo}\mute } { \nctxholep{\rename\lctx\ctxholep{\cuta{\rename\val}\var\tm}}\name \cons \pool } = \state$.
\begin{enumerate}
\item \emph{Approximant}: it follows from the \ih
\item \emph{Names}: it follows from the \ih
\item \emph{Binders}: it follows from the \ih and the use of the renaming operation $\rename\cdot$ in the target state.
\item \emph{Contextual decoding}: it follows from the \ih
\end{enumerate}

\item $\statetwo = \twostatetrans { \namctxp{\cuta\mval\mvar\namctxtwo}\mute } { \nctxholep{\mvar}\name \cons \pool }
\tomachaxmone
{ \namctxp\namctxtwo\mute } { \nctxholep{\mval}\name \cons \pool } = \state$.
\begin{enumerate}
\item \emph{Approximant}: it follows from the \ih and the linear pruning of approximants lemma (\reflemma{linear-pruning-approximants}).
\item \emph{Names}: it follows from the \ih and Point 1: approximants have no holes in cut values, so that removing the cut on $\mvar$ from the approximant $\namctxp{\cuta\mval\mvar\namctxtwo}\mute$ does not change the set of names.
\item \emph{Binders}: it follows from the \ih
\item \emph{Contextual decoding}: it goes as in the $\tomachaxmtwo$ case. By definition, we have:
\begin{center}
$\begin{array}{lllll}
\namctx^\state_{\tmtwo_1,\ldots,\tmtwo_{i-1}|\name_i|\tmtwo_{i+1},\ldots,\tmtwo_k}
& = & (\namctxp\namctxtwo\mute)\ctxholep{\pool^\state_{\tmtwo_1,\ldots,\tmtwo_{i-1}|\name_i|\tmtwo_{i+1},\ldots,\tmtwo_k}}
\\
\mbox{and}
\\
\namctx^{\statetwo}_{\tmtwo_1,\ldots,\tmtwo_{i-1}|\name_i|\tmtwo_{i+1},\ldots,\tmtwo_k}
& = & (\namctxp{\cuta\mval\mvar\namctxtwo}\mute)\ctxholep{\pool^\state_{\tmtwo_1,\ldots,\tmtwo_{i-1}|\name_i|\tmtwo_{i+1},\ldots,\tmtwo_k}}
\end{array}$
\end{center}
By \ih, $\namctx^{\statetwo}_{\tmtwo_1,\ldots,\tmtwo_{i-1}|\name_i|\tmtwo_{i+1},\ldots,\tmtwo_k}$ is a good context. By linear pruning of good contexts (\reflemma{linear-pruning-good-ctxs}), $\namctx^\state_{\tmtwo_1,\ldots,\tmtwo_{i-1}|\name_i|\tmtwo_{i+1},\ldots,\tmtwo_k}$ is a good context.
\end{enumerate}

\item $\statetwo = \twostatetrans { \namctxp{\cuta\exval\evar\namctxtwo}\mute } { \nctxholep{\evar}\name \cons \pool }
\tomachaxeone
{ \namctxp{\cuta\exval\evar\namctxtwo}\mute } { \nctxholep{\rename\exval}\name \cons \pool } = \state$.
\begin{enumerate}
\item \emph{Approximant}: it follows from the \ih
\item \emph{Names}: it follows from the \ih
\item \emph{Binders}: it follows from the \ih and the use of the renaming operation $\rename\cdot$ in the target state.
\item \emph{Contextual decoding}: it follows from the \ih\qedhere
\end{enumerate}

\end{itemize}
\end{proof}

\gettoappendix{prop:sesam-properties}

\begin{proof}
\applabel{prop:sesam-properties}\hfill
\begin{enumerate}
\item \emph{Search transparency}. Cases:
\begin{itemize}
\item $\state = \twostatetrans \namctx { \nctxholep{\cuta\val\var \tm}\name \cons \pool } 
\tomachseaone
{ \namctx  \nctxholep{\cuta\val\var \nctxhole\name}\name } { \nctxholep{\tm}\name \cons \pool }=\statetwo$. Then:
\[\begin{array}{c@{\hspace{.4cm}} c@{\hspace{.4cm}} c@{\hspace{.4cm}} c@{\hspace{.4cm}} c@{\hspace{.4cm}} c@{\hspace{.4cm}} c@{\hspace{.4cm}} cccc}
\decode\state & = & (\namctx  \nctxholep{\cuta\val\var \nctxhole\name}\name) \ctxholep\pool\namctxholep\tm\name
& = & \namctx  \nctxholep{\cuta\val\var \tm}\name \ctxholep\pool
& = & \decode\statetwo
\end{array}\]

\item $\state = \twostatetrans \namctx { \nctxholep{\suba\mvar\val\var \tm}\name \cons \pool }
\tomachseatwo
{ \namctx\nctxholep{\suba\mvar{\nctxhole\nametwo}\var \nctxhole\name}\name } { \nctxholep{\val}\nametwo \cons \nctxholep{\tm}\name  \cons\pool } =\statetwo$ if $\mvar\notin\domp\namctx$ and $\nametwo$ is fresh. Then:
\[\begin{array}{c@{\hspace{.4cm}} c@{\hspace{.4cm}} c@{\hspace{.4cm}} c@{\hspace{.4cm}} c@{\hspace{.4cm}} c@{\hspace{.4cm}} c@{\hspace{.4cm}} cccc}
\decode\state & = & (\namctx\nctxholep{\suba\mvar{\nctxhole\nametwo}\var \nctxhole\name}\name) \ctxholep\pool\nctxholep{\val}\nametwo\namctxholep\tm\name
& = & \namctx  \nctxholep{\suba\mvar\val\var \tm}\name \ctxholep\pool
& = & \decode\statetwo
\end{array}\]

\item $\state = \twostatetrans \namctx { \nctxholep{\dera\evar\var\tm}\name \cons \pool }
\tomachseathree
 { \namctx   \nctxholep{\dera\evar\var \nctxhole\name}\name } { \nctxholep{\tm}\name \cons \pool }
 =\statetwo$  if $\evar\notin\domp\namctx$. Then:
\[\begin{array}{c@{\hspace{.4cm}} c@{\hspace{.4cm}} c@{\hspace{.4cm}} c@{\hspace{.4cm}} c@{\hspace{.4cm}} c@{\hspace{.4cm}} c@{\hspace{.4cm}} cccc}
\decode\state & = & (\namctx  \nctxholep{\dera\evar\var \nctxhole\name}\name) \ctxholep\pool\namctxholep\tm\name
& = & \namctx  \nctxholep{\dera\evar\var\tm}\name \ctxholep\pool
& = & \decode\statetwo
\end{array}\]

\item $\state = \twostatetrans \namctx { \nctxholep{\la\var\tm}\name \cons \pool  }
\tomachseafour
{ \namctx  \nctxholep{ \la\var\nctxhole\name}\name } { \nctxholep{\tm}\name \cons \pool }
 =\statetwo$. Then:
\[\begin{array}{c@{\hspace{.4cm}} c@{\hspace{.4cm}} c@{\hspace{.4cm}} c@{\hspace{.4cm}} c@{\hspace{.4cm}} c@{\hspace{.4cm}} c@{\hspace{.4cm}} cccc}
\decode\state & = & (\namctx  \nctxholep{\la\var \nctxhole\name}\name) \ctxholep\pool\namctxholep\tm\name
& = & \namctx  \nctxholep{\la\var\tm}\name \ctxholep\pool
& = & \decode\statetwo
\end{array}\]

\item $\state = \twostatetrans \namctx { \nctxholep{\bang\tm}\name \cons \pool  }
\tomachseafive
{ \namctx  \nctxholep{ \bang\nctxhole\name}\name } { \nctxholep{\tm}\name \cons \pool }
 =\statetwo$. Then:
\[\begin{array}{c@{\hspace{.4cm}} c@{\hspace{.4cm}} c@{\hspace{.4cm}} c@{\hspace{.4cm}} c@{\hspace{.4cm}} c@{\hspace{.4cm}} c@{\hspace{.4cm}} cccc}
\decode\state & = & (\namctx  \nctxholep{\bang \nctxhole\name}\name) \ctxholep\pool\namctxholep\tm\name
& = & \namctx  \nctxholep{\bang\tm}\name \ctxholep\pool
& = & \decode\statetwo
\end{array}\]

\item $\state = \twostatetrans \namctx { \nctxholep{\var}\name  \cons \pool }
\tomachseasix
{ \namctx  \nctxholep\var\name }\pool
 =\statetwo$  if $\var\notin\domp\namctx$. Then:
\[\begin{array}{c@{\hspace{.4cm}} c@{\hspace{.4cm}} c@{\hspace{.4cm}} c@{\hspace{.4cm}} c@{\hspace{.4cm}} c@{\hspace{.4cm}} c@{\hspace{.4cm}} cccc}
\decode\state & = & (\namctx  \nctxholep\var\name) \ctxholep\pool
& = & \decode\statetwo
\end{array}\]
 \end{itemize}
 
\item \emph{Principal projection}. The six principal cases all go in the same way. We show one case.
\begin{itemize}
\item $\state = \twostatetrans { \namctxp{\cuta\mvartwo\mvar\namctxtwo}\mute } { \nctxholep{\suba\mvar\val\var\tm}\name \cons \pool }
\tomachaxmtwo
{ \namctxp\namctxtwo\mute } { \nctxholep{\suba\mvartwo\val\var\tm}\name \cons \pool } = \statetwo$. The read back of $\state$ is:
\[\decode\state = \namctxp{\cuta\mvartwo\mvar\namctxtwo}\mute \ctxholep\pool\nctxholep{\suba\mvar\val\var\tm}\name\]
By the well-bound invariant (\reflemma{sesam-invariants}.3), there is only one binder for $\mvar$ in $\state$---and so in $\decode\state$, and $\name\in\names\namctxtwo$, so that the cut $\cuta\mvartwo\mvar$ can indeed act on $\suba\mvar\val\var\tm$ in $\decode\state$. 
Therefore, a $\toaxmtwo$ step can be applied to $\decode\state$:
\[\begin{array}{lllllll}
\decode\state 
&=& \namctxp{\cuta\mvartwo\mvar\namctxtwo}\mute \ctxholep\pool\nctxholep{\suba\mvar\val\var\tm}\name 
&\toaxmtwo& \namctxp\namctxtwo\mute \ctxholep\pool\nctxholep{\suba\mvartwo\val\var\tm}\name 
&=& \decode\statetwo\end{array}\]
By the contextual decoding invariant (\reflemma{sesam-invariants}.4), the position: 
\[\namctxp{\cuta\mvartwo\mvar\namctxtwo}\mute \ctxholep\pool\nctxholep{\ctxhole}\name\] 
Of that step is a good context, so that it is a $\togp\axmtwo$ step, as required.
\end{itemize}

\item \emph{Search termination}. Straightforward, since the search transitions move constructors from the pool to the multi-context, thus decreasing the number of constructors in the pool.

\item \emph{Halt}.  Let $\state = \twostate\namctx\pool$ be a final state. Note that if the pool is non-empty, that is, $\pool = \namctxholep\tm\name\cons\pooltwo$ then a transition of the SESAME applies, no matter what is the topmost constructor of $\tm$, as the transitions cover all cases. Then the SESAME stops only when the pool is empty, so that $\state = \twostate\namctx\emptylist$.

Now, by the approximant invariant (\reflemma{sesam-invariants}.1), $\namctx$ is an approximant, and by the names invariant (\reflemma{sesam-invariants}.2) the names of $\namctx$ and the names of $\pool=\emptylist$ coincide, that is, there are no named holes in $\namctx$. Then $\namctx$ is a pre-term $\tm$. Since $\namctx$ is an approximant, all its out cuts are garbage. Then $\decode\state=\tm$ is cut-free up to garbage.
\qedhere
\end{enumerate}
\end{proof}

\gettoappendix{l:subterm-invariant}
\begin{proof}
\applabel{l:subterm-invariant}
\hfill
\begin{enumerate}
\item 
The proof is by induction on the length $\size{\run}$ of the run $\run$. If $\size{\run}=0$ the statement trivially holds. If $\size{\run}>0$ then one looks at the last transition $\statetwo \tosesame \twostate\namctx\pool$. The only delicate transition,  which is also the reason why the invariant is stated for all values of $\pool$ (and not just cut ones) is $\tomachlolli$, where the value $\val$ in the subtraction (which is not a cut value) becomes a cut value:
\[
\small
\begin{array}{r|r||c||r|r}
\multicolumn{1}{c|}{\textsc{MCtx}} & \multicolumn{1}{c||}{\textsc{Pool}} &\textsc{Trans.}& \multicolumn{1}{|c}{\textsc{MCtx}} & \multicolumn{1}{|c}{\textsc{Pool}}
\\\hline\hline

\namctxtwop{\cuta{\la\vartwo\lctxp\valtwo}\mvar\namctxthree}\mute & \nctxholep{\suba\mvar\val\var \tm}\name \cons \pooltwo    
&\tomachlolli&
\namctxtwop\namctxthree\mute &\nctxholep{\cuta\val\vartwo\lctxp{\cuta\valtwo\var\tm}}\name \cons \pooltwo
\end{array}
\]
With $\namctx = \namctxtwop\namctxthree\mute$ and $\pool = \nctxholep{\cuta\val\vartwo\lctxp{\cuta\valtwo\var\tm}}\name \cons \pooltwo$. The transitions of SESAME move around and/or rename cut values of $\namctx$ and values of $\pool$, but they never increase their size. Thus the invariant for $\twostate\namctx\pool$ follows from the \ih on $\statetwo$.
\item It follows from the first point, since all values that are duplicated along $\run$ are cut values from the multi-context.\qedhere
\end{enumerate}
\end{proof}

\gettoappendix{l:search-is-bilinear}
\begin{proof}
\applabel{l:search-is-bilinear}Note that search transitions strictly decrease the size of the terms in the pool, which is increased only by principal transitions. Note also that principal transitions can increase it only of the size of a cut value, which by the sub-term invariant is bounded by $\size\tm$. Then $\sizesea\run\leq \size\tm\cdot(\sizepr\run+1)$, where the $+1$ accounts for the size of the initial term, for which there can be search transition even without any preceding principal transition.\qedhere
\end{proof}

\gettoappendix{l:cost-single-trans}
\begin{proof}
\applabel{l:cost-single-trans}
Search transitions can be easily implemented in $\bigo(1)$ by only re-arranging pointers to sub-terms. For principal transitions, the sub-term invariant (\reflemma{subterm-invariant}) guarantees that the size of the cut values is bound by $\size\tm$, so that their manipulations for implementing the transitions can be implemented in $\bigo(\size\tm)$. The only tricky operation is the $\alpha$-renaming in transitions $\tobang$ and $\toaxeone$, which amounts to copy sub-terms and can be done adapting the linear-time copy algorithm in Accattoli and Barras \cite{DBLP:conf/ppdp/AccattoliB17}.\qedhere
\end{proof}

\gettoappendix{thm:sesame-overhead-bound}
\begin{proof}
  The cost of implementing $\run$ is the sum of the costs of implementing its principal and search transitions on RAM:\applabel{thm:sesame-overhead-bound}
  \begin{itemize}
    \item \emph{Principal transitions}: each one costs $\bigo(\size\tm)$ and so all together they cost $\bigo(\size\tm\cdot\sizepr{\run})$.
    \item \emph{Search transitions}: each one costs $\bigo(1)$ and they are $\size\tm\cdot(\sizepr\run+1)$ of them by \reflemma{search-is-bilinear}. Therefore, all together they cost $\bigo\left(\size\tm\cdot(\sizepr\run+1)\right)$.
  \end{itemize}
   Therefore, the cost of implementing $\run$ is $\bigo(\size\tm\cdot\sizepr{\run})+ \bigo\left(\size\tm\cdot(\sizepr\run+1)\right)=\bigo\left(\size\tm\cdot(\sizepr\run+1)\right)$.\qedhere
\end{proof}
\section{Proofs Omitted from \refsect{gc-and-full} (Final Garbage Collection and Full Cut Elimination)}

\gettoappendix{l:good-w-redex-exists}
\begin{proof}
\applabel{l:good-w-redex-exists}
If $\tm$ is not cut free then it has at leas an out cut. Let $\tm=\ctxp{\cuta\val\var\tmtwo}$ be an innermost out cut of $\tm$, that is, such that in $\tmtwo$ there are no other out cuts of $\tm$. By definition of cut-freeness up to garbage, $\var\notin\ov\tmtwo$. Note that $\tmtwo$, having no out cuts, is cut-free. Then $\var\notin\fv\tmtwo$ and $\cuta\val\var\tmtwo$ is a $\tow$-redex in $\tm$. We need to show that $\ctx$ is good. By contradiction, suppose that $\ctx$ is bad. Then there are two cases:
\begin{itemize}
\item \emph{The hole of $\ctx$ is in the left sub-term of a cut}, that is, $\ctx=\ctxtwo{\cuta\vctx\vartwo\tmthree}$. Then $\cuta\val\var$ is not an out cut of $\tm$. Absurd.
\item \emph{The hole of $\ctx$ is in a dominated sub-term}, that is, $\ctx=\ctxtwop{\cuta\valtwo\vartwo\ctxthree}$ with $\vartwo\in\dfv\ctxthree$. Note that the fact that $\cuta\val\var$ is an out cut forces $\cuta\valtwo\vartwo$ to be an out cut as well. Moreover, if $\vartwo\in\dfv\ctxthree$ then $\ctxthree = \ctx_1\ctxholep{\dera\vartwo\varthree\ctx_2}$ or $\ctxthree = \ctx_1\ctxholep{\suba\vartwo\valthree\varthree\ctx_2}$ for some $\ctx_1$ and $\ctx_2$. Note, in particular, that $\dera\vartwo\varthree$ and $\suba\vartwo\valthree\varthree$ are out of all cuts, being between two out cuts. Then there is an occurrence of $\vartwo$ out of all cuts for $\cuta\valtwo\vartwo$. This contradicts cut-freeness up to garbage of $\tm$, for which the variables of out cuts have no out occurrences. Absurd.\qedhere
\end{itemize}
\end{proof}

\gettoappendix{prop:gc-properties}
\begin{proof}
\applabel{prop:gc-properties}\hfill
\begin{enumerate}
\item By induction on $k$. If $k=0$ then the statement holds with $\tmtwo\defeq \tm$. Let $k>0$. By \reflemma{good-w-redex-exists}, there is a good $\wsym$-redex in $\tm$. Then $\tm\togp\wsym\tmthree$. By \ih $\tmthree \togp\wsym^{k-1} \tmtwo$ with $\tmtwo$ cut-free. Then $\tm \togp\wsym^k \tmtwo$.

\item By \refthm{sesame-overhead-bound}, the run $\run$ has  cost $\bigo\left(\size\tm\cdot(\sizepr\run+1)\right)$. Then the size of the final state $\twostate\tmtwo\emptylist$, and so of $\tmtwo$, is bound by $\bigo\left(\size\tm\cdot(\sizepr\run+1)\right)$. Since $\gcdec\tmtwo$ amounts to simply remove all out cut in a traversal of the term, this can be done in $\bigo(\size\tmtwo)$, given our hypothesis on the representation of terms (discussed just before \reflemma{cost-single-trans}, page \pageref{l:cost-single-trans}). That is, the cost is $\bigo(\size\tmtwo)=\bigo\left(\size\tm\cdot(\sizepr\run+1)\right)$.\qedhere
\end{enumerate}
\end{proof}



\section{Overview of the Data Structures Used in the OCaml Implementation}
\label{app:ocaml}

\subparagraph{Data Structures.}
Terms are encoded at runtime as term graphs, which are Direct Acyclic Graphs (DAG)
augmented with a few back edges. Nodes in the graph are given by constructors
of Algebraic Data Types whose arguments are declared as mutable to imperatively
change the graph during reduction. All type declarations are given in the
\verb+termGraphs.ml+ file.

We start our description of the various shapes of nodes from the \verb+value+
datatype that is used for values:

\begin{lstlisting}
type value =
  | Var of var
  | Abs of { var : var; mutable bo : term }
  | Bang of { mutable bo : term }
\end{lstlisting}

A value is either a variable occurrence node, that points to the unique node in memory
that represents a variable, an abstraction node, that points to the node for
the bound variable and to the root node of the body of the abstraction,
or a promotion node that points to the root node of the promoted term.

\begin{lstlisting}
and var = {
  mutable is : bvar option;
  kind : me;
  name : int;
  mutable copy : var;
}
\end{lstlisting}

Variables nodes are the only kind of nodes that are shared in a term graph.
The \verb+kind+ field holds either the value \verb+Mul+ for multiplicative
variables or the value \verb+Exp+ for exponential variables. The \verb+name+ field,
that plays no role during reduction, is an unique index to differentiate variables
during pretty-printing: a variable of name $i$ is printed either as $\mvar_i$ or $\evar_i$
according to its kind. The parser only accepts terms that follow the Barendregt's
convention and that are well bounded in order to have a bijection between
variables and \verb+kind-name+ pairs.

The \verb+copy+ field is necessary to implement copy ($\alpha$-renaming) in linear
time. During copy, the first time a bound variable is visited a new variable node is 
allocated and \verb+copy+ is set to it. Later, when occurrences of the variable
to be copied are found, the \verb+copy+ field is used to retrieve the new variable
to use, preserving the sharing. Variables that have not been copied have the
\verb+copy+ field initialized to point to themselves recursively, to avoid introducing
an \verb+option+ type.

A variable can be globally free or bound by an abstraction, in which case the
\verb+is+ field is \verb+None+, or it can be bound in a cut, a dereliction or a
subtraction. The three latter cases are represented with a \verb+Some+ value for
\verb+is+ holding the data for the binder. Storing the cut value associated to the
variable in the \verb+is+ field allows to look in $O(1)$ for cuts in a context,
a prerequisite of most SESAME rules.

In place of storing the cut value inside the variable, an alternative implementation
could have made the variable point back to the binder. Handling back-pointers in OCaml
always require some gymnicks, like creating nodes with dummy values and imperatively
updating them later to create the cycles in the graph. Our design choice, instead,
allows to keep the DAG structure, that is handled very easily.

The SESAME machine has no need to access derelictions and subtractions given their
bound variable. However, we have decided to handle cuts, derelictions and subtractions
in the same way, since they can occur in the term structure in the same position.
This leads to some simplification in the code. It is, however, a quite inessential
design decision.

\begin{lstlisting}
and bvar =
  (* ptr must point to the surrounding Bind;
     it can be None in initial terms *)
  | Cut of { v : value; mutable ptr : ptr option }
  | MElim of { mutable var : var; mutable v : value }
  | EElim of { mutable var : var }
\end{lstlisting}

The \verb+bvar+ describes the binder associated to a variable, excluding the case
of abstraction. \verb+MElim+ and \verb+EElim+ are for subtractions and derelictions
and they point to the inspected variables and, in the case of subtractions, to the
node that is the root of the value.

Cut nodes point to the root of the cut value
\verb+v+, but they also have an optional back-pointer \verb+ptr+ that points to the
node that is the parent of the cut. For example, the node for the term
$\dera\evar\evartwo \cuta\val\var \tm$ points both to the node for the variable $\evartwo$ and to the node for the suberm $\cuta\val\var \tm$. The \verb+ptr+ pointer of the \verb+Cut+ node associated
to $\var$ points back to node for $\dera\evar\evartwo \cuta\val\var \tm$. This is necessary to remove in $O(1)$
the cut from the context in the multiplicative principal SESAME rules: the
the right child of the pointed node is mutated from a pointer to the root of
$\cuta\val\var \tm$ to a pointer of the root of $\tm$.

To simplify the implementation of the parser, cuts in the initial terms may have
the \verb+ptr+ field set to \verb+None+. The SESAME will set it correctly when the
cut is processed the first time, entering the approximant.

\begin{lstlisting}
and term =
  | Val of value
  | Bind of { var : var; mutable t : term }
            (* var must have is != None *)
\end{lstlisting}

A term node is either a value node or a \verb+Bind+ node that points to the
bound variable \verb+var+ and the continuation term \verb+t+. As already explained,
the \verb+is+ field of the variable differentiates between cuts, derelictions and
subtraction, avoiding back-pointers.

When a value is used as a term, an additional indirection \verb+Val+ is introduced.
In a more optimized implementation this indirection can be completely avoided by
merging together the \verb+value+ and \verb+term+ datatypes. The price to pay is
the introduction of more runtime invariants, i.e. places in the code where an OCaml
term of type \verb+term+ is actually expected to contain a value only. To mitigate
the extra checks and dead code branches, one could employ Algebraic Data Types to
parameterize the \verb+term+ datatype with additional information about the nature
of the term (value, variable, cut, etc.). Due to the lack of subtyping in OCaml, this
leads in turn to some gymnicks to typecheck the code (e.g. introduce formers for
existential types) and thus we avoided this solution to maintain the code simple
for non experts of the language.

\begin{lstlisting}
and topterm = term ref
\end{lstlisting}

A \verb+topterm+ is just a reference to the root of a term graph and it represents
the whole term.

In OCaml it is not possible to take the address of a record/constructor field to
mutate it. One needs to have a reference to the whole record/constructor instead.
The \verb+ptr+ datatype, that we have already encountered when describing the
back-pointers inside cuts, represents a pointer inside the term graph.
It lists all the constructors (i.e. labelled records) that have fields that are
sub-terms different from variables, i.e. all the positions where an hole
can occur in an approximant.

\begin{lstlisting}
and ptr =
  | Initial of topterm (** points to the [topterm] *)
  | InsideBang of value (** [value] must be a [Bang] *)
  | InsideAbs of value (** [value] must be an [Abs] *)
  | InsideBind of term (** [value] must be a [Bind] *)
  | InsideMElim of bvar (** [bvar] must be an [MElim] *)
\end{lstlisting}

For example, a pointer to the term $\tm$ inside $\dera\evar\evartwo \tm$ is encoded as
\verb+InsideBind tm+ where \verb+tm+ is the node for $\dera\evar\evartwo \tm$.
If we picked a different programming language like C, not subject to mark\&swepp
garbage collection, it could have been possible to just take directly pointers to
pointers to terms in memory in place of using the \verb+ptr+ type.
In the example above, we could have just used a pointer to the pointer to $\tm$ inside
the record for $\dera\evar\evartwo \tm$.

\begin{lstlisting}
type pool = ptr list
\end{lstlisting}

Named holes and jobs are simply represented as pointers inside a term graph, and
the pool of jobs as a list of \verb+ptr+s.
A pointer in the pool is the position of the corresponding named hole in the
approximant, the pointed sub-term is the term to be reduced by the job and the
approximant is just the part of term outside any sub-terms pointed by the pointers
in the pool.

Thanks to this representation, reduction is implemented as local term rewriting:
at each step the unique term graph in memory is locally modified starting from the
topmost pointer in the pool, until the pool becomes empty and the term graph is the
strong normal form.

\section{The Example of SESAME Run, Executed by the OCaml Implementation}
\label{app:example}

The OCaml implementation starts by running some tests. One of them is the example of SESAME run given at the end of \refsect{sesame}, whose output is given below. The names of the jobs in the pool are represented with numbers, that is, the hole name $\name$ is here represented with the index~$1$. Moreover, the pool and the multi-context are represented as a single term where the job $\nctxholep{\tm}1$ is directly plugged in the multi-context at hole $\nctxholep{\cdot}1$, according to the fact that in the implementation we only keep the term graph together with a pool made of pointers into it.
\begin{lstlisting}[mathescape]
Input: [!\m1m1-e1][e1?m2][e1?m3][m2>m3,m4]m4
Parsed: [!$\lambda$m$_1$m$_1$$\to$e$_1$][e$_1$?m$_2$][e$_1$?m$_3$][m$_2$$\triangleright$m$_3$,m$_4$]m$_4$
The term is proper.
       <[!$\lambda$m$_1$m$_1$$\to$e$_1$][e$_1$?m$_2$][e$_1$?m$_3$][m$_2$$\triangleright$m$_3$,m$_4$]m$_4$>$_1$
->sea$_1$ [!$\lambda$m$_1$m$_1$$\to$e$_1$]<[e$_1$?m$_2$][e$_1$?m$_3$][m$_2$$\triangleright$m$_3$,m$_4$]m$_4$>$_1$
->!    [!$\lambda$m$_1$m$_1$$\to$e$_1$]
        <[$\lambda$m$_5$m$_5$$\to$m$_2$][e$_1$?m$_3$][m$_2$$\triangleright$m$_3$,m$_4$]m$_4$>$_1$
->sea$_1$ [!$\lambda$m$_1$m$_1$$\to$e$_1$]
        [$\lambda$m$_5$m$_5$$\to$m$_2$]<[e$_1$?m$_3$][m$_2$$\triangleright$m$_3$,m$_4$]m$_4$>$_1$
->!    [!$\lambda$m$_1$m$_1$$\to$e$_1$]
        [$\lambda$m$_5$m$_5$$\to$m$_2$]<[$\lambda$m$_6$m$_6$$\to$m$_3$][m$_2$$\triangleright$m$_3$,m$_4$]m$_4$>$_1$
->sea$_1$ [!$\lambda$m$_1$m$_1$$\to$e$_1$]
        [$\lambda$m$_5$m$_5$$\to$m$_2$][$\lambda$m$_6$m$_6$$\to$m$_3$]<[m$_2$$\triangleright$m$_3$,m$_4$]m$_4$>$_1$
->-o   [!$\lambda$m$_1$m$_1$$\to$e$_1$]
        [$\lambda$m$_6$m$_6$$\to$m$_3$]<[m$_3$$\to$m$_5$][m$_5$$\to$m$_4$]m$_4$>$_1$
->sea$_1$ [!$\lambda$m$_1$m$_1$$\to$e$_1$]
        [$\lambda$m$_6$m$_6$$\to$m$_3$][m$_3$$\to$m$_5$]<[m$_5$$\to$m$_4$]m$_4$>$_1$
->sea$_1$ [!$\lambda$m$_1$m$_1$$\to$e$_1$]
        [$\lambda$m$_6$m$_6$$\to$m$_3$][m$_3$$\to$m$_5$][m$_5$$\to$m$_4$]<m$_4$>$_1$
->axm$_1$ [!$\lambda$m$_1$m$_1$$\to$e$_1$][$\lambda$m$_6$m$_6$$\to$m$_3$][m$_3$$\to$m$_5$]<m$_5$>$_1$
->axm$_1$ [!$\lambda$m$_1$m$_1$$\to$e$_1$][$\lambda$m$_6$m$_6$$\to$m$_3$]<m$_3$>$_1$
->axm$_1$ [!$\lambda$m$_1$m$_1$$\to$e$_1$]<$\lambda$m$_6$m$_6$>$_1$
->sea$_4$ [!$\lambda$m$_1$m$_1$$\to$e$_1$]$\lambda$m$_6$<m$_6$>$_1$
->sea$_6$ [!$\lambda$m$_1$m$_1$$\to$e$_1$]$\lambda$m$_6$m$_6$
->*GC  $\lambda$m$_6$m$_6$
\end{lstlisting}

}
\end{document}